\newtheorem{theorem}{Theorem}
\newtheorem{lemma}[theorem]{Lemma}
\newtheorem{corollary}[theorem]{Corollary}
\theoremstyle{definition}
\newtheorem{remark}[theorem]{Remark}
\newtheorem*{definition*}{Definition}
\let\bkbraket\braket
\renewcommand{\tr}{\Tr}
\newcommand{\br}[1]{\left(#1\right)}
\NewDocumentCommand{\stared}{m m}{\IfBooleanTF{#2}{#1*}{#1}}
\NewDocumentCommand{\es}{m o}{
	\IfNoValueTF{#2}{
		\mathbb{\uppercase{#1}}^{\lowercase{#1\/}}
	}{
		\mathbb{\uppercase{#1}}^{#2\/}
	}
}
\newcommand{\complex}{\mathbb{C}}
\newcommand{\reals}{\mathbb{R}}
\newcommand{\naturals}{\mathbb{N}}
\newcommand{\HS}{\mathcal{H}}
\NewDocumentCommand{\BO}{o}{\mathcal{B}\br{\IfValueTF{#1}{#1}{\HS}}}
\NewDocumentCommand{\DM}{o}{\mathcal{D}\br{\IfValueTF{#1}{#1}{\HS}}}
\NewDocumentCommand{\pos}{o}{\mathscr{P}\!\br{\IfValueTF{#1}{#1}{\HS}}}
\NewDocumentCommand{\supp}{m}{\textrm{supp}\br{#1}}
\newcommand{\poly}{\mathrm{poly}}
\DeclarePairedDelimiterX{\infdivx}[2]{(}{)}{
	#1\;\delimsize|\delimsize|\;#2
}
\DeclareDocumentCommand{\qrd}{o o m m}{\ensuremath{\IfValueTF{#1}{#1}{D}\IfValueTF{#2}{_{#2}}{}\/\infdivx*{#3}{#4}}}
\NewDocumentCommand{\ent}{d()g}{\ensuremath{H\br{\IfValueTF{#1}{#1}{#2}}}}
\NewDocumentCommand{\qent}{d()g}{\ensuremath{H\br{\IfValueTF{#1}{#1}{#2}}}}
\newcommand*\ric[1]{\vphantom{#1}\smash{#1_{}\kern-\scriptspace}}
\newcommand{\foralls}{\forall\,}
\renewcommand{\abs}[1]{\left\lvert#1\right\rvert}
\DeclarePairedDelimiter{\mabs}{\lvert}{\lvert}
\newcommand{\renyi}{R\'enyi\xspace}
\DeclareMathOperator*{\argmin}{\arg\min}
\DeclareMathOperator*{\argmax}{\arg\max}
\newcommand{\Sn}{\mathfrak{S}_n}
\NewDocumentCommand{\End}{o}{\mathrm{End}^{\Sn}(\IfValueTF{#1}{#1}{R^n B^n})}
\newcommand{\n}{^{\otimes n}}
\title{Parallelization of Adaptive Quantum Channel Discrimination in the Non-Asymptotic Regime}
\renewcommand{\E}{\mathcal{E}}
\newcommand{\F}{\mathcal{F}}
\newcommand{\id}{\textrm{id}}
\author{Bjarne Bergh\thanks{Bjarne Bergh, Nilanjana Datta and Robert Salzmann are with the Department of Applied Mathematics and Theoretical Physics, University of Cambridge, United Kingdom (e-mails: bb536@cam.ac.uk, n.datta@statslab.cam.ac.uk, rals.salzmann@web.de).}, Nilanjana Datta, Robert Salzmann, Mark M. Wilde\thanks{Mark M. Wilde is with the Hearne Institute for Theoretical Physics, Department of Physics and Astronomy, Center for Computation and Technology, Louisiana State University, Baton Rouge, Louisiana 70803, USA, and the School of Electrical and Computer Engineering, Cornell University, Ithaca, New York 14850, USA (e-mail: wilde@cornell.edu).}}
\begin{document}
	\maketitle
	\begin{abstract}
		We investigate the performance of parallel and adaptive quantum channel discrimination strategies for a finite number of channel uses. It has recently been shown that, in the asymmetric setting with asymptotically vanishing type~I error probability, adaptive strategies are asymptotically not more powerful than parallel ones. We extend this result to the non-asymptotic regime with finitely many channel uses, by explicitly constructing a parallel strategy for any given adaptive strategy, and bounding the difference in their performances, measured in terms of the decay rate of the type~II error probability per channel use. We further show that all parallel strategies can be optimized over in time polynomial in the number of channel uses, and hence our result can also be used to obtain a poly-time-computable asymptotically tight upper bound on the performance of general adaptive strategies.
	\end{abstract}
	\begin{IEEEkeywords}
		Quantum Information Theory, Shannon Theory, Error Exponents, Channel Discrimination, Adaptive Strategies, Parallel Strategies
	\end{IEEEkeywords}
	\section{Introduction}
	
	Binary quantum state discrimination is one of the oldest and most studied tasks in quantum information theory \cite{helstrom_quantum_1969,holevo_analogue_1972}. The task involves determining the state of a quantum system, given the side information that it is in one of two possible states. This is done by performing suitable measurements on the state. This task has been studied in both the $n$-shot regime, in which a finite number ($n$) of identical copies of the unknown state are available, and in the asymptotic limit in which one assumes that an infinite number of copies are available (i.e., $n \to \infty$). The task of binary state discrimination, which can be viewed as that of binary hypothesis testing for quantum states, is a fundamental primitive of quantum information theory since many other information processing tasks can be reduced to it. Known results for this problem now include optimal decision strategies and the behavior of the error probabilities of misidentification in both these finite and asymptotic regimes~\cite{helstrom_quantum_1969, kholevo_asymptotically_1979, hiai_proper_1991, ogawa_strong_2000, nagaoka_converse_2006, audenaert_discriminating_2007, hayashi_error_2007, audenaert_asymptotic_2008, nussbaum_chernoff_2009, audenaert_quantum_2012, bae_quantum_2015}. There are generally two possible errors that can be incurred: the system is inferred to be in the second state when it is actually in the first ({\em{type~I error}}) or vice versa ({\em{type~II error}}). There are two different settings in which the state discrimination task is usually studied: in the {\em{symmetric setting}}, the probabilities of these two errors are treated on an equal footing, while in the {\em{asymmetric setting}}, one minimizes the probability of type~II error under the constraint that the type~I error is below a given threshold.

	Similar in nature, but a lot less understood, is the task of binary quantum channel discrimination: given an unknown quantum channel as a black box and the side information that it is one of two possible channels, the task is to determine the channel's identity \cite{chiribella_memory_2008,duan_perfect_2009,hayashi_discrimination_2009,harrow_adaptive_2010}.
	The additional complexity here comes from the fact that, on top of finding the best measurement to perform on the output of the channel, we also have to figure out which quantum states to send as input to the channel. Say we are given access to $n$ copies of the channel (i.e., we are given $n$ identical black boxes, each of which can be used once); then there are different strategies (sometimes also called protocols) in which we could set up our decision experiment -- the so-called {\em{parallel}} and {\em{adaptive}} strategies.
	In a parallel strategy one prepares a joint state, usually entangled between the input systems of all the $n$ copies of the channel and an additional reference (or memory) system. This state is then fed as input to all the $n$ channels at once (with the state of the reference system being left undisturbed). Finally, a binary positive operator-valued measure (POVM) is performed on the joint state at the output of the channels and the reference system in order to arrive at a decision for the channel's identity.
	In an adaptive strategy, on the other hand, one prepares a state of the input system of a single copy of the channel (again usually entangled with a reference system) which is fed into the first copy of the channel, with the state of the reference system being left undisturbed. The input to the next use of the channel is then chosen depending on the output of the first channel and the state of our reference system. This is done, most generally, by subjecting the latter to an arbitrary quantum operation (or channel), which we call a preparation operation. This step is repeated for each successive use of the channel until all the $n$ black boxes have been used.
	Then a binary POVM is performed on the joint state of the output of the last use of the channel and the reference system. See \autoref{fig:adaptive_strategy} for a depiction of an adaptive strategy. Adaptive strategies are also sometimes called sequential, which  however should not be confused with the setting of sequential hypothesis testing \cite{martinez_vargas_quantum_2021, li_optimal_2022, li_sequential_2022}, where samples (i.e., states or channels) can be requested one by one. 
	
	One particularly interesting question is whether and to what degree adaptive strategies give an advantage over parallel ones. Note that any parallel strategy can be written as an adaptive strategy by taking all but one channel input as part of the reference system, and then choosing each preparation operation such that it extracts the next part of the joint input state for the next channel use and replaces it by the output of the previous channel use. However, the converse is not true, and so adaptive strategies are more general. Parallel strategies are conceptually a lot simpler than adaptive ones -- aside from the measurement, everything is specified just by the joint input state -- in contrast to adaptive strategies, in which after each channel use we can perform an arbitrary quantum operation to prepare the input to the next use of the channel. It is thus interesting to determine to what degree parallel strategies can still be optimal. It is known that in certain cases adaptive strategies can give an advantage over parallel ones. In~\cite{harrow_adaptive_2010} the authors constructed an example in which an adaptive strategy with only two channel uses could be used to discriminate the channels with certainty, which is not possible with a parallel strategy, even if arbitrarily many channel uses are allowed.
	
	Interestingly, \emph{asymptotically}, there are multiple known cases in which adaptive strategies give no advantage over parallel ones, i.e., the optimal exponential decay rate of the error probability per channel use is the same in the asymptotic limit. For example, this is the case both in the symmetric and asymmetric settings when the channels are classical \cite{hayashi_discrimination_2009} or classical-quantum \cite{wilde_amortized_2020,salek_usefulness_2022}. For arbitrary quantum channels, the recently shown chain rule for the quantum relative entropy \cite{fang_chain_2020} and the characterization of asymmetric channel discrimination in terms of amortized relative entropy \cite{wilde_amortized_2020,wang_resource_2019}, also imply that in the asymmetric setting, in the asymptotic limit where we also require the type~I error to vanish, adaptive strategies give no advantage over parallel ones (i.e., the optimal asymptotic exponential decay rate of the type~II error per channel use is the same for parallel and adaptive strategies; see \Cref{sec:previous_things} below for more details). This is in contrast to the symmetric setting in which the example of \cite{salek_usefulness_2022} shows that there always is an advantage of adaptive strategies, also in the asymptotic limit. 
	
	In this paper, our aim is to move from these purely asymptotic results to statements comparing adaptive and parallel strategies for finite $n$.  In our main result (\autoref{cor:seq_par} and \autoref{thm:seq_par}), we relate the type~I and type~II errors of an arbitrary adaptive strategy with those of a suitably chosen parallel strategy. Specifically, given an adaptive strategy with $n$ channel uses, we construct a parallel strategy with $m$ channel uses (where $m$ can be chosen at will), such that for arbitrary fixed type~I errors of the two strategies, we find an explicit bound on the difference between their type~II error decay rates. 
	This difference goes to zero in a suitably chosen asymptotic limit $m, n \to \infty$ if also the type~I error vanishes, hence also implying the known asymptotic equivalence in the asymmetric case. Our result answers the following interesting question which is of practical relevance: {\em{Given an adaptive strategy involving $n$~uses of the channel, if one instead employs a parallel strategy with $m$ channel uses, how much worse are the errors going to be?}}
	
	Note that the asymptotic results obtained in~\cite{wilde_amortized_2020, wang_resource_2019} do not purely come from finite-length considerations, and hence results analogous to ours are not directly obtainable from these references. 
	
	Our main result becomes particularly interesting considering that we additionally show that one can optimize over all parallel strategies in time polynomial in the number of channel uses (\autoref{lem:parallel_polytime}), whereas optimizing over all adaptive strategies seems to require exponential time (see \autoref{rem:computability} below). Hence we can also use our theorem to give a bound on the following practically relevant question: {\em{Given that one computed the optimal parallel strategy involving $m$ uses of the channel, how much better could the errors of the best adaptive strategy with $n$ channel uses possibly be?}}
	
	Note that the upper bound on the finite-$n$ performance of adaptive strategies that can be obtained like this by optimizing over all parallel strategies, is to our knowledge the first upper bound on this quantity that is computable in time polynomial in the number of channel uses and also asymptotically tight. 
	
	On the way to proving our main theorem, we also prove the following two technical results: First, a slightly refined bound on the difference between the Petz--\renyi relative entropy $D_{\alpha}$ and the Umegaki relative entropy $D$ (\autoref{lem:renyi_alpha_continuity}), and secondly, a bound on the convergence speed in the asymptotic equipartition property of the smoothed max-relative entropy (\autoref{lem:aep_finite_n}), which turns out to be better suited to our application than previous results (see \autoref{remark:relation_to_tomamichel} and \autoref{remark:relation_to_second_order} for a detailed explanation of the differences).
	
	We start by introducing all of the required notation and previous results in \Cref{sec:previous_things}. Then we state and prove our main theorem (\autoref{thm:seq_par}) in \Cref{sec:main_result}. \Cref{sec:example} illustrates how adaptive and parallel strategies can differ in the finite regime with an example, and it also demonstrates an application of our result. 
	
	\section{Preliminaries, Notation, and Previous Results} \label{sec:previous_things}
	\subsection{Notation}
	
	We write $\HS$ for a complex finite-dimensional Hilbert space and $\BO$ for the set of linear operators acting on $\HS$. We write $\pos$ for the set of positive semi-definite operators acting on $\HS$. For $A, B \in \pos$ we further write $A \ll B$ if $\supp{A} \subseteq \supp{B}$  and $A \not \ll B$ if $\supp{A} \not \subseteq \supp{B}$.  Let $\DM$ denote the set of density matrices, i.e., the set of positive semi-definite operators with trace one. A quantum channel (in this paper usually denoted as $\E$ or $\F$) is a completely positive, trace-preserving map between density operators. We will label different quantum systems by capital Roman letters ($A$, $B$, $C$, etc.) and often use these letters interchangeably with the corresponding Hilbert space or set of operators or density matrices (i.e., we write $\rho \in \DM[A]$ instead of $\rho \in \DM[\HS_A]$ and $\E: A \to B$ instead of $\E: \DM[\HS_A] \to \DM[\HS_B]$).  We write $\log$ for the logarithm to the base two.
	
	\subsection{Quantum Information Measures}
	
	For $\rho \in \DM$ and $\sigma \in \pos$ the (Umegaki) quantum relative entropy is defined as \cite{umegaki_conditional_1962}
	\begin{equation}
		D(\rho\|\sigma) \coloneqq \Tr(\rho(\log \rho - \log \sigma)),
	\end{equation}
	if $\rho \ll \sigma$ and 
	$D(\rho\|\sigma) \coloneqq \infty$ if $\rho \not \ll \sigma$. One of its most important properties is the data-processing inequality \cite{lindblad_completely_1975}, which states that for every quantum channel $\E$:
	\begin{equation}
		D(\rho\|\sigma) \geq D(\E(\rho)\|\E(\sigma)) \,.
		\label{eq:data-proc-ineq}
	\end{equation}
	A self-contained proof can be found, e.g., in \cite{khatri_principles_2020}.
	More generally, we call a function of $\rho$ and $\sigma$ a divergence if it satisfies the data-processing inequality.
	
	Below we also make use of the binary entropy, which is the Shannon entropy of a classical random variable that takes two possible values. It is uniquely specified by the probability $p \in [0,1]$ of one of the values and is given by
	\begin{equation}
		h(p) \coloneqq - p \log p - (1-p) \log (1-p)\,.
	\end{equation}
	It satisfies
	\begin{equation}
		0 \leq h(p) \leq 1 \qquad \foralls p \in [0,1]\,.
	\end{equation}
	
	\subsubsection{Fidelity and Sine Distance}
	
	For two quantum states $\rho, \sigma \in \DM$, we define the fidelity as \cite{uhlmann_transition_1976} 
	\begin{equation}
		F(\rho, \sigma) \coloneqq \Tr(\sqrt{\sqrt{\sigma}\rho\sqrt{\sigma}}).
	\end{equation}
	Note that this is sometimes also called the square root fidelity due to different conventions on whether to include a square or not. 
	We define the sine distance as \cite{rastegin_relative_2002,rastegin_lower_2003,gilchrist_distance_2005,rastegin_sine_2006}
	\begin{equation}
		P(\rho, \sigma) \coloneqq \sqrt{1 - F(\rho, \sigma)^2},
	\end{equation}
	which satisfies the properties of a metric (especially the triangle inequality) and also the data-processing inequality (see, e.g., \cite{khatri_principles_2020}). This quantity is also known as the purified distance.
	
	\subsubsection{(Smoothed) Max-Divergence}
	
	For $\rho \in \DM$ and $\sigma \in \pos$, define the quantum max-divergence (or the max-relative entropy) as \cite{datta_min-_2009}
	\begin{equation}
		D_{\max}(\rho\|\sigma) \coloneqq \log \inf\Set{\lambda \in \reals | \rho \leq \lambda \sigma}\,.
	\end{equation}
	The quantum max-divergence also satisfies the data-processing inequality \cite{datta_min-_2009}.
	Let
	\begin{equation}
		B^{\circ}_{\varepsilon}(\rho) \coloneqq \Set{\tilde{\rho} \in \DM | P(\rho, \tilde{\rho}) \leq \varepsilon}
	\end{equation}
	be the $\varepsilon$-ball of \emph{normalized} states around $\rho$ in sine distance. 
	Then, we define the smoothed max-divergence as  \cite{datta_min-_2009}
	\begin{equation}
		D_{\max}^{\varepsilon}(\rho\|\sigma) \coloneqq D_{\max}^{\varepsilon, \circ}(\rho\|\sigma) = \inf_{\tilde{\rho} \in B^{\circ}_{\varepsilon}(\rho)} D_{\max}(\tilde{\rho}\|\sigma).
	\end{equation}
	Note that this is sometimes defined differently in the literature, where one allows the infimum to also include sub-normalized states. 
	
	\subsubsection{\renyi Divergences}
	
	Let $\rho \in \DM$ and $\sigma \in \pos$ be two operators such that $\rho \ll \sigma$. The definitions below can, in some cases, also be extended with finite values to the case where $\rho \not\ll \sigma$ (see the references below or also \cite{khatri_principles_2020}); however, this is not going to be relevant for our work.
	For every $\alpha \in [0,1) \cup (1,2]$ define the \emph{Petz--\renyi divergence} as \cite{petz_quasi-entropies_1986}:
	\begin{equation}
		D_\alpha(\rho\|\sigma) \coloneqq \frac{1}{\alpha - 1} \log \Tr(\rho^{\alpha}\sigma^{1-\alpha})\,.
	\end{equation}
	Similarly, for every $\alpha \in (0,1) \cup (1,2]$ define the \emph{geometric \renyi divergence} as \cite{matsumoto_new_2018}:
	\begin{equation}
		\widehat{D}_\alpha(\rho\|\sigma) \coloneqq \frac{1}{\alpha - 1} \log \Tr\!\left(\sigma^{1 \over 2}( \sigma^{-{1 \over 2}} \rho \sigma^{-{1 \over 2}})^\alpha \sigma^{1 \over 2}\right)\,.
	\end{equation}
	It was shown in \cite{matsumoto_new_2018} that the geometric \renyi divergence is the largest \renyi divergence for every $\alpha \in (0,1) \cup (1,2]$, and so
	\begin{equation}
		D_\alpha(\rho\|\sigma) \leq \widehat{D}_\alpha(\rho\|\sigma)\,.
		\label{eq:petz-to-geometric}
	\end{equation}
	
	\subsubsection{Channel Divergences}
	
	We say that a function $\mathbf{D}$ of $\rho \in \DM$ and $\sigma \in \pos$ is a divergence if it satisfies the data-processing inequality (recall \eqref{eq:data-proc-ineq} here).
	For every given divergence $\mathbf{D}$ for states, one can define an associated channel divergence \cite{leditzky_approaches_2018} by performing a (stabilized) maximization over all input states, i.e., with $\E, \F: A \to B$ being quantum channels
	\begin{equation}
		\mathbf{D}(\E\|\F) \coloneqq \sup_{\rho_{RA} \in \DM[R \otimes A]}\mathbf{D}\big((\id_R \otimes \E)(\rho)\|(\id_R \otimes \F)(\rho)\big).
	\end{equation}
	Since $\mathbf{D}$ satisfies the data-processing inequality by definition, the supremum can be restricted to pure states such that the reference system $R$ is isomorphic to the channel input system $A$. 
	
	Specifically relevant later will be the max channel divergence, which has the following simple representation (see \cite[Definition~19]{diaz_using_2018} and \cite[Lemma 12]{wilde_amortized_2020}):
	\begin{align}
		D_{\max}(\E\|\F) &\coloneqq \sup_{\psi_{RA}} D_{\max}\big((\id_R \otimes \E)(\psi)\|(\id_R \otimes \F)(\psi)\big)\\
		&= D_{\max}\big((\id_R \otimes \E)(\Phi)\|(\id_R \otimes \F)(\Phi)\big),
	\end{align}
	where $R \simeq A$ and $\Phi = \Phi_{RA}$ is a maximally entangled state. Hence, the max channel divergence is easily computable as a semidefinite program (SDP).

	\subsection{Quantum State Discrimination}
	
	In binary quantum state discrimination, a quantum system with Hilbert space $\mathcal{H}$ is prepared in one of  two states $\rho$ or $\sigma$, and the objective is to perform a binary POVM $\{\Pi, \IdentityMatrix - \Pi\}$ (where $\IdentityMatrix$ denotes the identity operator acting on $\mathcal{H}$ and $\Pi \in {\mathcal{B}}({\mathcal{H}})$ satisfying $0 \le \Pi \le \IdentityMatrix$) in order to guess which state was prepared (see \cite{bae_quantum_2015} for a review). Performing this measurement comes with the possibility of making an error of misidentification. Generally one defines the type~I and type~II error probabilities, respectively, as
	\begin{align}
		\alpha(\Pi, \rho, \sigma) & \coloneqq \Tr((\IdentityMatrix - \Pi)\rho), \\
		\beta(\Pi, \rho, \sigma) & \coloneqq \Tr(\Pi \sigma)\,.
	\end{align} 
	In the rest of the paper, we refer to the above simply as the type~I and type~II errors.
	It is easy to see from the definition that different choices of $\Pi$ lead to different trade-offs between the type~I and type~II errors. 
	
	In the asymmetric setting, we are interested in finding the optimal type~II error while keeping the type~I error below a threshold (usually denoted as $\varepsilon$). This optimal type~II error is then defined for all $\varepsilon \in [0,1]$ as	 
	\begin{equation}
		\beta_{\varepsilon}(\rho \Vert \sigma) \coloneqq \min_{\substack{0 \leq \Pi \leq \IdentityMatrix\\ \tr(\Pi \rho) \geq 1 - \varepsilon}} \tr(\Pi \sigma).
	\end{equation}
	
	The negative logarithm of this optimal type~II error is called the hypothesis testing relative entropy \cite{buscemi_quantum_2010,brandao_one-shot_2011,wang_one-shot_2012}:
	\begin{equation}
		D_H^{\varepsilon}(\rho\|\sigma) \coloneqq - \log \beta_{\varepsilon}(\rho \Vert \sigma).
	\end{equation}
	It is also known as the smooth min-relative entropy \cite{anshu_building_2019,wang_resource_states_2019}.
	The hypothesis testing relative entropy satisfies the data-processing inequality \cite{wang_one-shot_2012} and can be related to other entropic quantities. 
	We will make use of the following known relations:
	
	\begin{lemma}[Upper bound on $D_H^\varepsilon$]\label{lem:wang_renner_upper_bound}
		Let $\rho, \sigma \in \DM$ be quantum states. Then for all $\varepsilon \in [0, 1)$ 
		\begin{equation}
			D_H^{\varepsilon}(\rho\|\sigma) \leq \frac{1}{1 - \varepsilon}\big(D(\rho\|\sigma) + h(\varepsilon)\big),
		\end{equation}
		where $h(\varepsilon)$ is the binary entropy function.
	\end{lemma}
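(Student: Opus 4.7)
The plan is to reduce this to the classical Bernoulli case using the data-processing inequality for the Umegaki relative entropy. Let $\Pi^*$ be an optimal test attaining the minimum in $\beta_\varepsilon(\rho\|\sigma)$, and consider the binary measurement channel sending $\tau \mapsto (\tr(\Pi^*\tau), \tr((I-\Pi^*)\tau))$, viewed as a classical distribution on a two-point set. Applying this to $\rho$ and $\sigma$ reduces the lemma to an inequality for two Bernoulli distributions with known parameters, which can be manipulated in closed form.

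Before doing so, I first argue that without loss of generality the type~I constraint is tight at the optimum, i.e.\ $\tr(\Pi^*\rho) = 1-\varepsilon$. The argument is a short scaling argument on the SDP: if $\tr(\Pi^*\rho) = p > 1-\varepsilon$ and $\beta_\varepsilon(\rho\|\sigma) > 0$, then $\widetilde\Pi \coloneqq \frac{1-\varepsilon}{p}\,\Pi^*$ still satisfies $0 \leq \widetilde\Pi \leq I$ and $\tr(\widetilde\Pi \rho) = 1-\varepsilon$, but attains strictly smaller objective $\frac{1-\varepsilon}{p}\beta_\varepsilon(\rho\|\sigma) < \beta_\varepsilon(\rho\|\sigma)$, contradicting optimality. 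The degenerate case $\beta_\varepsilon(\rho\|\sigma) = 0$ forces $\rho \not\ll \sigma$, in which case $D(\rho\|\sigma) = \infty$ and the claimed bound is trivial.

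With the constraint tight, the two output distributions are $P = (1-\varepsilon, \varepsilon)$ and $Q = (\beta_\varepsilon(\rho\|\sigma), 1-\beta_\varepsilon(\rho\|\sigma))$, and data-processing gives $D(\rho\|\sigma) \geq D(P\|Q)$. A direct expansion yields
\[
D(P\|Q) = -h(\varepsilon) - (1-\varepsilon)\log\beta_\varepsilon(\rho\|\sigma) - \varepsilon\log(1-\beta_\varepsilon(\rho\|\sigma)).
\]
Since $\beta_\varepsilon(\rho\|\sigma) \leq 1$ the last term is non-negative, so dropping it produces
\[
D(\rho\|\sigma) + h(\varepsilon) \geq (1-\varepsilon)\,D_H^{\varepsilon}(\rho\|\sigma),
\]
after which dividing by $1-\varepsilon$ gives the bound.

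The only delicate step is the tightness reduction; everything afterwards is routine binary-entropy bookkeeping and a single application of data processing. I expect no real obstacle beyond clearly handling the $\beta_\varepsilon(\rho\|\sigma) = 0$ edge case and ensuring that the scaling argument preserves both feasibility constraints $0 \leq \widetilde\Pi$ and $\widetilde\Pi \leq I$ (which it does, as $\frac{1-\varepsilon}{p} \in (0,1]$).
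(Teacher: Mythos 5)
Your proof is correct: the paper does not prove this lemma itself but cites \cite{wang_one-shot_2012}, and your argument --- measuring $\rho$ and $\sigma$ with the optimal test, applying data processing for the Umegaki relative entropy to reduce to two binary distributions, and expanding the resulting binary relative entropy --- is essentially the standard proof given there. The only cosmetic difference is that you enforce tightness of the type~I constraint via the scaling argument (with the $\beta_\varepsilon(\rho\|\sigma)=0$ edge case correctly reduced to $\rho \not\ll \sigma$), where one can equivalently use monotonicity of the binary divergence $d(p\|q)$ in $p$ for $p \geq q$ together with $\beta_\varepsilon(\rho\|\sigma) \leq 1-\varepsilon$; both routes give the same bound.
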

	This first Lemma is a very well-known consequence of the data-processing inequality of the relative entropy and has been used in converse proofs all throughout information theory. A statement and proof using our notation can be found in \cite{wang_one-shot_2012}, although the essence of the statement can already be found much earlier, for example in \cite[Theorem 2.2]{hiai_proper_1991} and also \cite[Eq.~(3.30)]{hayashi_quantum_2006}. We will also make use of the following Lemma:
	
	\begin{lemma}[Relation to smoothed max-divergence {\cite[Theorem 4]{anshu_minimax_2019}}]\label{lem:anshu_min_max}
		Let $\rho \in \DM$ and $\sigma \in \pos$. Then, for $\varepsilon \in (0,1)$ and $\delta \in (0, 1 - \varepsilon^2)$:
		\begin{multline}\label{eq:anshu_min_max}
			D_H^{1 - \varepsilon ^2 - \delta}(\rho\|\sigma) - \log(4 (1-\varepsilon^2) \over \delta^2) \leq D_{\max}^\varepsilon (\rho\|\sigma) \\ \leq D_{H}^{1 - \varepsilon^2} (\rho\|\sigma) - \log(1 - \varepsilon^2).
		\end{multline}
	\end{lemma}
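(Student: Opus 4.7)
I plan to prove the two inequalities separately, starting with the more tractable lower bound via the Fuchs--Caves inequality and then constructing an explicit smoothed state for the upper bound from the optimal Neyman--Pearson test.

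For the lower bound, I would take $\tilde\rho^*$ attaining the infimum in $D_{\max}^\varepsilon(\rho\|\sigma)$, so that $\tilde\rho^* \leq 2^{D_{\max}^\varepsilon(\rho\|\sigma)}\sigma$ and $F(\rho,\tilde\rho^*) \geq \sqrt{1-\varepsilon^2}$. For any test $0\leq\Pi\leq I$ with $\tr(\Pi\rho) \geq \varepsilon^2 + \delta$, the Fuchs--Caves inequality (monotonicity of fidelity under the binary POVM $\{\Pi, I-\Pi\}$) gives
\begin{equation}
\sqrt{1-\varepsilon^2}\leq F(\rho,\tilde\rho^*) \leq \sqrt{\tr(\Pi\rho)\tr(\Pi\tilde\rho^*)} + \sqrt{(1-\tr(\Pi\rho))(1-\tr(\Pi\tilde\rho^*))} \leq \sqrt{\tr(\Pi\tilde\rho^*)} + \sqrt{1-\varepsilon^2-\delta}.
\end{equation}
Using $\sqrt{1-\varepsilon^2}-\sqrt{1-\varepsilon^2-\delta} = \delta/(\sqrt{1-\varepsilon^2}+\sqrt{1-\varepsilon^2-\delta}) \geq \delta/(2\sqrt{1-\varepsilon^2})$ yields $\tr(\Pi\tilde\rho^*) \geq \delta^2/(4(1-\varepsilon^2))$; combining with $\tr(\Pi\tilde\rho^*) \leq 2^{D_{\max}^\varepsilon(\rho\|\sigma)}\tr(\Pi\sigma)$ gives $\tr(\Pi\sigma) \geq (\delta^2/(4(1-\varepsilon^2)))\cdot 2^{-D_{\max}^\varepsilon(\rho\|\sigma)}$. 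Taking the infimum over admissible $\Pi$ and applying $-\log$ then gives the lower bound.

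For the upper bound, my plan is to exhibit an explicit $\tilde\rho \in B^{\circ}_{\varepsilon}(\rho)$ derived from the optimal Neyman--Pearson test. Let $\Pi^*$ be the spectral projector onto the positive part of $\rho - \mu\sigma$, with $\mu>0$ chosen so that $\tr(\Pi^*\rho)=\varepsilon^2$; this test attains $\beta_{1-\varepsilon^2}(\rho\|\sigma) = \tr(\Pi^*\sigma)$. A natural candidate is the ``reject''-conditioned post-measurement state
\begin{equation}
\tilde\rho = \frac{(I-\Pi^*)\rho(I-\Pi^*)}{1-\varepsilon^2},
\end{equation}
which by Uhlmann's theorem applied to a purification of $\rho$ satisfies $F(\rho,\tilde\rho) \geq \tr((I-\Pi^*)\rho)/\sqrt{1-\varepsilon^2} = \sqrt{1-\varepsilon^2}$, so indeed $\tilde\rho \in B^{\circ}_{\varepsilon}(\rho)$. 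The Neyman--Pearson characterization supplies the local inequality $(I-\Pi^*)\rho(I-\Pi^*) \leq \mu(I-\Pi^*)\sigma(I-\Pi^*)$ on the rejection subspace, and the target bound is $\tilde\rho \leq (2^{D_H^{1-\varepsilon^2}(\rho\|\sigma)}/(1-\varepsilon^2))\cdot\sigma$.

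The main obstacle is exactly the promotion of this local operator inequality to a global one: $(I-\Pi^*)\sigma(I-\Pi^*) \leq \sigma$ fails in general when $\Pi^*$ does not commute with $\sigma$, so the candidate $\tilde\rho$ above need not directly satisfy $\tilde\rho \leq \text{const}\cdot\sigma$. To close this gap I would either perturb $\tilde\rho$ by mixing in a small component supported on $\Pi^*\sigma\Pi^*/\tr(\Pi^*\sigma)$ and carefully re-verify the sine-distance bound, or attack the inequality dually by writing $D_{\max}^\varepsilon$ as an SDP and matching Lagrange multipliers with the SDP dual of $\beta_{1-\varepsilon^2}(\rho\|\sigma)$ in order to read off the prefactor $1/(1-\varepsilon^2)$. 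Navigating this non-commutativity of $\Pi^*$ with $\sigma$ is, I expect, the technically delicate part of the proof.
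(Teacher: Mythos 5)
The paper itself does not prove this lemma: it is imported verbatim from \cite[Theorem 4]{anshu_minimax_2019}, with only a remark that the argument survives the restriction to normalized smoothing. So your proposal has to be judged against the known proof. Your argument for the \emph{left} inequality is correct and complete: taking an optimizer $\tilde\rho^*$ of $D_{\max}^{\varepsilon}$, using data processing of the fidelity under the binary measurement $\{\Pi,\IdentityMatrix-\Pi\}$, the elementary estimate $\sqrt{1-\varepsilon^2}-\sqrt{1-\varepsilon^2-\delta}\geq \delta/(2\sqrt{1-\varepsilon^2})$, and $\tr(\Pi\tilde\rho^*)\leq 2^{D_{\max}^{\varepsilon}(\rho\|\sigma)}\tr(\Pi\sigma)$ reproduces exactly the constant $4(1-\varepsilon^2)/\delta^2$; this is essentially the standard route for that direction.

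For the \emph{right} inequality, however, you have only a candidate and an acknowledged obstruction, so the proof is genuinely incomplete. The candidate $\tilde\rho\propto(\IdentityMatrix-\Pi^*)\rho(\IdentityMatrix-\Pi^*)$ built from the Neyman--Pearson projector cannot be pushed to the global bound $\tilde\rho\leq\mathrm{const}\cdot\sigma$, precisely because $(\IdentityMatrix-\Pi^*)\sigma(\IdentityMatrix-\Pi^*)\not\leq\sigma$ in general; moreover even on the rejection subspace the threshold $\mu$ is not controlled by $2^{D_H^{1-\varepsilon^2}}$ in the way the target constant requires, and the ``mix in a bit of $\Pi^*\sigma\Pi^*$'' patch has no obvious way to recover the prefactor $1/(1-\varepsilon^2)$. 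The missing idea — and the heart of the cited proof — is a minimax/duality exchange rather than the construction of a single universally good smoothed state. By SDP duality, $2^{D_{\max}(\tilde\rho\|\sigma)}=\max\{\tr(X\tilde\rho):X\geq0,\ \tr(X\sigma)\leq1\}$, so $2^{D_{\max}^{\varepsilon}(\rho\|\sigma)}=\min_{\tilde\rho\in B^{\circ}_{\varepsilon}(\rho)}\max_{X}\tr(X\tilde\rho)$, and Sion's minimax theorem lets you swap the optimizations. After the swap you only need, for each fixed dual-feasible $X$, \emph{some} $\tilde\rho_X\in B^{\circ}_{\varepsilon}(\rho)$ with $\tr(X\tilde\rho_X)\leq\lambda_0\coloneqq 2^{D_H^{1-\varepsilon^2}(\rho\|\sigma)}/(1-\varepsilon^2)$: take $P$ to be the spectral projector of $X$ onto eigenvalues at most $\lambda_0$ and set $\tilde\rho_X=P\rho P/\tr(P\rho)$. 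Since $X\geq\lambda_0(\IdentityMatrix-P)$ forces $\tr((\IdentityMatrix-P)\sigma)\leq1/\lambda_0=(1-\varepsilon^2)\beta_{1-\varepsilon^2}(\rho\|\sigma)$, the complementary projector would otherwise be a feasible test beating the optimal type~II error, so $\tr(P\rho)\geq1-\varepsilon^2$; your own Uhlmann/gentle-measurement estimate then puts $\tilde\rho_X$ in the $\varepsilon$-ball, and $PXP\leq\lambda_0 P$ gives $\tr(X\tilde\rho_X)\leq\lambda_0$. The conjugated-state idea is thus salvageable, but the projector must come from the dual variable $X$ after the minimax swap, not from the Neyman--Pearson operator $\rho-\mu\sigma$; without that exchange the non-commutativity gap you identified is real and your argument does not establish the upper bound.
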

	
	\begin{remark}
		Note that \cite{anshu_minimax_2019} defines the smoothed max-divergence using sub-normalized smoothing; however, the proof of this lemma goes through also when restricting to normalized states. Note also that the arXiv version (v1) of \cite{anshu_minimax_2019} has a typo in the admissible range of $\delta$, which has been corrected in the published version.
	\end{remark}
	
	\begin{EXCLUDED}
		
		Note that the last relation can be slightly tightened if we can assume $n$ to be large enough:
		
		\begin{lemma}[Better for large/small $\varepsilon$, but only for $n$ large enough]
			If 
			\begin{equation}
				\frac{\log(1-\varepsilon)}{n} \leq \qty(\frac{\log3}{2})^2  \quad \text{and} \quad
				\frac{\log(\varepsilon)}{n} \leq \qty(\frac{\log3}{2})^2
			\end{equation}
			then with $\Upsilon = \Upsilon(\rho\|\sigma)$
			\begin{align}
				\frac{1}{n} D^{\varepsilon}_H(\rho^{\otimes n}\|\sigma^{\otimes n}) &\leq D(\rho\|\sigma) + 4 \log(\Upsilon) \sqrt{\frac{- \log(1 - \varepsilon)}{n}} - \frac{\log(1 - \varepsilon)}{n} \\
				\frac{1}{n} D^{\varepsilon}_H(\rho^{\otimes n}\|\sigma^{\otimes n}) & \geq D(\rho\|\sigma) - 4 \log(\Upsilon) \sqrt{\frac{- \log(\varepsilon)}{n}} + \frac{\log(\varepsilon)}{n}
			\end{align}
		\end{lemma}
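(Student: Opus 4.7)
The plan is to sandwich $D_H^\varepsilon(\rho^{\otimes n}\|\sigma^{\otimes n})$ between Petz--\renyi divergences $D_\alpha$ at two orders $\alpha$ close to $1$ on either side, use the additivity $D_\alpha(\rho^{\otimes n}\|\sigma^{\otimes n}) = n D_\alpha(\rho\|\sigma)$ to reduce to single-copy quantities, and then apply the refined continuity bound of \autoref{lem:renyi_alpha_continuity} to replace $D_\alpha(\rho\|\sigma)$ by $D(\rho\|\sigma)$ up to an explicit error controlled by $\log \Upsilon$. A suitable optimization of $\alpha$ against $n$ and $\varepsilon$ will then produce the square-root corrections.

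Concretely, I would start from the classical Nagaoka-type trade-off inequalities between the hypothesis testing divergence and the Petz--\renyi divergences: for $\alpha \in (1, 2]$, a bound of the form $D_H^\varepsilon(\rho\|\sigma) \leq D_\alpha(\rho\|\sigma) + \frac{\alpha}{\alpha-1}\log\frac{1}{1-\varepsilon}$, and for $\alpha \in [0, 1)$, a lower bound of the symmetric form $D_H^\varepsilon(\rho\|\sigma) \geq D_\alpha(\rho\|\sigma) - \frac{\alpha}{1-\alpha}\log\frac{1}{\varepsilon}$; both are proven by testing with spectral projectors of $\rho^\alpha \sigma^{1-\alpha}$. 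Substituting $\rho^{\otimes n}, \sigma^{\otimes n}$, invoking additivity of $D_\alpha$, and dividing by $n$ reduces the claim to comparing $D_\alpha(\rho\|\sigma)$ with $D(\rho\|\sigma)$. \autoref{lem:renyi_alpha_continuity} supplies the required estimate, schematically $|D_\alpha(\rho\|\sigma) - D(\rho\|\sigma)| \leq c\, |\alpha-1| (\log \Upsilon)^2$ for $|\alpha - 1|$ in an admissible neighbourhood of $0$. Writing $\alpha = 1 + t$ in the upper bound turns the right-hand side into $D(\rho\|\sigma) + c\, t (\log \Upsilon)^2 + \frac{1+t}{nt}\log\frac{1}{1-\varepsilon}$, and minimizing over $t$ at $t \sim \frac{1}{\log\Upsilon}\sqrt{-\log(1-\varepsilon)/n}$ produces the $4 \log \Upsilon \sqrt{-\log(1-\varepsilon)/n}$ term after careful constant tracking, while the additive term coming from $\frac{1+t}{t} = 1 + \frac{1}{t}$ contributes the residual $-\log(1-\varepsilon)/n$. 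The lower bound follows by the symmetric argument with $\alpha = 1 - t$.

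The main obstacle is guaranteeing that the optimal $\alpha$ lies inside the range of validity of \autoref{lem:renyi_alpha_continuity}. This is precisely the role of the hypotheses $\log(1-\varepsilon)/n, \log(\varepsilon)/n \leq \br{\frac{\log 3}{2}}^2$: they ensure that $|\alpha - 1|$ at the optimum is small enough to stay inside the admissible window of that continuity lemma (the asymmetry between $\varepsilon$ and $1-\varepsilon$ in the two inequalities is what makes each bound a separate requirement). Once this is verified, the rest of the proof is bookkeeping -- tracking the constants carefully enough to land on the prefactor $4$, and checking that the optimizer stays inside $(1, 2]$ (respectively $[0, 1)$) so the Nagaoka-type inequalities continue to apply.
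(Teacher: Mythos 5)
Your proposal follows essentially the same route as the paper's own proof: bound $D_H^\varepsilon$ above/below by a \renyi divergence of order $1\pm t$ with the penalty $\frac{\alpha}{\alpha-1}\log\frac{1}{1-\varepsilon}$ (resp.\ $\frac{\alpha}{\alpha-1}\log\frac{1}{\varepsilon}$), use additivity, apply the $\alpha$-continuity estimate $D_{1\pm\delta}\lessgtr D \pm 4\delta\log^2\Upsilon$ (which is \autoref{lem:renyi_alpha_continuity} at $\gamma=1/2$, the paper citing the Audenaert/Tomamichel original), and pick $t=\tfrac{1}{2\log\Upsilon}\sqrt{-\log(1-\varepsilon)/n}$, with the stated hypotheses on $n$ serving exactly to keep this choice in the admissible window, as you say. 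The approach and constant bookkeeping match the paper's argument.
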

		\begin{proof}
			From {\cite[Prop. 4.68 and Prop 4.69]{khatri_principles_2020}} we have the following relations between hypothesis testing and \renyi entropies:
			\begin{align}
				D^\varepsilon_H(\rho\|\sigma) & \leq \bar{D}_\alpha(\rho\|\sigma) + \frac{\alpha}{\alpha - 1}\log(\frac{1}{1 - \varepsilon}) &\qquad  \foralls \alpha &\in (1, \infty) \\
				D^\varepsilon_H(\rho\|\sigma) &\geq \bar{D}_\alpha(\rho\|\sigma) + \frac{\alpha}{\alpha - 1}\log(\frac{1}{\varepsilon})	&\qquad  \foralls \alpha &\in (0,1)
			\end{align}
			Also, from \cite[Lemma 2.3]{audenaert_quantum_2012} (see also \cite[Lemma 6.3]{tomamichel_framework_2012}) we have for $\delta \in (0, {\log 3 \over 4 \log \Upsilon(\rho\|\sigma)})$:
			\begin{align}
				\bar{D}_{1 + \delta}(\rho\|\sigma) &\leq D(\rho\|\sigma) + 4 \delta [\log\Upsilon(\rho\|\sigma)]^2 \\
				\bar{D}_{1 - \delta}(\rho\|\sigma) &\geq D(\rho\|\sigma) - 4 \delta [\log\Upsilon(\rho\|\sigma)]^2
			\end{align}
			In the following we prove the upper bound of our lemma, the lower bound works analogously. We choose $\alpha = 1 + \delta = 1 + \frac{1}{2 \mu \sqrt{n}}$, where
			\begin{equation}
				\mu = \log(\Upsilon) (- \log(1 - \varepsilon))^{-1/2}
			\end{equation}
			
			Combining the above relations we find
			\begin{align}
				\frac{1}{n} D^{\varepsilon}_H(\rho^{\otimes n}\|\sigma^{\otimes n}) &\leq \frac{1}{n} \bar{D}_{1 + \delta}(\rho^{\otimes n}\|\sigma^{\otimes n}) - \frac{1}{n} (1 + \frac{1}{\delta}) \log( 1- \varepsilon) \\
				&= \bar{D}_{1 + \delta}(\rho\|\sigma) - \frac{1}{n} (1 + \frac{1}{\delta}) \log( 1- \varepsilon) \\
				&\leq D(\rho\|\sigma) + 4 \delta (\log \Upsilon)^2 - \frac{1}{n\delta} \log( 1 - \varepsilon)  - \frac{1}{n} \log( 1- \varepsilon) \\
				&= D(\rho\|\sigma) + 4 \log(\Upsilon) \sqrt{\frac{- \log(1 - \varepsilon)}{n}} - \frac{\log(1 - \varepsilon)}{n}
			\end{align}
		\end{proof}
		
	\end{EXCLUDED}
	
	\subsection{Quantum Channel Discrimination}\label{sec:channel_discrimination}
	
	Let $\E$ and $\F$ be two quantum channels, each taking system $A$ to system $B$. In the discrimination setting we will generally use inputs to the channels that are entangled with a reference system. To simplify notation we will usually \emph{not} make the identity channel on the reference system explicit. Hence, if $\rho_{RA} \in \DM[R \otimes A]$ is a state, we will write
	\begin{equation}
		\E(\rho) \coloneqq \E_{A \to B}(\rho_{RA}) = (\id_{R} \otimes \E_{A \to B})(\rho_{RA}),
	\end{equation}
	and similarly also for $\F$. 
	
	\begin{figure}[htb]
		\centering
		\includegraphics[width=\linewidth]{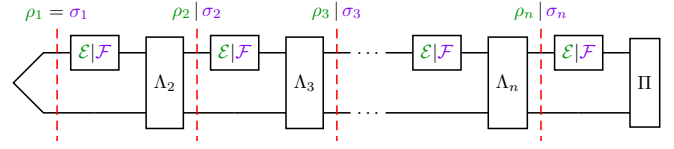}
		\caption{Illustration of a general adaptive protocol with $n$ uses of the black-box channel. The top row makes use of the given black-box $\E|\F$, which is either $\E$ or $\F$, while the bottom row depicts the memory system $R$. At various stages in the protocol, the green states $\rho$ occur if the channel is $\E$ and the purple states $\sigma$ occur if the channel is $\F$.   }
		\label{fig:adaptive_strategy}
	\end{figure}
	
	The most general channel discrimination protocol will choose input states based on the outputs of previous channel uses. This is called an adaptive protocol. 
	A general adaptive channel discrimination protocol with $n$ uses of the black-box channel ($\E$ or $\F$), can be fully specified by an initial state $\rho_1 = \sigma_1 \in \DM[R \otimes A]$, a set of $n - 1$ CPTP maps $\Lambda_i: R \otimes B \to R \otimes A$, that transform the state before it is fed into the next black-box channel, and a final binary POVM $\{\Pi, \IdentityMatrix -  \Pi\}$ on $R \otimes B$. We will assume the size of reference system $R$ to be fixed and identical throughout the protocol (this is without loss of generality). The protocol consists of alternating applications of the black-box channel and the preparation CPTP maps $\Lambda_i$ (see \autoref{fig:adaptive_strategy}). We define for $i \in \{2, \ldots, n\}$:
	\begin{equation}
		\rho_i \coloneqq \Lambda_{i}(\E(\rho_{i-1})), \quad \sigma_i  \coloneqq \Lambda_{i}(\F(\sigma_{i-1})),
	\end{equation}
	and so the final state before the action of the POVM will be $\E(\rho_n)$ if the channel is $\E$ and $\F(\sigma_n)$ if the channel is $\F$. 
	The distinguishability of the channels then comes down to the distinguishability of the two states $\E(\rho_n)$ and $\F(\sigma_n)$. We will be focussing again on the asymmetric setting, with the main object of study being the hypothesis testing relative entropy $D_H^\varepsilon(\E(\rho_n) \| \F(\sigma_n))$. We write the optimal adaptive type~II error rate for a given finite number of channel uses~$n$ as \cite{cooney_strong_2016}
	\begin{equation}
		\sup_{\rho_1, \{\Lambda_i\}_i} \frac{1}{n} D^\varepsilon_H(\E(\rho_n)\|\F(\sigma_n))
	\end{equation}
	where the supremum is over every initial input state $\rho_1 = \sigma_1$ and all subsequent input preparation CPTP maps $\{\Lambda_i\}_{i = 2}^n$.
	It was shown in \cite[Prop. 3]{katariya_evaluating_2021} that this is computable as a semi-definite program.
	
	\subsubsection{Asymptotic Equivalence of Adaptive and Parallel Strategies}
	
	It has been shown recently \cite{wilde_amortized_2020, wang_resource_2019} that asymptotically, when the number of channel uses goes to infinity, the best exponential decay rate of the type~II error (per channel use) such that the type~I error still goes to zero is given by the amortized Umegaki channel divergence, i.e.
	\begin{equation}
		\lim_{\varepsilon \to 0} \lim_{n \to \infty} \sup_{\rho_1, \{\Lambda_i\}_i} \frac{1}{n} D^\varepsilon_H(\E(\rho_n)\|\F(\sigma_n)) = D^A(\E\|\F)\,
		\label{eq:opt-adaptive-amortized}
	\end{equation}
	where
	\begin{equation}
		D^A(\E\|\F) \coloneqq \sup_{\rho, \sigma \in \DM[R \otimes A]}\big[D(\E(\rho)\|\F(\sigma)) - D(\rho\|\sigma)\big].
	\end{equation}
	Note that the dimension of the reference system $R$ in this last supremum can be arbitrarily large.
	
	The chain rule from \cite{fang_chain_2020} states that this amortized divergence is in fact equal to the regularized channel divergence, i.e.
	\begin{multline}
		D^A(\E\|\F) = D^{\text{reg}}(\E\|\F) \coloneqq \\\lim_{n \to \infty} \frac{1}{n} \sup_{\nu \in \DM[R^{\otimes n} \otimes A^{\otimes n}]} D(\E^{\otimes n}(\nu)\|\F^{\otimes n}(\nu)),
	\end{multline}
	the latter of which can be achieved by parallel protocols, as a consequence of \cite[Theorem 3]{wang_resource_2019}.
	Note that the reference system $R$ in the latter optimization can be chosen isomorphic to $A$.
	Hence, asymptotically (and in the regime in which the type~I error goes to zero) adaptive strategies offer no advantage over parallel ones. We provide a finite~$n$ version of this statement, by giving explicit bounds on how much the error probabilities of adaptive and parallel strategies can differ for a finite number of channel uses.
	
	\section{Parallelizing an \texorpdfstring{$n$}{n}-Shot Adaptive Protocol}
	\label{sec:main_result}
	
	We state our main result in two forms, first in a simple manner that illustrates the main idea and structure of the result, and secondly a more detailed theorem that gives a tighter bound, and which states in detail what one can choose as a parallel input state. 
	
	Remember that there is a tradeoff in minimizing the type~I and type~II errors. In the context of a strategy, for a given type~I error $\alpha$ we will write the best achievable type~II error as $\beta(\alpha)$. We are especially interested in the exponential decay rate of the type~II error with the number of channel uses; i.e., if some strategy involving $n$ uses of the channel has type~II error $\beta(\alpha)$, we are interested in the quantity $-\frac{1}{n} \log(\beta(\alpha))$. Our main result compares this error decay rate per channel use between adaptive and parallel strategies:
	
	\begin{corollary}[Main result, simple version]\label{cor:seq_par}
		Let $\E, \F: A \to B$ be two quantum channels such that $D_{\max}(\E\|\F) <  \infty$. Let there be an adaptive discrimination protocol with $n$ channel uses, that -- for an arbitrary type~I error $\alpha_a \in [0,1]$ -- achieves type~II error $\beta_a(\alpha_a)$. Then, for all $\alpha_p \in (0,1]$ there exists a parallel protocol with $m$ channel uses and type~II error $\beta_p(\alpha_p)$ such that for all $\alpha_a \in [0,1]$ the type~II error rates per channel use obey the following relation:
		\begin{multline}\label{eq:cor_seq_par}
			-\frac{1}{m} \log(\beta_p(\alpha_p)) \geq -\frac{1 - \alpha_a}{n} \log(\beta_a(\alpha_a)) \\- \frac{C n}{\sqrt{m}}\log(8\over \alpha_p) - \frac{1}{n}.
		\end{multline}
		That is, the type~II error rate of the parallel protocol is essentially at least as good as the adaptive one modulo an additional error term, which decays as $m \to \infty$. 
		The constant $C$ is given by
		\begin{equation}
			C \coloneqq 7 \log(2^{D_2(\E\|\F)} + 2) \leq 7 \log(2^{D_{\max}(\E\|\F)} + 2)\,.
		\end{equation}
	\end{corollary}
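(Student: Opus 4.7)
The plan is to connect both protocols through the Umegaki relative entropy $D(\E(\rho_n)\|\F(\sigma_n))$ of the sequential output states, bridging via the chain rule of \cite{fang_chain_2020} for a Rényi divergence on one side, and a finite-blocklength AEP for the smoothed max-divergence on the other.

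On the sequential side, writing $-\log \beta_S(\alpha_S) = D_H^{\alpha_S}(\E(\rho_n)\|\F(\sigma_n))$ and applying \autoref{lem:wang_renner_upper_bound} together with $h(\alpha_S) \leq 1$ immediately gives
\begin{equation*}
(1-\alpha_S)\bigl(-\log \beta_S(\alpha_S)\bigr) \leq D(\E(\rho_n)\|\F(\sigma_n)) + 1,
\end{equation*}
so that division by $n$ accounts for the $-1/n$ slack in the target inequality. For the parallel side I would fix a single-copy state $\nu \in \DM[R \otimes A]$ (to be chosen below) and use the product input $\nu^{\otimes m}$. Passing from $D_H^{\alpha_P}$ to the smoothed max-divergence $D_{\max}^{\varepsilon}$ via \autoref{lem:anshu_min_max} (with $\varepsilon$ of order $\sqrt{\alpha_P}$) and then invoking the finite-blocklength AEP \autoref{lem:aep_finite_n} for the i.i.d.\ pair $\E(\nu)^{\otimes m}, \F(\nu)^{\otimes m}$ would yield a lower bound of the form
\begin{equation*}
\frac{1}{m} D_H^{\alpha_P}\!\bigl(\E(\nu)^{\otimes m}\,\|\,\F(\nu)^{\otimes m}\bigr) \geq D(\E(\nu)\|\F(\nu)) - \frac{C_0(\nu)\log(8/\alpha_P)}{\sqrt{m}},
\end{equation*}
with $C_0(\nu)$ controlled through $D_2(\E(\nu)\|\F(\nu)) \leq D_2(\E\|\F)$, matching the constant $C = 7\log(2^{D_2(\E\|\F)}+2)$ of the statement.

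The main obstacle is the bridge: relating $D(\E(\nu)\|\F(\nu))$ on the parallel side to $\tfrac{1}{n} D(\E(\rho_n)\|\F(\sigma_n))$ on the sequential side, since the sequential rate may strictly exceed any single-letter parallel rate, so $\nu$ must be chosen carefully. I would route through the geometric Rényi divergence: iterating the chain rule of \cite{fang_chain_2020} along the sequential protocol yields $\widehat{D}_\alpha(\E(\rho_n)\|\F(\sigma_n)) \leq n\, \widehat{D}_\alpha(\E\|\F)$, while the refined continuity bound \autoref{lem:renyi_alpha_continuity} controls $\widehat{D}_\alpha - D$ at a cost proportional to $(\alpha-1) D_2$. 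Choosing $\nu$ to approximately attain $\widehat{D}_\alpha(\E\|\F)$ and then balancing $\alpha - 1$ against the AEP error is where the linear $n$-factor in $Cn/\sqrt{m}$ arises: the $n$-fold chain-rule iteration amplifies the single-letter Rényi-to-Umegaki slack by a factor of $n$ before it can be absorbed into the parallel AEP bound. The main quantitative subtlety is tracking constants so that the AEP dependence on $D_2$ and the continuity dependence on $(\alpha-1)$ combine into the single constant $C$ displayed in the statement. Combining the sequential bound, the parallel bound, and this bridge then yields \eqref{eq:cor_seq_par}, with \autoref{thm:seq_par} presumably recording the detailed construction of $\nu$ and a tighter version of the bound, and this corollary following by coarsening $D_2(\E\|\F) \leq D_{\max}(\E\|\F)$.
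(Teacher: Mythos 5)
Your sequential-side step (\autoref{lem:wang_renner_upper_bound} plus $h(\alpha_S)\le 1$) matches the paper, and the tools you name for the parallel side (\autoref{lem:anshu_min_max}, \autoref{lem:aep_finite_n}) are the right ones, but your bridge contains a genuine gap. First, the i.i.d.\ ansatz $\nu^{\otimes m}$ with a fixed single-copy $\nu$ cannot work: as $m\to\infty$ such an input achieves at best $\sup_{\nu} D(\E(\nu)\|\F(\nu)) = D(\E\|\F)$ per channel use, while the sequential rate you must match tends (after the limits implicit in the corollary) to the amortized/regularized quantity $D^A(\E\|\F)=D^{\mathrm{reg}}(\E\|\F)$; these are not known to coincide, so your route would prove additivity of the Umegaki channel divergence as a by-product --- a clear sign the bridging error cannot be made to vanish in $m$ and $n$. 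Second, the specific bridge is unsound: \autoref{lem:renyi_alpha_continuity} controls the gap between the Petz--R\'enyi divergence $D_{1+\delta}$ and $D$, not between the geometric R\'enyi divergence $\widehat{D}_\alpha$ and $D$. As $\alpha\to 1^+$, $\widehat{D}_\alpha$ converges to the Belavkin--Staszewski relative entropy, which exceeds $D$ by a gap that does not shrink with $\alpha-1$; hence choosing $\nu$ to (approximately) attain $\widehat{D}_\alpha(\E\|\F)$ gives no lower bound on $D(\E(\nu)\|\F(\nu))$ of the form $\widehat{D}_\alpha(\E\|\F) - O(\alpha-1)$, and the chain-rule bound $\widehat{D}_\alpha(\E(\rho_n)\|\F(\sigma_n))\le n\,\widehat{D}_\alpha(\E\|\F)$ cannot be converted back to Umegaki quantities this way.

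The paper avoids both problems by never comparing against a single-letter channel quantity. It first uses an amortization telescope to bound $\frac{1}{n} D(\E(\rho_n)\|\F(\sigma_n))$ by the single best step, $D(\E(\rho_\ell)\|\F(\sigma_\ell)) - D(\rho_\ell\|\sigma_\ell)$; it then applies \autoref{lem:aep_finite_n} to the i.i.d.\ pairs $\big(\E(\rho_\ell)^{\otimes m},\F(\sigma_\ell)^{\otimes m}\big)$ and $\big(\rho_\ell^{\otimes m},\sigma_\ell^{\otimes m}\big)$ (in opposite directions), and uses the one-shot chain rule for the smoothed max-divergence (\autoref{lem:chain_rule_simple}) to turn the difference of the two smoothed max-divergences into $D_{\max}^{1-2\varepsilon}\big(\E^{\otimes m}(\nu)\|\F^{\otimes m}(\nu)\big)$, where $\nu$ is the smoothing optimizer of $D_{\max}^{\varepsilon}(\rho_\ell^{\otimes m}\|\sigma_\ell^{\otimes m})$ --- a state that is in general correlated across the $m$ channel uses, not a tensor power; this correlation is exactly what lets the parallel strategy reach the amortized rate. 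The geometric R\'enyi chain rule enters only to bound the AEP constants $c_{\gamma}(\rho_\ell\|\sigma_\ell)$ and $c_{\gamma}(\E(\rho_\ell)\|\F(\sigma_\ell))$ by $\ell\,\widehat{c}_\gamma(\E\|\F)\le n\,\widehat{c}_\gamma(\E\|\F)$, which is where the factor $n$ in $Cn/\sqrt{m}$ and the constant $C$ (via $\widehat{D}_2=D_2\le D_{\max}$) actually come from, rather than from an $n$-fold amplification of a R\'enyi-to-Umegaki slack.
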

	
	\begin{remark}
		If we take the limit $m \to \infty$ in \eqref{eq:cor_seq_par}, then $n \to \infty$, and finally $\alpha_a \to 0$ and $\alpha_p \to 0$, we find that asymptotically there exist parallel strategies with better or at least equal type~II error decay than an adaptive one, and hence our result also implies the known result \cite{fang_chain_2020} that, asymptotically (and with $\alpha \to 0$), adaptive strategies offer no advantage over parallel ones. 
	\end{remark}
	
	\begin{remark}
		It is known that for small $n$ and $m$, and suitably chosen $\alpha_a$ and $\alpha_p$, the type~II error rates for adaptive and parallel strategies can be arbitrarily far apart (see \Cref{sec:example} below for an example). From the asymptotic equivalence, we know that this difference has to vanish as $n$ and $m$ go to infinity, but the purely asymptotic statement does not tell us how exactly this vanishes, and what the required relationship between $n$ and $m$ is (in principle the required $m$ to reach a similar rate as a given adaptive strategy with $n$ channel uses could grow arbitrarily fast with $n$).
		\Cref{cor:seq_par} now tells us that the difference of type~II error rates between an adaptive strategy and the corresponding parallel one will become arbitrarily small if $m = \omega(n^2)$ (i.e., $m$ has to grow faster than $n^2$). Hence, given a sequence of adaptive strategies with $n$ channel uses, we can convert these into parallel strategies using at most quadratically (or a little bit more than quadratically) as many channel uses each, and will achieve matching rates once $n$ gets large enough. This quadratic relationship is universal in the sense that it holds for all pairs of channels $\E$, $\F$ with $D_{\max}(\E\|\F) < \infty$ (where only the prefactor depends on the value of $D_{\max}(\E\|\F)$).
		
		Note again that we are not comparing type~II errors in \autoref{cor:seq_par}, but rather decay rates of the type~II error per channel use. When we say that the rates of a parallel strategy with $m = \Omega(n^2)$ channel uses and an adaptive strategy with $n$ channel uses are roughly equal, the parallel strategy will have much smaller type~II error because it has many more channel uses. 
	\end{remark}
	
	The following is the more refined version of our result, with a tighter bound and a description of the parallel input state:
	
	\begin{theorem}[Main result, technical version]\label{thm:seq_par}
		Let $\E, \F: A \to B$ be quantum channels such that $D_{\max}(\E\|\F) < \infty$. 
		Given an arbitrary adaptive protocol with $n$ channel uses, we write $\rho_i, \sigma_i \in \DM[R_a \otimes A]$, $i \in \{ 1, \ldots, n\}$ for the states that are input into the channel during the adaptive protocol ($\rho_i$ if the channel is $\E$, and $\sigma_i$ if the channel is $\F$; see \Cref{sec:channel_discrimination} and \autoref{fig:adaptive_strategy} above for a more detailed explanation of this notation). We define $\ell \in \{1, \ldots, n\}$ as the step in the protocol where the distinguishability increases the most, i.e.,
		\begin{equation}
			\ell \coloneqq  \argmax_{k\in \{ 1, \ldots, n\}} \Big[ D(\E(\rho_k)\|\F(\sigma_k)) - D(\rho_k\|\sigma_k) \Big]\,.
		\end{equation}
		
		Then, for all $\alpha_p \in (0,1]$, and $m \in \naturals$, there exists a state $\nu \in \DM[R^{\otimes m} \otimes A^{\otimes m}]$ such that for all $\alpha_a \in [0,1]$:
		\begin{multline}\label{eq:seq_par_simple}
			\frac{1}{m} D_H^{\alpha_p}\big(\E^{\otimes m}(\nu)\|\F^{\otimes m}(\nu)\big) \geq \frac{1 - \alpha_a}{n} D_H^{\alpha_a}\big(\E(\rho_n)\|\F(\sigma_n)\big) \\- \frac{c'_\ell}{\sqrt{m}} \qty[\log(4 \over \alpha_p) + K] -  \frac{1}{m}\qty[\log(1 \over \alpha_p) - \log(1 - \frac{\alpha_p}{4})] \\- \frac{h(\alpha_a)}{n}\, ,
		\end{multline}
		where 
		\begin{equation}
			K \coloneqq \frac{\ln(2) \log^2(3)}{8} \cosh(\log(3) \over 2) \leq 0.29
		\end{equation}
		and $c'_\ell$ depends on the pair of channels $\E$, $\F$ and can be bounded as follows:
		\begin{align}
			c'_\ell &\coloneqq {4 \over \log(3)} \inf_{{\gamma_1, \gamma_2 \in (0,1]}} \qty[c_{\gamma_1}(\E(\rho_\ell)\|\F(\sigma_\ell)) + c_{\gamma_2}(\rho_\ell\|\sigma_\ell)] \\
			&\leq {8 \ell \over \log(3)}  \inf_{\gamma \in (0,1]} \widehat{c}_\gamma(\E\|\F) \\
			&\leq {8 n \over \log(3)}  \inf_{\gamma \in (0,1]}  \widehat{c}_\gamma(\E\|\F)\,.
		\end{align}
		where 
		\begin{align}
			c_\gamma(\rho\|\sigma) &\coloneqq \frac{1}{\gamma}\log(2^{\gamma D_{1 + \gamma}(\rho\|\sigma)} + 2^{-\gamma D_{1 - \gamma}(\rho\|\sigma)} + 1) , \\
			\widehat{c}_\gamma(\E\|\F) &\coloneqq  \frac{1}{\gamma}\log(2^{\gamma \widehat{D}_{1 + \gamma}(\E\|\F)} + 2)\,.
		\end{align}
		\medskip
		
		\noindent Moreover, if 
		\begin{equation}\label{eq:thm_seq_par_bound_m}
			m \geq \log(4\over \alpha_p)\qty(\frac{4}{\log(3) \sqrt{2 \ln(2)}})^2 , 
		\end{equation}
		then we have the following tighter bound
		\begin{multline}\label{eq:thm_seq_par}
			\frac{1}{m} D_H^{\alpha_p}\big(\E^{\otimes m}(\nu)\|\F^{\otimes m}(\nu)\big) \geq \frac{1 - \alpha_a}{n} D_H^{\alpha_a}\big(\E(\rho_n)\|\F(\sigma_n)\big) \\- \frac{c_\ell}{\sqrt{m}} \sqrt{\log(4 \over \alpha_p)} -  \frac{1}{m}\qty[\log(1 \over \alpha_p) - \log(1 - \frac{\alpha_p}{4})] \\- \frac{h(\alpha_a)}{n}\, ,
		\end{multline}
		where $c_\ell$ is defined in a similar way as $c'_\ell$ but with different numerical constants $K_1$ and $K_2$:
		\begin{align}
			K_1 &\coloneqq  2 \sqrt{2 \ln(2) \cosh(\log(3) \over 2)} \leq 2.72, \\
			K_2 &\coloneqq 2 \sqrt{2 \ln(2)} \leq 2.36 ,  \\
			c_\ell & \coloneqq \inf_{{\gamma_1, \gamma_2 \in (0,1]}} \qty[ K_1 c_{\gamma_1}(\E(\rho_\ell)\|\F(\sigma_\ell)) + K_2 c_{\gamma_2}(\rho_\ell\|\sigma_\ell)]\label{eq:definition_cl}\\
			&\leq \ell (K_1 + K_2) \inf_{\gamma \in (0,1]} \widehat{c}_\gamma(\E\|\F) \\
			&\leq n (K_1 + K_2) \inf_{\gamma \in (0,1]}  \widehat{c}_\gamma(\E\|\F) \label{eq:upper_bound_cl} \,.
		\end{align}
		
		\noindent The parallel input state $\nu$ can be chosen as either:
		\begin{itemize}
			\item An optimizer in the smoothing of the max-divergence $D^{\varepsilon}_{\max}(\rho_l^{\otimes m} \|\sigma_l^{\otimes m})$, where $\varepsilon = \frac{1}{2}\left(1 - \sqrt{1 - \alpha_p}\right)$, i.e.,
			\begin{equation}
				\nu = \tilde{\nu}_{R_a^m A^m} \coloneqq \argmin_{\tilde{\rho} \in B^\circ_{\varepsilon}(\rho_\ell^{\otimes m})} D_{\max}(\tilde{\rho}\| \sigma_\ell^{\otimes m})\,.
			\end{equation}
			Note that the reference system $R_a$ depends on the adaptive protocol and can be arbitrarily large, and hence also $\tilde{\nu}$ might have an arbitrarily large reference system.
			
			\item The canonical (or any other) purification of the $A^m$-marginal $\tilde{\nu}_{A^m} = \Tr_{R_a^m}(\tilde{\nu}_{R_a^m A^m})$. That is, we can choose $\nu = \ketbra{\psi_{R^m A^m}}$, with
			\begin{equation}
				\ket{\psi_{R^m A^m}} = \IdentityMatrix_{R^m} \otimes \sqrt{\tilde{\nu}_{A^m}} \ket{\Phi_{R^m A^m}}
			\end{equation}
			where $R$ is isomorphic to $A$ and $\ket{\Phi_{R^m A^m}}$ is an unnormalized maximally entangled state. 
			
		\end{itemize}
	\end{theorem}
	
	\begin{remark}
		Even though the second choice for the parallel input state $\nu$ might seem preferable in most cases (due to the control over the size of the reference system $R$), it is not necessarily so. This is because even though the reference system of the first choice for $\nu$ could be very large, it is not always so. Since the overall state can be mixed, it might actually be smaller than the canonical purification of its marginal. Additionally, with the first choice for $\nu$, one gets a bound that this parallel input state is $\varepsilon$-close to the input state of the adaptive strategy, which might be useful in cases where one wants to show that some property of the adaptive strategy (e.g., a satisfied energy constraint for the input states) is (approximately) satisfied also for the parallel strategy. 
	\end{remark}
	
	\begin{remark}\label{remark:harrow_example}
		The constraint $D_{\max}(\E\|\F) < \infty$ is necessary in general, as the example of \cite{harrow_adaptive_2010} (see also \cite{salek_usefulness_2022}) serves as a counterexample to our statement without this constraint. Specifically, \cite{harrow_adaptive_2010} constructs two channels $\E$, $\F$ for which its authors then show that there exists an adaptive strategy with only two channel uses that achieves perfect discrimination, i.e., both $\alpha_a = 0$ and $\beta_a = 0$. In our terminology this implies that for all $\alpha_a \in [0,1]$
		\begin{equation}
			D_H^{\alpha_a}(\E(\rho_2)\|\F(\sigma_2)) = \infty\,.
		\end{equation}
		On the other hand, \cite{harrow_adaptive_2010} shows for these two channels that with a parallel strategy, even with arbitrarily many channel uses, perfect discrimination can never be achieved (i.e., $\alpha_p$ and $\beta_p$ cannot both be zero), which in our notation implies that for all $m$ and for all $\nu \in \DM[R \otimes A^{\otimes m}]$
		\begin{equation}
			D_H^0(\E^{\otimes m}(\nu)\|\F^{\otimes m}(\nu)) < \infty\,,
		\end{equation}
		Since it is well known (see, e.g., \cite[Prop.~4.66]{khatri_principles_2020}) that for all states $\rho, \sigma$
		\begin{equation}
			\lim_{\alpha_p \to 0} D_H^{\alpha_p}(\rho\|\sigma) = D_H^{0}(\rho\|\sigma)
		\end{equation}
		this means that for all $m$, one can find a sufficiently small $\alpha_p$ such that for all $\nu \in \DM[R \otimes A^{\otimes m}]$
		\begin{equation}
			D_H^{\alpha_p}(\E^{\otimes m}(\nu)\|\F^{\otimes m}(\nu)) < \infty.
		\end{equation}
		and thus a relation like \eqref{eq:thm_seq_par} cannot hold for these two channels. 
		Note that, even in this example, the Stein exponent, i.e., the optimal exponential decay rate of the type~II error such that the type~I error still goes to zero, is still identical for the parallel and adaptive strategies, as it is infinite also for the optimal parallel strategy.  
		This follows from
		\begin{equation}
			D_{\max}(\E\|\F) = \infty \quad \Rightarrow \quad D(\E\|\F) = D^{\mathrm{reg}}(\E \| \F) = \infty\,.
		\end{equation}
	\end{remark}
	
	\subsection{Computability}
	
	\begin{remark}\label{rem:computability}
		The usefulness of finite-length bounds often crucially depends on whether the quantities involved can actually be efficiently computed \cite{hayashi_finite-length_2020, watanabe_finite-length_2017, hayashi_uniform_2016}. To demonstrate this for our bound, we consider the following two applications:
		\begin{enumerate}
			\item We are given an adaptive strategy, where we know (potentially only a lower bound on) $D_H^{\alpha_a}(\E(\rho_n)\|\F(\sigma_n))$, and want to employ the theorem to get a lower bound on the performance of the best-possible parallel strategy (i.e., we want to know how much worse could a parallel strategy potentially be). 
			
			Such a lower bound can be computed using \autoref{thm:seq_par} in time $\mathcal{O}(1)$ (i.e., the computational complexity is independent of the number of channel uses $n$ and $m$), by using the bound on $c_{\ell}$ from \eqref{eq:upper_bound_cl}. A potentially much tighter bound, obtained by finding $\ell$ explicitly and then calculating $c_\ell$, can be computed in time~$\mathcal{O}(n)$. 
			
			\item We want to upper bound the performance of the best possible adaptive strategy with $n$ channel uses, by finding the best possible parallel strategy and then using \autoref{thm:seq_par}. We show in \autoref{lem:parallel_polytime} below that the best possible parallel strategy with $m$ channel uses can actually be calculated in $\mathcal{O}(\poly(m))$ time, and hence by choosing $m = n^{2 + \xi}$ for $\xi > 0$ we also obtain asymptotically tight upper bounds on any adaptive strategy with $n$ channel uses in $\mathcal{O}(\poly(n))$ time.
			
			To our knowledge this is the first such poly-time bound obtained in the literature.
			Specifically, computing the best parallel strategy and then using our bound is exponentially faster than any currently known way of optimizing over all adaptive strategies. While the authors of \cite{katariya_evaluating_2021} showed that optimizing over all adaptive strategies can be phrased as an SDP, the size of this SDP grows exponentially in $n$, and there is no obvious symmetry (like the permutation invariance in the parallel case) that allows one to reduce the number of variables.
		\end{enumerate}
	\end{remark}
	\renewcommand{\O}{\mathcal{O}}
	\begin{lemma}\label{lem:parallel_polytime}
		Let $\E$, $\F: A \to B$ be two quantum channels such that $D_{\max}(\E\|\F) < \infty$. 
		Then, for a fixed $\varepsilon \in [0,1)$, the quantity 
		\begin{equation}\label{eq:computability_result}
			\frac{1}{n} D_H^{\varepsilon}(\E^{\otimes n}\|\F^{\otimes n}) = \frac{1}{n} \sup_{\nu \in \DM[R^{\otimes n} \otimes A^{\otimes n}]} D_H^{\varepsilon}(\E^{\otimes n}(\nu)\|\F^{\otimes n}(\nu))
		\end{equation}
		can be computed up to an additive error $\delta>0$ in time $\mathcal{O}(\poly(n)\log(\frac{1}{\delta}))$ as $n \to \infty$ and $ \delta \to 0$.
	\end{lemma}
	\begin{proof}
		
		As shown in \cite[Prop. 2]{wang_resource_2019}, the quantity $2^{-D_H^{\varepsilon}(\E^{\otimes n}\|\F^{\otimes n})}$ can be expressed as the following semidefinite program (SDP) over operators $\Omega_{R^nB^n} \in \BO[\HS_R\n \otimes \HS_B\n]$ and  $    \rho_{R^n} \in \BO[\HS_R\n]$
		\begin{mini}|l|
			{\Omega_{R^n B^n}, \rho_{R^n}}{\Tr(\Omega_{R^n B^n} \Gamma^{\F^{\otimes n}}_{R^n B^n})}
			{}{}
			\addConstraint{\Tr(\Omega_{R^n B^n} \Gamma^{\E\n}_{R^n B^n})}{ \geq 1 - \varepsilon}
			\addConstraint{0 \leq \Omega_{R^n B^n}}{ \leq \rho_{R^n} \otimes \IdentityMatrix_{B^n}}
			\addConstraint{\Tr(\rho_{R^n})}{=1},
			\label{eq:sdp_original}
		\end{mini}
		where $\Gamma^{\E}_{R B}$ is the Choi matrix of $\E$ and it is easy to see that $\Gamma^{\E^{\otimes n}}_{R^n B^n} = (\Gamma^{\E}_{R B})^{\otimes n}$. The operators $\Omega_{R^n B^n}$ and $\rho_{R^n}$ in this SDP have a number of parameters exponential in $n$, but we will show that we can use the permutation symmetry of this problem to rephrase it as an SDP polynomial in~$n$. Throughout this proof we use $\HS$ to denote a general Hilbert space,  which we will then choose to be either $\HS_R$, $\HS_B$, or $\HS_R \otimes \HS_B$ in different situations. We will denote any objects that depend on the chosen Hilbert space with a subscript or superscript $\HS$ (such as $P_{\HS}$ in the following paragraph), where we then replace the subscript or superscript with $R$, $B$, or $RB$ whenever we choose a specific Hilbert space; e.g., we write $P_{R}$, $P_B$, or $P_{RB}$.
		
		For any permutation $\pi \in \Sn$ (where $\Sn$ is the symmetric group) we write $P_{\HS}(\pi)$ for the permutation matrix corresponding to the action of $\pi$ on $\HS^{\otimes n}$ by permuting the $n$ copies of $\HS$. It is then easy to see that these permutation matrices are unitary and $P_{\HS}(\pi)^\dagger = P_{\HS}(\pi^{-1})$. For any operator $X \in \BO[\HS^{\otimes n}]$ we also write the group average as
		\begin{equation}
			\overline{X} \coloneqq {1 \over \abs{\Sn}} \sum_{\pi \in \Sn} P_\HS(\pi) X P_\HS(\pi)^{\dagger}\, ,
		\end{equation}
		and the set of all permutation invariant operators as
		\begin{multline}
			\End[\HS\n] \coloneqq \\ \{A \in \BO[\HS\n] \,|\, P_{\HS}(\pi) A P_{\HS}(\pi)^{\dagger} = A, \, \forall \pi \in \Sn\} \,.
		\end{multline}

		We start by showing that the minimum in our SDP \eqref{eq:sdp_original} is always achieved by permutation invariant operators. Let $(\Omega_{R^nB^n},\rho_{R^n})$ be feasible for the optimization problem; i.e., they satisfy the constraints of \eqref{eq:sdp_original}. Let $\pi \in \Sn$ be any permutation; then
		\begin{equation}
			\Tr(P_{R}(\pi)\, \rho_{R^n} \, P_{R}(\pi)^\dagger) = \Tr(\rho_{R^n})
		\end{equation}
		since the $P_{R}(\pi)$ are unitary, and thus also $\Tr(\rho_{R^n}) = \Tr(\overline{\rho_{R^n}})$. Similarly, we get
		\begin{equation}
			0 \leq P_{R B}(\pi)\,\Omega_{R^n B^n}\,P_{RB}(\pi)^{\dagger} \leq P_{R}(\pi)\rho_{R^n}P_{R}(\pi)^\dagger \otimes \IdentityMatrix_{B^n}
		\end{equation}
		since positivity is preserved under unitary conjugation, and we also used 
		\begin{equation}
			P_{R B}(\pi) = (\IdentityMatrix_{R^n} \otimes P_{B}(\pi))(P_{R}(\pi) \otimes \IdentityMatrix_{B^n})
		\end{equation}
		which is immediate from the definition. This again implies that
		\begin{equation}
			0 \leq \overline{\Omega_{R^n B^n}} \leq \overline{\rho_{R^n}} \otimes \IdentityMatrix_{B^n}\,.
		\end{equation}
		By the cyclicity of the trace, and the fact that         
		$\Gamma^{\E^{\otimes n}}_{R^n B^n} = (\Gamma^{\E}_{R B})^{\otimes n}$, we also see that 
		\begin{align}
			\Tr(\overline{\Omega_{R^nB^n}}\Gamma^{\E^{\otimes n}}_{R^n B^n}) &= \Tr(\Omega_{R^nB^n}\Gamma^{\E^{\otimes n}}_{R^n B^n}) \, , \\
			\Tr(\overline{\Omega_{R^nB^n}}\Gamma^{\F^{\otimes n}}_{R^n B^n}) &= \Tr(\Omega_{R^nB^n}\Gamma^{\F^{\otimes n}}_{R^n B^n})\,.
		\end{align}        
		Hence, also $(\overline{\Omega_{R^nB^n}}, \overline{\rho_{R^n}})$ are feasible for the optimization problem (i.e., they satisfy the constraints) and also achieve the exact same value as $(\Omega_{R^nB^n},\rho_{R^n})$ does.
		The group averages are elements of $\End[R^n B^n]$ and $\End[R^n]$ respectively, and hence we can restrict the optimization in~\eqref{eq:sdp_original} to such permutation invariant operators.
		
		While the dimension of the subspace of these permutation invariant operators is polynomial in $n$, this does not yet show computability in $\poly(n)$ time using standard SDP solvers, as the problem is not phrased using matrices of size $\poly(n)$. In order to rephrase our problem in this way we follow the approach of \cite{fawzi_hierarchy_2022}, where a similar statement has been shown for the sharp \renyi divergence $D_{\alpha}^{\#}(\E\n\|\F\n)$. The key idea is to construct a suitable basis of the permutation invariant subspaces and then rephrase the SDP as one in which we minimize over (now only $O(\poly(n))$ many) basis coefficients. 
		As explained in \cite[page~7352]{fawzi_hierarchy_2022} (see also \cite{de_klerk_reduction_2007, anjos_handbook_2012}), one can construct such an orthogonal basis $C_r^{\HS}$, $r \in \{ 1, \ldots, m_{\HS}\}$ of $\End[\HS\n]$ (where orthogonality is with respect to the Hilbert-Schmidt inner product, and $m_{\HS} = \dim \End[\HS\n]$) as follows:
		Let $\{\ket{i}\}_{i = 1}^{d_{\HS}}$ be an orthonormal basis of $\HS$. Then, for any multi-index $\mathbf{i} \in \{1, \ldots, d_{\HS}\}^n$ define the associated vector
		$\ket{\mathbf{i}} = \bigotimes_{k = 1}^n \ket{i_k}$ on the tensor-product system $\HS\n$, and it is immediate that the set of all these vectors forms an orthonormal basis of $\HS^{\otimes n}$. 
		For any pair of such multi-indices $(\mathbf{i},\mathbf{j})$ -- this pair should be thought of as indexing a matrix element of an operator on $\HS\n$ -- we write the group orbit of these indices under the action of $\Sn$ as 
		\begin{equation}
			O(\mathbf{i},\mathbf{j}) \coloneqq \Set{(\pi(\mathbf{i}), \pi(\mathbf{j})) | \pi \in \Sn }\,,
		\end{equation}
		where $\pi(\mathbf{i})$ permutes the components of $\mathbf{i}$, i.e., $\pi(\mathbf{i})_k = \mathbf{i}_{\pi^{-1}(k)}$ for $k \in \{ 1, \ldots, n\}$.
		There are exactly $m_{\HS} = \dim \End[\HS\n]$ such group orbits, which we will label as $O_r^{\HS}$, $r \in \{ 1, \ldots, m_{\HS}\}$, and a representative element of each such orbit (i.e., a pair $(\mathbf{i}, \mathbf{j}) \in O_r^{\HS}$) can be efficiently computed given $r$, as shown in \cite[page~7352]{fawzi_hierarchy_2022}. This corresponds to the intuition that for $A \in \End[\HS\n]$ we require $A_{\mathbf{i}\mathbf{j}} = A_{\pi(\mathbf{i})\pi(\mathbf{j})}$ for any $\pi \in \Sn$; hence the matrix elements of $A$ have to be constant on each group orbit, and so the number of group orbits is equal to the dimension of $\End[\HS\n]$. 
		The basis elements $C_r^{\HS} \in \End[\HS\n]$ are now defined by specifying their matrix elements as
		\begin{equation}
			(C_r^{\HS})_{\mathbf{i}\mathbf{j}} \coloneqq \begin{cases} 1 & \text{if } (\mathbf{i},\mathbf{j}) \in O_r^{\HS} \\ 0 & \text{otherwise.} \end{cases} \qquad r = 1, \ldots, m_{\HS}. 
		\end{equation}
		It follows immediately from the definition that the $C_r^{\HS}$ are orthogonal, and since we have $m_{\HS} = \dim \End[\HS\n]$ of them, they form a basis.
		
		As explained in \cite[page~7353]{fawzi_hierarchy_2022}, for any matrix $A^{\otimes n} \in \End[\HS\n]$, its coefficients with respect to the basis $C_r^{\HS}$ can be computed straightforwardly from a description of $A$ by picking a representative of each orbit, and these coefficients are hence computable in $\O(\poly(n))$ time. Specifically, one can show that for any $r$ and any pair of indices in the group orbit $(\mathbf{i},\mathbf{j}) \in O^{\HS}_r$, the corresponding basis coefficient is given by
		\begin{equation}
			\gamma_r \coloneqq \prod_{k = 1}^n A_{\mathbf{i}_k \mathbf{j}_k}\, ;
		\end{equation}
		i.e., we get
		\begin{equation}
			A\n = \sum_{r=1}^{m_{\HS}} \gamma_r C^{\HS}_r\,.
		\end{equation}
		This can be applied to the Choi matrix $\Gamma^{\E^{\otimes n}}_{R^n B^n} = (\Gamma^{\E}_{R B})^{\otimes n}$, and hence we write
		\begin{equation}\label{eq:sdp_choi_coefficient}
			(\Gamma^{\E}_{RB})\n = \sum_{r = 1}^{m_{RB}} \gamma^{\E}_r C_r^{RB}
		\end{equation}
		and similarly when $\E$ is replaced by $\F$. 
		
		Additionally, for any finite-dimensional Hilbert space $\HS$, there exists a $*$-algebra isomorphism from $\End[\HS\n]$ to block-diagonal matrices \cite[Theorem 1]{gijswijt_matrix_2005}
		\begin{equation}\label{eq:sdp_star_isomorphism}
			\phi_{\HS}: \End[\HS\n] \to \bigoplus_{i = 1}^{t_{\HS}} \complex^{m_i \cross m_i}\,,
		\end{equation}
		where
		\begin{align}
			t_{\HS} &\leq (n + 1)^{d_{\HS}}, \\
			d_{\HS} &= \dim (\HS), \\
			\sum_{i = 1}^{t_{\HS}} m_i^2 &= \dim(\End[\HS\n]) \leq (n + 1)^{d_{\HS}^2}\,. \label{eq:sdp_star_isomorphism_coefficients}
		\end{align}
		Introducing the notation $M_{\HS} \coloneqq \sum_{i = 1}^{t_{\HS}} m_i$, for any $A \in \End[\HS\n]$ we have that $\phi_{\HS}(A) \in \complex^{M_{\HS} \cross M_{\HS}}$, and we write
		$\llbracket \phi_{\HS}(A) \rrbracket_i$ for the $i$-th block of $\phi_{\HS}(A)$. Crucially, by \cite[Lemma 3.3]{fawzi_hierarchy_2022}, \cite[Prop. 2.4.4]{polak_new_2019}, 
		\begin{equation}\label{eq:end_positivity}
			A \geq 0 \Leftrightarrow \phi_{\HS}(A) \geq 0 \Leftrightarrow \llbracket \phi_{\HS}(A) \rrbracket_i \geq 0 \; \forall i \in \{ 1, \ldots, t_{\HS} \} \,.
		\end{equation}
		From \cite[Theorem 1]{gijswijt_matrix_2005} it also follows that $\phi_{\HS}$ preserves orthogonality; i.e., if $\Tr(A^{\dagger} B) = 0$ then also $\Tr(\phi_{\HS}(A)^{\dagger} \phi_{\HS}(B)) = 0$.
		Remember that $\{C_r^{\HS}\}_{r \in \{1, \ldots, m_{\HS}\}}$ is a basis for $\End[\HS\n]$, and hence we can expand $\Omega_{R^n B^n} \in \End$ and $\rho_{R^n} \in \End[R^n]$ as follows:
		\begin{align}
			\Omega_{R^n B^n} &= \sum_{r=1}^{m_{RB}} y_r C_r^{RB}, \\
			\rho_{R^n} &= \sum_{r = 1}^{m_R} z_r C_r^{R}\,,
		\end{align}
		where $\{y_r \in \complex\}_{r=1}^{m_{RB}}$ and $ \{z_r \in \complex\}_{r=1}^{m_{R}}$ are the respective basis coefficients. Note that since the $C_r$ are not necessarily Hermitian, the coefficients are not necessarily real.
		Since optimizing over elements is obviously equivalent to optimizing over their basis coefficients, we can rephrase our SDP as follows:
		\begin{mini}|l|[1]
			{\{y_r\}_{r=1}^{m_{RB}}, \{z_r\}_{r=1}^{m_{R}}}{\sum_{r=1}^{m_{RB}} y_r (\gamma_r^{\F})^* \Tr((C_r^{RB})^\dagger C_r^{RB})}
			{}{}
			\label{eq:sdp_reduced}
			\addConstraint{\sum_{r=1}^{m_{RB}} y_r (\gamma_r^{\E})^{*} \Tr((C_r^{RB})^\dagger C_r^{RB}) \geq 1 - \varepsilon}
			\addConstraint{\sum_{r=1}^{t_R}z_r \Tr(C_r^R)=1\,}{}
			\addConstraint{0 \leq \sum_{r = 1}^{m_{RB}} y_r\llbracket\phi_{RB}(C_r^{RB})\rrbracket_i }
			\\&\quad &&\phantom{0} \leq \sum_{r=1}^{m_R} z_r \llbracket \phi_{RB}(C_r^{R} \otimes I_{B^n}) \rrbracket_i\quad \forall i \in \{1, \ldots, t_{RB}\}
		\end{mini}
		where we also used \eqref{eq:end_positivity} and \eqref{eq:sdp_choi_coefficient}.
		
		We will show that this is an SDP in $\poly(n)$ variables with $\poly(n)$ constraints by casting it into standard form.  
		\DeclarePairedDelimiter\hs{\langle}{\rangle}
		To simplify notation we will write the Hilbert-Schmidt inner product using $\hs{\cdot, \cdot}$, i.e., $\hs{A, B} \coloneqq \Tr(A^\dagger B)$ and also write $M = M_{RB}$ (i.e., $M = M_{\HS}$ as below \eqref{eq:sdp_star_isomorphism_coefficients} with $\HS = \HS_R \otimes \HS_B$). For $s \in \complex$, consider the following $(2 M + 1) \cross (2M + 1)$ block-diagonal matrix
		\newcommand{\diag}{\mathrm{diag}}
		\begin{multline}
			X \coloneqq \qty[\sum_{r = 1}^{m_{RB}} y_r\phi_{RB}(C_r^{RB})]  \\ \oplus \qty[\sum_{r'=1}^{m_R} z_{r'} \phi_{RB}(C_{r'}^{R} \otimes I_{B^n}) - \sum_{r = 1}^{m_{RB}} y_r\phi_{RB}(C_r^{RB})] \oplus s \label{eq:sdp_def_standard_form}
		\end{multline}
		where $s$ is a slack variable that turns the one inequality constraint into an equality constraint (see further below). 
		From here on, we write $(\,\cdot\,) \oplus (\,\cdot\,) \oplus (\,\cdot\,)$ to specify a block-diagonal matrix with block sizes equal to the ones in \eqref{eq:sdp_def_standard_form}.
		The constraint $X \geq 0$ now implies 
		\begin{equation}
			0 \leq \sum_{r = 1}^{m_{RB}} y_r\llbracket\phi_{RB}(C_r^{RB})\rrbracket_i \leq \sum_{r=1}^{m_R} z_r \llbracket \phi_{RB}(C_r^{R} \otimes I_{B^n}) \rrbracket_i
		\end{equation}
		for $i = 1, \ldots, t_{RB}$.
		Additionally, since $\phi_{RB}$ preserves orthogonality, we can recover the coefficients $y_r$ and $z_r$ from $X$ by taking inner products with suitable operators. Specifically, with
		\begin{align}
			\tilde{Y}_r &\coloneqq \phi_{RB}(C_r^{RB}) \oplus 0 \oplus 0\, , \\
			Y_r &\coloneqq \frac{\tilde{Y}_r}{\hs{\tilde{Y}_r, \tilde{Y}_r}} \, ,
		\end{align}
		for $r \in \{1, \ldots, m_{RB}\}$, and with
		\begin{align}
			\tilde{Z}_r &\coloneqq \phi_{RB}(C_r^{R} \otimes \IdentityMatrix_{B^n}) \oplus \phi_{RB}(C_r^R \otimes \IdentityMatrix_{B^n}) \oplus 0 \, ,\\
			Z_r &\coloneqq \frac{\tilde{Z}_r}{\hs{\tilde{Z}_r, \tilde{Z}_r}}\, ,
		\end{align}
		for $r \in \{ 1, \ldots, m_{R}\}$, we have 
		\begin{align}
			y_r & = \hs{Y_r, X}, \\
			z_r & = \hs{Z_r,  X}.
		\end{align}
		Hence we can rephrase the expressions in \eqref{eq:sdp_reduced} as inner products with $X$. Specifically, 
		\begin{multline}
			\sum_{r=1}^{m_{RB}} y_r (\gamma_r^{\E})^* \Tr((C_r^{RB})^\dagger C_r^{RB}) = \\ \hs*{\sum_{r=1}^{m_{RB}} Y_r (\gamma_r^{\E})^* \Tr((C_r^{RB})^\dagger C_r^{RB}), X}
		\end{multline}
		and one can do the same with $\E$ replaced by $\F$.
		
		We want to transform our SDP into one where we optimize over $X$, and for that we need to impose the necessary block-diagonal structure of $X$ (including the block-diagonal substructure that comes from its parts being images of $\phi_{RB}$). For this, consider the following linear space 
		\begin{multline}
			\Big\{\,\qty[\phi_{RB}(\Omega)] \oplus \qty[\phi_{RB}(\rho \otimes \IdentityMatrix_{B^n}) - \phi_{RB}(\Omega)] \oplus s \;\Big|\; \\ \Omega \in \End,\, \rho \in \End[R^n],\, s \in \complex\,\Big\} \label{eq:sdp_matrix_subspace}
		\end{multline}
		and define $\mathcal{A}$ to be a set of matrices that form a basis of the orthogonal complement of this linear space, where we take the orthogonal complement within $\complex^{(2M + 1) \cross (2M + 1)}$. It is then easy to see that $|\mathcal{A}| \leq \dim(\complex^{(2M + 1) \cross (2M + 1)}) = (2M + 1)^2 = \O(\poly(n))$.
		
		With $S \coloneqq (0 \,\oplus\, 0 \,\oplus\, 1)$, we can then introduce the following SDP
		\begin{mini}|l|[1]
			{X}{\hs*{\sum_{r=1}^{m_{RB}} Y_r (\gamma_r^{\F})^* \Tr((C_r^{RB})^\dagger C_r^{RB}), X}}{}{}
			\label{eq:sdp_standard_form}
			\addConstraint{\hs*{\qty(\sum_{r=1}^{m_{RB}} Y_r (\gamma_r^{\E})^{*} \Tr((C_r^{RB})^\dagger C_r^{RB})) - S, X} = 1 - \varepsilon}{}
			\addConstraint{X \in \complex^{(2M + 1) \cross (2M + 1)}}
			\addConstraint{X \geq 0}{}
			\addConstraint{\hs*{\sum_{r=1}^{t_R}Z_r \Tr(C_r^R), X} = 1}{}
			\addConstraint{\hs{A, X} = 0 \qquad \forall A \in \mathcal{A}}
		\end{mini}
		which is in standard form with $\O(\poly(n))$ many constraints and matrices of size $\O(\poly(n))$.
		
		This new SDP is equivalent to \eqref{eq:sdp_reduced}, which can be seen as follows: From \eqref{eq:sdp_def_standard_form} and the calculations thereafter it follows that for every feasible $(\{y_r\}_{r = 1}^{m_{RB}}, \{z_r\}_{r=1}^{m_R})$ in \eqref{eq:sdp_reduced} there is a corresponding feasible $X$ in \eqref{eq:sdp_standard_form} which achieves the same value. 
		Conversely, every $X$ that satisfies the constraints of \eqref{eq:sdp_standard_form} has to lie in the subspace \eqref{eq:sdp_matrix_subspace}, and it is then immediate that it can be represented as in \eqref{eq:sdp_def_standard_form} for suitable $y_r$ and $z_r$. These $y_r$ and $z_r$ then also achieve the same value in \eqref{eq:sdp_reduced}  as $X$ did in \eqref{eq:sdp_standard_form}.  
		
		It now only remains to show that we can also efficiently calculate all that is required to parameterize this SDP. It was shown in \cite[pages 7352-7353]{fawzi_hierarchy_2022} (see also \cite{litjens_semidefinite_2017}) that the $\llbracket \phi_{\HS}(C_r^{\HS}) \rrbracket_i$ (and hence also the $\phi_{\HS}(C_r^{\HS})$) can be computed in $\poly(n)$ time. As $I_{B^n} = I_B^{\otimes n}$, the coefficients of $I_{B^n}$ with respect to the $\{C_r^{B}\}$ can be computed efficiently just as for the Choi matrices above, and additionally, from the construction of the $C_r^{\HS}$, there is an obvious one-to-one mapping $C_r^R \otimes C_{r'}^B = C_{r''}^{RB}$; hence also $\phi_{RB}(C_r^{R} \otimes I_{B^n})$ is efficiently computable. This gives a generating set of the subspace \eqref{eq:sdp_matrix_subspace} and a basis for its orthogonal complement is then also computable in $\O(\poly(n))$ time. 
		Finally, from the construction of the $C_r^{\HS}$, it is immediate that the norm $\Tr((C_r^{\HS})^\dagger C_r^{\HS})$ is given by the size of the group orbit, which is equal to the number of distinct permutations of an element $(\mathbf{i}, \mathbf{j})$ in the orbit. Concretely, if $\ell \in \{1, \ldots, d_{\HS}^2\}$ indexes all pairs of single-system indices $(i,j)$, $i,j \in \{1, \ldots, d_{\HS}\}$, then given $(\mathbf{i}, \mathbf{j}) \in O_r^{\HS}$, let $K_{\ell}$ denote the number of occurrences of $\ell = (i,j)$ in the multi-index $(\mathbf{i}, \mathbf{j})$, i.e., the number of different $k$ values ($k \in \{1, \ldots, n\}$) for which $\mathbf{i}_k = i$, $\mathbf{j}_k = j$. The size of the group orbit is then given by
		\begin{equation}
			\Tr((C^{\HS}_r)^\dagger C^{\HS}_r) = |O_r^{\HS}| = \binom{n}{K_1, \ldots, K_{d_{\HS}^2}}
		\end{equation}
		and this multinomial coefficient can be calculated in time $\O(d_{\HS}^2)$; see, e.g., \cite{araujo_fast_2021}. 
		
		Also, $\Tr(C_r^R)$ can be efficiently computed by picking a representative $(\mathbf{i}, \mathbf{j})$ of the orbit $O_r^{R}$ and counting the number of different $k$ where $\mathbf{i}_k = \mathbf{j}_k$.
		Thus, we can compute all the coefficients in our SDP in $\poly(n)$ time. 
		An SDP with $\poly(n)$ variables and $\poly(n)$ constraints can be solved up to an additive error $\delta'$ in time $\mathcal{O}(\poly(n) \log(1/\delta'))$; see, e.g., \cite{jiang_faster_2020}.
		The final step in our proof is to pick a suitable $\delta'$ so that we can solve our original problem \eqref{eq:computability_result} up to an additive error~$\delta$.
		For any $n$ we write the solution for our SDP as
		\begin{equation}
			R_n \coloneqq 2^{-D_H^{\varepsilon}(\E^{\otimes n}\|\F^{\otimes n})}\,.
		\end{equation}
		Let us start by showing that $- \log(R_n) = \O(n)$ as $n \to \infty$.
		By \autoref{lem:anshu_min_max}, we have for all $\gamma \in (0, 1 - \varepsilon)$ that
		\begin{align}
			&D_H^{\varepsilon}(\E\n\|\F\n) \notag \\
			&\leq D_{\max}^{\sqrt{1 - \varepsilon - \gamma}}(\E\n\|\F\n) + \log(4 (\varepsilon + \gamma) \over \gamma^2) \\ 
			& \leq D_{\max}(\E\n\|\F\n) + \log(4 (\varepsilon + \gamma) \over \gamma^2) \\ &= n D_{\max}(\E\|\F) + \log(4 (\varepsilon + \gamma) \over \gamma^2) = \O(n)\,.
		\end{align}
		Note that the case $\varepsilon = 0$ (while not directly covered by \autoref{lem:anshu_min_max}) also follows immediately by first using that $D_H^0(\E\|\F) \leq D_H^{\varepsilon'}(\E\|\F)$ for all $\varepsilon' \in (0,1)$.
		
		Now, for any given $\delta > 0$, let us pick $\delta' \coloneqq (2^{\delta} - 1) R_n$. Remember that $\delta'$ was the additive error we make when solving the SDP \eqref{eq:sdp_reduced}, and this error then propagates to our original problem via
		\begin{align}
			-\frac{1}{n}\log(R_n + \delta') &= -\frac{1}{n} \log(R_n 2^{\delta}) \\
			&= -\frac{1}{n} \log(R_n) - \frac{\delta}{n} \\
			&= \frac{1}{n} D_H^{\varepsilon}(\E\n\|\F\n) - \frac{\delta}{n}
		\end{align}
		and hence this leads to an additive error of $\delta/n \leq \delta$ for the original problem.\footnote{This last inequality might seem very far from optimal, and if one wants to make further assumptions on how $n \delta$ behaves as $n \to \infty$, $\delta \to 0$ our runtime bounds can indeed be improved; however, this is not required for what we desire to show. Also, the signs here are chosen as the errors appear in practice: the SDP \eqref{eq:sdp_reduced} is a minimization, so any numerical approximation will be larger than the true value and the additive error hence positive, which then translates to a value smaller than the true value for our original problem \eqref{eq:computability_result}.}
		It is easy to see that $\log(2^\delta - 1) = \log(\delta) + \O(1)$ as $\delta \to 0$, which implies $\log(1/\delta') = -\log(R_n) - \log(2^\delta - 1) = \O(n) + O(\log(1/\delta)) = O(n \log(1/\delta))$ and hence our original problem can be calculated up to an error $\delta$ in time $\O(\poly(n) \log(1/\delta')) = \O(\poly(n) \log(1/\delta))$. 
	\end{proof}
	

	\subsection{A Simple One-Shot Version of the Chain Rule}
	
	The remainder of this section proves \autoref{thm:seq_par}. The general idea of the proof is the following: We will start by moving from the hypothesis testing relative entropy of the adaptive strategy to the Umegaki relative entropy, using \autoref{lem:wang_renner_upper_bound}. Then, we will see that by using an amortization argument (equations \eqref{eq:amort-1}--\eqref{eq:amort-last} below), we can bound the performance of the adaptive strategy by the performance of just one of its steps, the step where the distinguishability of the two states (that occur if the channel is either $\E$ or $\F$) increases the most. We then consider $m$ parallel copies of this step and construct a parallel input state using a chain rule for the smoothed max-relative entropy (\autoref{lem:chain_rule_simple}); see \autoref{fig:construction_parallel_input_state} below for an illustration of this step. The smoothed max-relative entropies can be related to the Umegaki relative entropies we used in the amortization argument by the non-asymptotic bounds presented in \autoref{lem:aep_finite_n}. Finally we connect to the hypothesis testing relative entropy of a parallel protocol using \autoref{lem:anshu_min_max}.  
	
	\begin{figure}[ht]
		\centering
		\includegraphics[width=\linewidth]{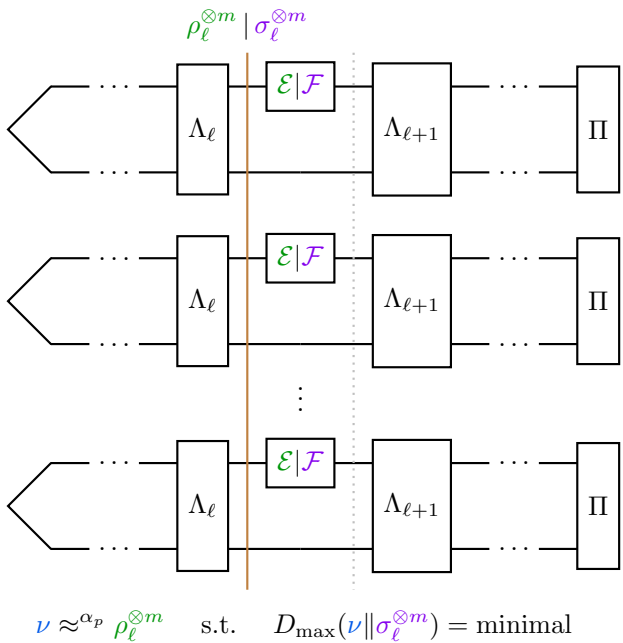}
		\caption{Illustration of a key step in our proof, the construction of the parallel input state. We start by picking a single step $\ell \in \{1, \ldots, n\}$ out of the adaptive protocol, where the distinguishability increase $D(\E(\rho_\ell)\|\F(\sigma_\ell)) - D(\rho_\ell\|\sigma_\ell)$ is maximal (this corresponds to the step from the orange to the dotted grey line in the diagram). Now consider $m$ copies of the adaptive strategy in parallel. We construct our parallel input state $\nu$ starting from $m$ copies of the input state of the adaptive strategy at this step $\ell$ if the channel was $\E$ (this is $\rho_\ell^{\otimes m})$. The state $\nu$ is then smoothed a bit to reduce its distance to $\sigma_\ell^{\otimes m}$ (which is the input state that we would have if the channel was $\F$). The degree to which we smooth depends on the type~I error $\alpha_p$ we want to achieve with the parallel strategy. Having a small type~I error means that the state~$\nu$ is very close to $\rho_\ell^{\otimes m}$, whereas allowing for a larger type~I error will move the state closer to $\sigma_\ell^{\otimes m}$.}
		\label{fig:construction_parallel_input_state}
	\end{figure}

	This subsection and the following one present lemmas we will use as part of our proof of \autoref{thm:seq_par}.
	
	The following is a variant of \cite[Prop.~3.2]{fang_chain_2020}, where our different convention of smoothing the max-divergence (smoothing only over normalized states) leads to a tighter and simpler bound, and restricting to the case of a single input system also makes things a bit simpler.
	\begin{lemma}\label{lem:chain_rule_simple}
		Let $\E$ and $\F$ be arbitrary quantum channels from system $A$ to system $B$, and let $\rho \in \DM[A]$, $\sigma \in \pos[A]$. Then for all $\varepsilon, \varepsilon' \in [0,1]$ there exists a state $\nu \in B_\varepsilon^\circ(\rho)$ such that
		\begin{equation}\label{eq:chain_rule_simple}
			D^{\varepsilon + \varepsilon'}_{\max}(\E(\rho)\|\F(\sigma)) \leq  D^{\varepsilon}_{\max}(\rho\|\sigma) + D^{\varepsilon'}_{\max}(\E(\nu)\|\F(\nu))\,.
		\end{equation}
		Moreover, $\nu$ can be chosen as
		\begin{equation}
			\nu = \argmin_{\tilde{\rho} \in B^\circ_{\varepsilon}(\rho)} D_{\max}(\tilde{\rho}\|\sigma) \,.
		\end{equation}
	\end{lemma}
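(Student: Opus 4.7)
The plan is to combine three ingredients: the operator-inequality definition of $D_{\max}$, the monotonicity of completely positive maps on the positive cone, and the triangle inequality plus data-processing of the sine distance. The whole argument proceeds by chaining two operator inequalities in the order dictated by the right-hand side of \eqref{eq:chain_rule_simple}, and then separately verifying the resulting state lies in the correct smoothing ball.

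First I would fix $\nu$ as in the lemma statement, namely the optimizer of the smoothed max-divergence $D^\varepsilon_{\max}(\rho\|\sigma)$ over $B^\circ_\varepsilon(\rho)$. By the very definition of $D_{\max}$ this yields an operator inequality $\nu \leq 2^{D^\varepsilon_{\max}(\rho\|\sigma)}\sigma$, which I can push through the completely positive map $\F$ to get $\F(\nu) \leq 2^{D^\varepsilon_{\max}(\rho\|\sigma)}\F(\sigma)$. Next I would choose a minimizer $\mu \in B^\circ_{\varepsilon'}(\E(\nu))$ attaining $D^{\varepsilon'}_{\max}(\E(\nu)\|\F(\nu))$, which gives a second operator inequality $\mu \leq 2^{D^{\varepsilon'}_{\max}(\E(\nu)\|\F(\nu))}\F(\nu)$. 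Multiplying these two operator inequalities together and taking logarithms immediately yields
\begin{equation*}
D_{\max}(\mu\|\F(\sigma)) \leq D^\varepsilon_{\max}(\rho\|\sigma) + D^{\varepsilon'}_{\max}(\E(\nu)\|\F(\nu)).
\end{equation*}

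It then remains to check that $\mu$ is an admissible smoothing of $\E(\rho)$ at radius $\varepsilon+\varepsilon'$. Here I would use the triangle inequality for the sine distance together with its data-processing inequality under $\E$ to write
\begin{equation*}
P(\mu,\E(\rho)) \leq P(\mu,\E(\nu)) + P(\E(\nu),\E(\rho)) \leq \varepsilon' + P(\nu,\rho) \leq \varepsilon'+\varepsilon,
\end{equation*}
so $\mu \in B^\circ_{\varepsilon+\varepsilon'}(\E(\rho))$ and the previous bound on $D_{\max}(\mu\|\F(\sigma))$ upgrades, via the definition of the smoothed max-divergence as an infimum, to the desired bound on $D^{\varepsilon+\varepsilon'}_{\max}(\E(\rho)\|\F(\sigma))$. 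Normalization of $\mu$ is automatic because the smoothing is taken over the normalized ball $B^\circ_{\varepsilon'}$, which is precisely the simplification over \cite{fang_chain_2020} alluded to in the lemma statement.

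I do not expect any real obstacle here: the argument is essentially mechanical once one insists on smoothing $\E(\nu)$ rather than $\E(\rho)$ in the second term, since only then do the two operator inequalities chain to produce an inequality against $\F(\sigma)$. The only minor points requiring care are that the $\argmin$ defining $\nu$ (and $\mu$) is attained — which holds because $B^\circ_\varepsilon(\rho)$ is compact in sine distance and $D_{\max}(\cdot\|\sigma)$ is lower semicontinuous — and that working exclusively with normalized states removes the bookkeeping that complicates the analogous argument for sub-normalized smoothing.
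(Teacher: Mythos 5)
Your proposal is correct and follows essentially the same route as the paper's proof: take $\nu$ to be the optimizer of $D^{\varepsilon}_{\max}(\rho\|\sigma)$, push the operator inequality through $\F$, chain it with the operator inequality for the optimizer of $D^{\varepsilon'}_{\max}(\E(\nu)\|\F(\nu))$, and verify membership in $B^{\circ}_{\varepsilon+\varepsilon'}(\E(\rho))$ via the triangle and data-processing inequalities for the sine distance. The only cosmetic caveat is that ``multiplying'' the two operator inequalities really means scaling one by a positive constant and using transitivity of $\leq$, which is exactly what the paper does.
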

	
	\begin{proof}
		Let $\nu \in B^\circ_\varepsilon(\rho)$ be an optimal choice for $D_{\max}^\varepsilon(\rho\|\sigma)$; i.e.,
		\begin{equation}
			\nu \leq 2^{D_{\max}^\varepsilon(\rho\|\sigma)} \sigma .
		\end{equation}
		Since $\F$ is a positive map, this implies that
		\begin{equation}
			\F(\nu)  \leq 2^{D_{\max}^\varepsilon(\rho\|\sigma)} \F(\sigma)\,.
		\end{equation}
		Furthermore, let $\tau \in B^\circ_{\varepsilon'}(\E(\nu))$ be an optimal choice for $D_{\max}^{\varepsilon'}(\E(\nu)\|\F(\nu))$, so that
		\begin{equation}
			\tau \leq 2^{D_{\max}^{\varepsilon'}(\E(\nu)\|\F(\nu))} \F(\nu)\,.
		\end{equation}
		Combining the last two inequalities leads to
		\begin{equation}
			\tau \leq 2^{D_{\max}^{\varepsilon'}(\E(\nu)\|\F(\nu)) + D_{\max}^\varepsilon(\rho\|\sigma)} \F(\sigma)\,.
		\end{equation}	
		It remains to show that $P(\tau, \E(\rho)) \leq \varepsilon + \varepsilon'$ (where $P$ is the sine distance). 
		This follows from
		\begin{align}
			P(\tau, \E(\rho)) &\leq P(\tau, \E(\nu)) + P(\E(\nu), \E(\rho)) \\
			& \leq \varepsilon' + P(\nu,\rho) \\
			&\leq \varepsilon' + \varepsilon\,,
		\end{align}
		where we used the triangle inequality and the data-processing inequality for the sine distance (see, e.g., \cite{khatri_principles_2020}). 
	\end{proof}

	\subsection{Non-Asymptotic Bounds for the Smoothed Max-Relative Entropy}
	
	The asymptotic equipartition property for the smoothed max-relative entropy  states that \cite{tomamichel_fully_2009}
	\begin{equation}
		\lim_{n \to \infty} \frac{1}{n} D^{\varepsilon}_{\max}(\rho^{\otimes n}\|\sigma^{\otimes n}) = D(\rho\|\sigma)\quad \forall \varepsilon \in (0,1) \,.
	\end{equation}
	For our proof of \autoref{thm:seq_par}, we will require a bound on the speed of this convergence. Results in this direction have appeared before in the literature; however, none of the previous results turn out to be directly applicable for our purposes (see \autoref{remark:relation_to_tomamichel} and \autoref{remark:relation_to_second_order} below). To prove our result we follow the established path of using quantum \renyi divergences. One key ingredient is a bound on the distance between the Petz--\renyi divergence and relative entropy, where the following is a slight refinement of \cite[Lemma 6.3]{tomamichel_framework_2012}:
	\begin{lemma}\label{lem:renyi_alpha_continuity}
		Let $\rho, \sigma \in \DM$ be quantum states. For $\gamma \in (0,1]$, define
		\begin{equation}\label{eq:c_gamma_states}
			c_\gamma(\rho\|\sigma) \coloneqq \frac{1}{\gamma} \log( 2^{\gamma D_{1 + \gamma} (\rho\|\sigma)} + 2^{- \gamma D_{1 - \gamma}(\rho\|\sigma)} + 1)\,.
		\end{equation}
		Then, for all $\gamma \in (0, 1]$ and  $\delta \in (0, \frac{\gamma}{2}]$:
		\begin{align}\label{eq:renyi_upper_continuity}
			D_{1 + \delta}(\rho\|\sigma) &\leq D(\rho\|\sigma) +  \ln(2) \delta(c_\gamma(\rho\|\sigma))^2 \\
			&\leq D(\rho\|\sigma) + \delta (c_\gamma(\rho\|\sigma))^2  . \label{eq:renyi_upper_continuity_weaker}
		\end{align}
		Furthermore, if $D(\rho\|\sigma) < \infty$, then for all $\gamma \in (0, 1]$ and  $\delta \in(0, \frac{\gamma}{2}]$ 
		\begin{multline}
			D_{1 - \delta}(\rho\|\sigma) \geq D(\rho\|\sigma) \\- \ln(2) \delta (c_\gamma(\rho\|\sigma))^2 \cosh(\ln(2) \delta  c_\gamma(\rho\|\sigma)) \,.
		\end{multline}
		and for all $\delta \in (0, \frac{\log 3}{2c_\gamma(\rho\|\sigma)}]$:
		\begin{align}\label{eq:renyi_lower_continuity}
			D_{1 - \delta}(\rho\|\sigma) &\geq D(\rho\|\sigma) - \ln(2) \cosh(\log(3)/2) \delta (c_\gamma(\rho\|\sigma))^2 \\
			&\geq D(\rho\|\sigma) - \delta (c_\gamma(\rho\|\sigma))^2 \,.
		\end{align}
	\end{lemma}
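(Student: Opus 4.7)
The plan is to Taylor-expand $f(\alpha) \coloneqq \log_2 Q_\alpha$ around $\alpha = 1$, where $Q_\alpha \coloneqq \Tr(\rho^\alpha \sigma^{1-\alpha})$. Because $\rho$ is a state we have $f(1) = 0$, $f'(1) = D(\rho\|\sigma)$, and $(\alpha - 1) D_\alpha(\rho\|\sigma) = f(\alpha)$. Taylor's theorem with Lagrange remainder produces $\alpha_+^* \in (1, 1 + \delta)$ and $\alpha_-^* \in (1 - \delta, 1)$ with
\begin{equation*}
    D_{1\pm\delta}(\rho\|\sigma) - D(\rho\|\sigma) = \pm\tfrac{\delta}{2}\, f''(\alpha_\pm^*),
\end{equation*}
so every inequality in the lemma reduces to a pointwise upper bound on $f''$ on a sub-interval of $[1-\gamma, 1+\gamma]$.

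The key estimate is $f''(\alpha) \leq 2 \ln(2)\, c_\gamma(\rho\|\sigma)^2$ on this interval; we follow the same route as \cite[Lemma 6.3]{tomamichel_framework_2012} but replace the operator-norm quantity appearing there by the sharper \renyi quantity $c_\gamma$. Direct differentiation of $Q_\alpha$, using $[\rho^\alpha,\ln\rho] = [\sigma^{1-\alpha},\ln\sigma] = 0$, gives the sandwich $Q''_\alpha = \Tr(\rho^\alpha L\, \sigma^{1-\alpha} L)$ with $L \coloneqq \ln\rho - \ln\sigma$; in the commutative case $(\ln 2) f''(\alpha)$ is literally the variance of $L$ under the probability measure $\rho^\alpha\sigma^{1-\alpha}/Q_\alpha$. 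The central scalar step is the elementary inequality
\begin{equation*}
    y^2 \leq \tfrac{1}{\gamma^2}\log^2\!\big(e^{\gamma y} + e^{-\gamma y} + 1\big) \qquad \forall y \in \reals,\, \gamma > 0,
\end{equation*}
which follows from $e^{\gamma y} + e^{-\gamma y} + 1 \geq e^{\gamma|y|}$. Applied to $L$ and combined with the identification $\Tr(\rho^\alpha e^{\pm\gamma L}\sigma^{1-\alpha}) = Q_{\alpha \pm \gamma}$, together with the log-convexity of $\alpha \mapsto \log Q_\alpha$ and $Q_1 = 1$ (which bounds the ratios $Q_{\alpha_+^* \pm \gamma}/Q_{\alpha_+^*}$ by $2^{\gamma D_{1+\gamma}}$ and $2^{-\gamma D_{1-\gamma}}$ respectively), the inequality collapses to $f''(\alpha_+^*) \leq 2 \ln(2)\, c_\gamma^2$. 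This proves \eqref{eq:renyi_upper_continuity}; \eqref{eq:renyi_upper_continuity_weaker} is then immediate from $\ln 2 < 1$.

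For the lower bound on $D_{1-\delta}$ the same Taylor expansion applies with $\alpha_-^* \in (1-\delta, 1)$. Now the shifted points $\alpha_-^* \pm \gamma$ sit off-centre by up to $\delta$ relative to the symmetric reference interval around $\alpha = 1$, so bounding $Q_{\alpha_-^* \pm \gamma}/Q_{\alpha_-^*}$ in terms of $Q_{1\pm\gamma}$ picks up a multiplicative factor of $2^{\pm (\ln 2)\delta c_\gamma}$; their symmetric combination is $\cosh((\ln 2)\delta c_\gamma)$, which is the origin of the refined $\cosh$ factor in the stronger lower bound. The simpler form \eqref{eq:renyi_lower_continuity} then follows by noting that $\delta \leq \log(3)/(2c_\gamma)$ keeps $\cosh((\ln 2)\delta c_\gamma) \leq \cosh(\log(3)/2)$, and the final weaker line uses the numerical bound $\ln(2) \cosh(\log(3)/2) \leq 1$.

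The main technical obstacle is the scalar-to-operator lift. Because $\ln\rho$ and $\ln\sigma$ do not commute, the identification $\Tr(\rho^\alpha e^{\pm\gamma L}\sigma^{1-\alpha}) = Q_{\alpha\pm\gamma}$ used above is \emph{not} literally true in general; in the non-commutative case one must expand $L$ via a pinching argument in the simultaneous spectral structure of $\rho^\alpha$ and $\sigma^{1-\alpha}$ (or equivalently use a Lieb--Araki-type estimate) so that the scalar inequality for $y^2$ can be applied entry-wise. This step is carried out in detail in the proof of \cite[Lemma 6.3]{tomamichel_framework_2012} and transfers to the present setting; the sole novelty here is the choice of the tighter scalar inequality $y^2 \leq \log^2(e^{\gamma y} + e^{-\gamma y} + 1)/\gamma^2$ together with careful tracking of the constants $\ln 2$ and $\cosh(\log(3)/2)$, which produces the refinement over Tomamichel's bound stated in the lemma.
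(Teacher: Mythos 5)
Your two core steps are asserted rather than proved, and one of them runs against convexity. Reducing everything to a Lagrange point $\alpha^*_\pm$ means you must bound the tilted variance $f''(\alpha^*_\pm)$, i.e.\ moments of $\ln X$ under the measure proportional to $p_k t_k^{\alpha^*-1}$, \emph{not} under the original one. Your claimed ratio bounds $Q_{\alpha^*\pm\gamma}/Q_{\alpha^*}\le 2^{\gamma D_{1+\gamma}},\,2^{-\gamma D_{1-\gamma}}$ do not follow from log-convexity of $\alpha\mapsto\log Q_\alpha$ and $Q_1=1$: since increments of a convex function over intervals of fixed length grow as the interval moves right, for $\alpha^*_+>1$ one gets $Q_{\alpha^*_++\gamma}/Q_{\alpha^*_+}\ge 2^{\gamma D_{1+\gamma}}$ (the wrong direction), and symmetrically for $\alpha^*_-<1$ the left-shifted ratio fails. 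In each case the problematic shifted point lies outside $[1-\gamma,1+\gamma]$ (as far out as $1\pm\tfrac32\gamma$, since $\delta$ may equal $\gamma/2$), so it cannot be controlled by $D_{1\pm\gamma}$ alone; interpolating would drag in $D_{1\pm2\gamma}$ or an operator-norm quantity, which would change the statement. Your $2^{\pm(\ln 2)\delta c_\gamma}$ correction factor for the lower bound is likewise unjustified (it would need a slope bound on $\log Q$ at points near $1-\gamma$, which again requires tilted moments further out), and the claim that the upper bound \eqref{eq:renyi_upper_continuity} emerges with no such factor is inconsistent with your own mechanism, since $\alpha^*_+\neq 1$ produces exactly the same off-centre problem. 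The paper's proof avoids tilting altogether: it writes $D_{1+\delta}(\rho\|\sigma)=\frac1\delta\log\bra{\phi}X^\delta\ket{\phi}$ and $D(\rho\|\sigma)=\bra{\phi}\log X\ket{\phi}$ for the single operator $X=\rho\otimes(\sigma^{-1})^T$ and the fixed vector $\ket{\phi}=\sum_i\sqrt{\rho}\,\ket{i}\otimes\ket{i}$, expands $t^\delta=1+\delta\ln t+r_\delta(t)$ pointwise with $r_\delta(t)\le 2(\cosh(\delta\ln t)-1)$, and applies Jensen under the \emph{untilted} expectation using concavity of $s_{\delta/\gamma}$ on $[3,\infty)$; the asymmetry between the two bounds (the $\cosh$ only in the lower bound) then comes from elementary $\log(1+x)\le x/\ln 2$ manipulations depending on the sign of $\delta$, not from controlling a shifted measure.

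The second gap is the operator lift. As you note, $\Tr(\rho^\alpha e^{\pm\gamma L}\sigma^{1-\alpha})\neq Q_{\alpha\pm\gamma}$ in general, and also $Q''_\alpha\neq\Tr(\rho^\alpha L\sigma^{1-\alpha}L)$ (the $(\ln\rho)^2$ term is $\Tr(\rho^\alpha(\ln\rho)^2\sigma^{1-\alpha})$, not $\Tr(\rho^\alpha\ln\rho\,\sigma^{1-\alpha}\ln\rho)$). Deferring the repair to ``a pinching argument as in Tomamichel'' does not close this: pinching would introduce dimension-dependent losses absent from the lemma, and the actual resolution in \cite{tomamichel_framework_2012} (and in this paper) is not a pinching but the exact purification/Nussbaum--Szko{\l}a reduction just described, under which all the relevant quantities become expectations of functions of the single operator $X$, so the problem is literally commutative and no lift of the scalar inequality is needed. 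Your scalar inequality $y^2\le\gamma^{-2}\log^2(e^{\gamma y}+e^{-\gamma y}+1)$ and the role of $c_\gamma$ are indeed the right ingredients, but as written the proposal leaves both the tilted-moment bounds and the non-commutative reduction unestablished, so it does not yet prove the lemma; replacing the Lagrange-remainder-in-$\alpha$ device by the pointwise remainder estimate under $\bra{\phi}\cdot\ket{\phi}$ is the fix.
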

	The proof of this lemma appears in Appendix~\ref{app:proof_renyi_alpha_continuity}.
	
	\begin{remark}
		The previously known bound \cite[Lemma 6.3]{tomamichel_framework_2012} states only an analogue of \eqref{eq:renyi_upper_continuity_weaker} and follows from \autoref{lem:renyi_alpha_continuity} after setting $\gamma = 1/2$. Note that in its analogue of \eqref{eq:renyi_upper_continuity_weaker}, \cite[Lemma 6.3]{tomamichel_framework_2012} also has a stronger constraint on  the range of $\delta$, which turns out not to be necessary.
	\end{remark}
	Using \autoref{lem:renyi_alpha_continuity} we can then establish a bound on the convergence speed in the asymptotic equipartition property:
	\begin{lemma}\label{lem:aep_finite_n}
		Let $\rho, \sigma \in \DM$ be quantum states, and for $\gamma \in (0,1]$, take $c_\gamma(\rho\|\sigma)$ as in \eqref{eq:c_gamma_states}.
		Then, for all $\varepsilon \in (0, 1)$ and  $n \in \naturals$:
		\begin{multline}
			\label{eq:aep_finite_n_lower_bound_full}
			\frac{1}{n} D^\varepsilon_{\max}(\rho^{\otimes n} \|\sigma^{\otimes n}) \geq D(\rho\|\sigma) \\ -\frac{c_\gamma(\rho\|\sigma)}{\sqrt{n}}\left[\frac{\ln(2) \log(3)}{2} \cosh(\log(3) \over 2) \right. \\+ \left. \frac{4}{\log(3)} \log(1 \over 1 - \varepsilon)\right],
		\end{multline}
		\begin{multline} \label{eq:aep_finite_n_upper_bound_full}
			\frac{1}{n} D^\varepsilon_{\max}(\rho^{\otimes n} \|\sigma^{\otimes n}) \leq D(\rho\|\sigma) \\+ \frac{c_\gamma(\rho\|\sigma)}{\sqrt{n}}  \left[\frac{\ln(2) \log(3)}{2} +  \frac{4}{\log(3)} \log(1 \over \varepsilon)\right] \\
			+ \frac{1}{n} \log(1 \over 1 - \varepsilon^2),
		\end{multline}
		which implies
		\begin{align}\label{eq:aep_finite_n_lower_bound_reduced}
			\frac{1}{n} D^\varepsilon_{\max}(\rho^{\otimes n} \|\sigma^{\otimes n}) &\geq D(\rho\|\sigma) - \frac{4 c_\gamma(\rho\|\sigma)}{\sqrt{n}}\log(2 \over 1 - \varepsilon),
			\\ 
			\frac{1}{n} D^\varepsilon_{\max}(\rho^{\otimes n} \|\sigma^{\otimes n}) &\leq D(\rho\|\sigma) + \frac{4 c_\gamma(\rho\|\sigma)}{\sqrt{n}}  \log(2 \over \varepsilon) \notag \\ & \qquad + \frac{1}{n} \log(1 \over 1 - \varepsilon^2)\,.\label{eq:aep_finite_n_upper_bound_reduced}
		\end{align}
		\noindent These bounds can be tightened by adding a condition on $n$ to be large enough. Define first the numerical constants
		\begin{align}
			K_1 &\coloneqq  2 \sqrt{2 \ln(2) \cosh(\log(3) \over 2)} \leq 2.72,\\
			K_2 &\coloneqq  2 \sqrt{2 \ln(2)} \leq 2.36   \,.
		\end{align}
		Then, if
		\begin{equation}\label{eq:aep_finite_n_lower_range_n}
			n \geq \log(1\over 1 - \varepsilon)\qty(\frac{8}{\log(3) K_1})^2,
		\end{equation}
		we have the stronger bound
		\begin{equation}\label{eq:aep_finite_n_lower_bound}
			\frac{1}{n} D^\varepsilon_{\max}(\rho^{\otimes n} \|\sigma^{\otimes n}) \geq D(\rho\|\sigma) - \frac{K_1}{\sqrt{n}} c_\gamma(\rho\|\sigma) \sqrt{\log(1\over 1 - \varepsilon)},
		\end{equation}
		and similarly, if
		\begin{equation}\label{eq:aep_finite_n_upper_range_n}
			n \geq \log(1\over \varepsilon)\qty(\frac{8}{\gamma c_\gamma(\rho\|\sigma) K_2})^2,
		\end{equation}
		it holds that
		\begin{multline}\label{eq:aep_finite_n_upper_bound}
			\frac{1}{n} D^\varepsilon_{\max}(\rho^{\otimes n} \|\sigma^{\otimes n}) \leq D(\rho\|\sigma) + \frac{K_ 2}{\sqrt{n}} c_\gamma(\rho\|\sigma) \sqrt{\log(1\over\varepsilon)} \\+ \frac{1}{n} \log(1 \over 1 - \varepsilon^2)\,.
		\end{multline}
		
		\noindent Remember that $\gamma c_\gamma(\rho\|\sigma) \geq \log(3)$, and hence \eqref{eq:aep_finite_n_upper_range_n} is also always satisfied if
		\begin{equation}
			n \geq \log(1\over \varepsilon)\qty(\frac{8}{\log(3) K_2})^2\,.
		\end{equation}
	\end{lemma}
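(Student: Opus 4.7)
The plan is to run the standard argument that passes through Petz--\renyi divergences with $\alpha$ close to $1$. The four steps are: (i) relate $D_{\max}^{\varepsilon}$ to the hypothesis testing relative entropy $D_H^{\bullet}$ via Lemma \ref{lem:anshu_min_max}; (ii) sandwich $D_H^{\bullet}$ between $D_{1\pm\delta}$ using the standard one-shot bounds (Propositions~4.68 and~4.69 of \cite{khatri_principles_2020}); (iii) linearise via the additivity $D_\alpha(\rho^{\otimes n}\|\sigma^{\otimes n}) = n D_\alpha(\rho\|\sigma)$; and (iv) close the gap between $D_\alpha$ and $D$ using Lemma \ref{lem:renyi_alpha_continuity}. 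The $1/\sqrt n$ scaling emerges on optimising over $\delta$.

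For the upper bound I would combine $D_{\max}^\varepsilon \leq D_H^{1-\varepsilon^2} - \log(1-\varepsilon^2)$ (right half of Lemma \ref{lem:anshu_min_max}) with $D_H^{1-\varepsilon^2}(\rho^{\otimes n}\|\sigma^{\otimes n}) \leq n D_{1+\delta}(\rho\|\sigma) + \tfrac{1+\delta}{\delta}\log\tfrac{1}{\varepsilon^2}$ and the upper half \eqref{eq:renyi_upper_continuity} of Lemma \ref{lem:renyi_alpha_continuity} (which demands $\delta \in (0,\gamma/2]$). This produces an overall bound of the schematic form
\[
\frac{1}{n}D_{\max}^\varepsilon(\rho^{\otimes n}\|\sigma^{\otimes n}) \le D(\rho\|\sigma) + \ln(2)\,\delta\, c_\gamma^2 + \frac{2\log(1/\varepsilon)}{n\,\delta} + \text{additive $O(1/n)$ remainder}.
\]
For the unconditional bound \eqref{eq:aep_finite_n_upper_bound_full} I would set $\delta = \tfrac{\log 3}{2 c_\gamma \sqrt n}$, which is automatically admissible since $\gamma c_\gamma \geq \log 3$ (from AM--GM on the defining expression of $c_\gamma$, since $D_{1+\gamma}\geq D_{1-\gamma}$); this choice makes the two $\delta$-terms produce exactly the $\tfrac{\ln 2\,\log 3}{2}$ and $\tfrac{4\log(1/\varepsilon)}{\log 3}$ coefficients visible in \eqref{eq:aep_finite_n_upper_bound_full}. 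For the tighter \eqref{eq:aep_finite_n_upper_bound} I would instead balance the two $\delta$-terms by AM--GM, obtaining $\delta^\ast \propto c_\gamma^{-1}\sqrt{\log(1/\varepsilon)/n}$ and the constant $K_2 = 2\sqrt{2\ln 2}$; the admissibility condition $\delta^\ast \le \gamma/2$ is exactly the constraint \eqref{eq:aep_finite_n_upper_range_n} on $n$.

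The lower bound is entirely parallel. I would start from the left half of Lemma \ref{lem:anshu_min_max} with an auxiliary parameter $\delta' \in (0,1-\varepsilon^2)$, use a dual $D_H^\eta \geq D_{1-\delta} - \tfrac{1-\delta}{\delta}\log\tfrac{1}{\eta}$ style bound for $\delta \in (0,1)$, and apply the lower half \eqref{eq:renyi_lower_continuity} of Lemma \ref{lem:renyi_alpha_continuity}. The extra $\cosh(\log(3)/2)$ factor present in \eqref{eq:renyi_lower_continuity} is what turns $K_2$ into $K_1 = 2\sqrt{2\ln(2)\cosh(\log(3)/2)}$ after the same AM--GM optimisation. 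The auxiliary parameter $\delta'$ can be fixed to an $O(1)$ fraction of $1-\varepsilon^2$ (say $\delta' = (1-\varepsilon^2)/2$), so that the additive correction $-\tfrac{1}{n}\log(4(1-\varepsilon^2)/\delta'^2)$ is $O(1/n)$ and does not affect the leading $1/\sqrt n$ term.

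The main obstacle I foresee is bookkeeping rather than mathematical depth: carefully tracking which contributions are $O(1/n)$ versus $O(1/\sqrt n)$ so they land in the correct places in \eqref{eq:aep_finite_n_lower_bound_full}--\eqref{eq:aep_finite_n_upper_bound}, verifying that the boundary choice $\delta = \log(3)/(2 c_\gamma\sqrt n)$ yields the unconditional ``full'' bounds while the AM--GM-optimal $\delta^\ast$ yields the tighter bounds precisely when $n$ satisfies \eqref{eq:aep_finite_n_lower_range_n}--\eqref{eq:aep_finite_n_upper_range_n}, and extracting the exact numerical constants $K_1$ and $K_2$. Matching the two regimes at the threshold and absorbing the residual $\tfrac{1}{n}\log\tfrac{1}{1-\varepsilon^2}$ term cleanly is the most delicate part, but nothing here seems to require a genuinely new idea beyond the ingredients already established.
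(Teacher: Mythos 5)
Your overall architecture (Rényi divergences of order $1\pm\delta$, additivity, \autoref{lem:renyi_alpha_continuity}, then balancing $\delta\sim c_\gamma^{-1}\sqrt{\log(\cdot)/n}$ versus the boundary choice $\delta=\tfrac{\log 3}{2c_\gamma\sqrt n}$) is exactly the paper's, and your identification of where $K_1$, $K_2$ and the thresholds on $n$ come from is right. The genuine problem is your first step: the detour through the hypothesis testing relative entropy via \autoref{lem:anshu_min_max} is lossy, whereas the paper never passes through $D_H$ here. It instead uses \emph{direct} one-shot relations between the smoothed max-divergence and Rényi divergences: for the lower bound, $D^\varepsilon_{\max}(\rho\|\sigma)\geq D_\alpha(\rho\|\sigma)+\frac{2}{\alpha-1}\log\frac{1}{1-\varepsilon}$ for $\alpha\in[0,1)$ from \cite[Prop.~4]{wang_resource_states_2019}, and for the upper bound $D^\varepsilon_{\max}(\rho\|\sigma)\leq D_\alpha(\rho\|\sigma)+\frac{2}{\alpha-1}\log\frac{1}{\varepsilon}+\log\frac{1}{1-\varepsilon^2}$ for $\alpha>1$ from \cite[Prop.~4.61]{khatri_principles_2020}. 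These carry no extra additive losses, which is what allows the stated forms of the bounds.

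Concretely, your route cannot reproduce the lower bounds \eqref{eq:aep_finite_n_lower_bound_full} and \eqref{eq:aep_finite_n_lower_bound} as stated. The left inequality of \autoref{lem:anshu_min_max} forces a residual $-\frac{1}{n}\log\frac{4(1-\varepsilon^2)}{\delta'^2}$, while the stated lower bounds have no additive $O(1/n)$ term at all; more importantly, with your choice $\delta'=(1-\varepsilon^2)/2$ the type~I parameter entering the Rényi lower bound on $D_H$ is $\eta=(1-\varepsilon^2)/2$, so the penalty is $\propto\frac{1}{\delta}\log\frac{2}{1-\varepsilon^2}\geq\frac{1}{\delta}$, and after optimizing over $\delta$ the error term scales like $\frac{c_\gamma}{\sqrt n}\sqrt{\log\frac{2}{1-\varepsilon^2}}$, which does \emph{not} vanish as $\varepsilon\to 0$, in contrast to the claimed $\frac{K_1 c_\gamma}{\sqrt n}\sqrt{\log\frac{1}{1-\varepsilon}}$. (Letting $\delta'$ shrink with $n$ mitigates this but still leaves an $O(\log n/n)$ remainder and altered constants.) On the upper side the loss is milder but still present: combining the right half of \autoref{lem:anshu_min_max} with $D_H^{1-\varepsilon^2}\leq n D_{1+\delta}+\frac{1+\delta}{\delta}\log\frac{1}{\varepsilon^2}$ produces an extra $\frac{2}{n}\log\frac{1}{\varepsilon}$ beyond the $\frac{1}{n}\log\frac{1}{1-\varepsilon^2}$ that appears in \eqref{eq:aep_finite_n_upper_bound_full}. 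So your plan proves a strictly weaker variant of the lemma, not the lemma with its stated constants and $\varepsilon$-dependence; replacing step (i)--(ii) by the two direct $D^\varepsilon_{\max}$--Rényi bounds above fixes this, after which your steps (iii)--(iv) and the optimization are exactly the paper's proof.
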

	The proof of this lemma appears in \Cref{app:proof_ape_finite_n}.
	
	\begin{remark}\label{remark:relation_to_tomamichel}
		Our \autoref{lem:aep_finite_n} can be seen an extension of \cite[Theorem 6.4]{tomamichel_framework_2012}, where a similar bound to \eqref{eq:aep_finite_n_upper_bound} is shown (however with a worse constant and different smoothing convention). Note that \cite{tomamichel_framework_2012} uses the notation $S(\rho\|\sigma) \coloneqq - D(\rho\|\sigma)$ and $S_{\min}^{\varepsilon}(\rho\|\sigma) \coloneqq - D_{\max}^{\varepsilon}(\rho\|\sigma)$, and hence \cite[Theorem 6.4]{tomamichel_framework_2012}, while looking like a lower bound, actually is an upper bound on $D_{\max}^{\varepsilon}$. An equivalent of \eqref{eq:aep_finite_n_lower_bound} is shown in \cite{tomamichel_framework_2012} only for the smoothed conditional min-entropy and not for the (more general) smoothed max-relative entropy. 
	\end{remark}

	\begin{remark}\label{remark:relation_to_second_order}
		Another already existing bound for the AEP convergence is the so-called second-order expansion \cite{tomamichel_hierarchy_2013}, which gives a tight asymptotic characterization also of the second-order $\sqrt{n}$ term in the convergence to the relative entropy. There, the second-order term is shown to be proportional to the square root of the relative entropy variance
		\begin{equation}
			V(\rho\|\sigma) \coloneqq \Tr(\rho (\log(\rho) - \log(\sigma) - D(\rho\|\sigma))^2)\,.
		\end{equation}
		Later in the proof of \autoref{thm:seq_par}, we want to apply these convergence bounds to the case in which $\rho = \rho_n$ and $\sigma = \sigma_n$ are the states in an adaptive protocol, and we would like to obtain a bound on the convergence parameter in $n$. We will see that by using chain rules for the geometric relative \renyi entropy (or the max-relative entropy) we will be able to show that $c_\gamma(\rho_n\|\sigma_n) = \mathcal{O}(n)$, while we are not aware of any way of obtaining such a bound for $V(\rho_n\|\sigma_n)$, and hence cannot directly use second-order asymptotics.
	\end{remark}

	\begin{EXCLUDED}
		\begin{lemma}\label{lem:aep_finite_n}
			Let $\rho$ and $\sigma$ be quantum states, and let $\varepsilon \in (0,1)$. Let $c(\rho\|\sigma)$ be defined as in \eqref{eq:c_rho_sigma}. Then, for all $n$:
			\begin{align}
				\frac{1}{n} D_{\max}^{\varepsilon}(\rho^{\otimes n}\|\sigma^{\otimes n}) &\leq D(\rho\|\sigma) + \frac{4 \sqrt{2}}{\sqrt{n}} c(\rho\|\sigma) \log(1 \over \varepsilon^2)  + \frac{1}{n} \log(1 \over 1 - \varepsilon^2) , \\
				\frac{1}{n} D_{\max}^{\varepsilon}(\rho^{\otimes n}\|\sigma^{\otimes n}) &\geq D(\rho\|\sigma) - \qty(\frac{4 \sqrt{2}}{\sqrt{n}} c(\rho\|\sigma) + \frac{1}{n}) \log(4 \over 1 - \varepsilon^2) \,.
			\end{align}
		\end{lemma}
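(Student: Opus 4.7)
The proof strategy is the well-trodden path \emph{smooth max-divergence $\leftrightarrow$ hypothesis testing $\leftrightarrow$ Petz--\renyi divergence $\leftrightarrow$ Umegaki relative entropy}. The key ingredients are \autoref{lem:anshu_min_max}, the standard sandwich
\begin{equation*}
D_{1-\delta}(\rho\|\sigma) - \frac{1-\delta}{\delta}\log\frac{1}{\eta} \leq D_H^\eta(\rho\|\sigma) \leq D_{1+\delta}(\rho\|\sigma) + \frac{1+\delta}{\delta}\log\frac{1}{1-\eta}
\end{equation*}
of hypothesis testing by Petz--\renyi divergences for $\delta \in (0,1)$ (see \cite[Prop.~4.68 and 4.69]{khatri_principles_2020}), additivity $D_\alpha(\rho^{\otimes n}\|\sigma^{\otimes n}) = n D_\alpha(\rho\|\sigma)$, and finally \autoref{lem:renyi_alpha_continuity} to swap the Petz--\renyi divergence for the Umegaki one up to an error controlled by $c_\gamma(\rho\|\sigma)$. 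The problem then reduces to one-parameter optimisations over the \renyi offset $\delta$, plus one additional smoothing parameter in the lower bound.

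For the upper bound I apply these tools in order: \autoref{lem:anshu_min_max} at confidence $\varepsilon$, then the right-hand Petz bound with $\eta = 1-\varepsilon^2$ and $\alpha = 1+\delta$, additivity, and \eqref{eq:renyi_upper_continuity}, arriving at
\begin{equation*}
\frac{1}{n}D_{\max}^\varepsilon(\rho^{\otimes n}\|\sigma^{\otimes n}) \leq D(\rho\|\sigma) + \ln(2)\,\delta\, c_\gamma(\rho\|\sigma)^2 + \frac{1}{n}\Bigl(1 + \frac{1}{\delta}\Bigr)\log\frac{1}{\varepsilon^2} + \frac{1}{n}\log\frac{1}{1-\varepsilon^2}.
\end{equation*}
Minimising the $\delta$-dependent terms gives $\delta^\star = \sqrt{2\log(1/\varepsilon)/(n\ln(2)\, c_\gamma^2)}$ with optimum $K_2 c_\gamma \sqrt{\log(1/\varepsilon)/n}$. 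When $\delta^\star \leq \gamma/2$, which is precisely condition \eqref{eq:aep_finite_n_upper_range_n}, the optimiser lies in the admissible range of \autoref{lem:renyi_alpha_continuity} and substitution delivers \eqref{eq:aep_finite_n_upper_bound}; otherwise I take the boundary value $\delta = \gamma/2$ and evaluate the display directly to obtain the unconditional but coarser \eqref{eq:aep_finite_n_upper_bound_full}. The reduced form \eqref{eq:aep_finite_n_upper_bound_reduced} then follows by uniformly overestimating the two summands in the bracket of \eqref{eq:aep_finite_n_upper_bound_full} by a common multiple of $\log(2/\varepsilon)$.

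The lower bound runs in parallel: starting from the left inequality of \autoref{lem:anshu_min_max} with auxiliary smoothing $\delta_s$, then the left Petz bound with $\eta = 1-\varepsilon^2-\delta_s$ and $\alpha = 1-\delta$, additivity, and the lower half of \autoref{lem:renyi_alpha_continuity} (which introduces the $\cosh(\log(3)/2)$ factor), I arrive at
\begin{equation*}
\frac{1}{n}D_{\max}^\varepsilon(\rho^{\otimes n}\|\sigma^{\otimes n}) \geq D(\rho\|\sigma) - \ln(2)\cosh\bigl(\tfrac{\log 3}{2}\bigr)\delta c_\gamma(\rho\|\sigma)^2 - \frac{1-\delta}{n\delta}\log\frac{1}{1-\varepsilon^2-\delta_s} - \frac{1}{n}\log\frac{4(1-\varepsilon^2)}{\delta_s^2}.
\end{equation*}
Choosing $\delta_s$ of the same order as $1-\varepsilon^2$ absorbs the last term into a lower-order $O(1/n)$ remainder; minimising over $\delta$ then propagates the $\cosh(\log(3)/2)$ factor into the constant $K_1 = 2\sqrt{2\ln(2)\cosh(\log(3)/2)}$ of \eqref{eq:aep_finite_n_lower_bound}, with the same two-regime split (admissible optimum vs.\ boundary $\delta = \gamma/2$) as before, and the reduced form \eqref{eq:aep_finite_n_lower_bound_reduced} follows by analogous over-estimation. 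The hardest part will be the bookkeeping of the three independent parameters $(\delta_s,\delta,\gamma)$ and the matching of numerical constants across the two regimes; once that is in place, the simplified threshold on $n$ stated at the end of the lemma is immediate from the universal inequality $\gamma c_\gamma(\rho\|\sigma) \geq \log(3)$, itself a direct consequence of \eqref{eq:c_gamma_states} since the argument of the logarithm there is at least $3$.
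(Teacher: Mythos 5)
Your route is genuinely different from the paper's, and unfortunately it cannot deliver the lemma with the constants as stated. The paper never passes through the hypothesis testing relative entropy: it converts $D^{\varepsilon}_{\max}$ to Petz--\renyi divergences directly, via $D^{\varepsilon}_{\max}(\rho\|\sigma)\geq D_{\alpha}(\rho\|\sigma)+\frac{2}{\alpha-1}\log\frac{1}{1-\varepsilon}$ for $\alpha\in[0,1)$ \cite[Prop.~4]{wang_resource_states_2019} and $D^{\varepsilon}_{\max}(\rho\|\sigma)\leq D_{\alpha}(\rho\|\sigma)+\frac{2}{\alpha-1}\log\frac{1}{\varepsilon}+\log\frac{1}{1-\varepsilon^{2}}$ for $\alpha>1$ \cite[Prop.~4.61]{khatri_principles_2020}, and only then applies \autoref{lem:renyi_alpha_continuity} and optimizes $\alpha$. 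Your detour through \autoref{lem:anshu_min_max} and the $D_H$--\renyi sandwich is lossier. On the upper side, the step $D_H^{1-\varepsilon^{2}}\leq D_{1+\delta}+\frac{1+\delta}{\delta}\log\frac{1}{\varepsilon^{2}}$ contributes $\frac{2(1+\delta)}{\delta}\log\frac{1}{\varepsilon}$ instead of $\frac{2}{\delta}\log\frac{1}{\varepsilon}$, so after optimizing $\delta$ you obtain \eqref{eq:aep_finite_n_upper_bound} only up to an extra $+\frac{2}{n}\log\frac{1}{\varepsilon}$; "substitution delivers \eqref{eq:aep_finite_n_upper_bound}" is therefore not quite true. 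On the lower side the problem is structural: Anshu's inequality forces the additive penalty $\frac{1}{n}\log\frac{4(1-\varepsilon^{2})}{\delta_s^{2}}$ (which exceeds $\frac{2}{n}$ for every admissible $\delta_s<1-\varepsilon^{2}$) and replaces the smoothing parameter by $1-\varepsilon^{2}-\delta_s$, so the best your route can give is of the form
\begin{equation*}
\frac{1}{n}D^{\varepsilon}_{\max}(\rho^{\otimes n}\|\sigma^{\otimes n})\;\geq\; D(\rho\|\sigma)-\frac{K_1}{\sqrt{2n}}\,c_\gamma(\rho\|\sigma)\sqrt{\log\frac{1}{1-\varepsilon^{2}-\delta_s}}-\frac{1}{n}\log\frac{4(1-\varepsilon^{2})}{\delta_s^{2}}\,.
\end{equation*}
The two $\delta_s$-terms fight each other: keeping the square-root argument comparable to $\log\frac{1}{1-\varepsilon}$ requires $\delta_s\lesssim\varepsilon$, which blows up the last term, while $\delta_s\sim 1-\varepsilon^{2}$ leaves a square-root argument of order one. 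Either way, for small $\varepsilon$ (where \eqref{eq:aep_finite_n_lower_range_n} holds even for modest $n$ and the claimed error $K_1 c_\gamma\sqrt{\log\frac{1}{1-\varepsilon}}/\sqrt{n}$ vanishes) your remainder stays bounded below by $\frac{2}{n}$, so \eqref{eq:aep_finite_n_lower_bound} and likewise \eqref{eq:aep_finite_n_lower_bound_full} (which have no $1/n$ term at all) are out of reach by this method; the factor $1/\sqrt{2}$ you gain in $K_1$ does not rescue this.

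A second concrete flaw is your fallback in the small-$n$ regime: clamping to the boundary $\delta=\gamma/2$ leaves the error term $\ln(2)\frac{\gamma}{2}c_\gamma(\rho\|\sigma)^{2}$, which does not decay in $n$ and therefore does not yield the unconditional bounds \eqref{eq:aep_finite_n_lower_bound_full} and \eqref{eq:aep_finite_n_upper_bound_full}, whose error is explicitly $O(c_\gamma/\sqrt{n})$. The paper instead chooses $\delta=\frac{\log 3}{2c_\gamma(\rho\|\sigma)\sqrt{n}}$, which is admissible for every $n$ precisely because $\gamma c_\gamma(\rho\|\sigma)\geq\log 3$ (your observation about this inequality is correct), and this choice is what produces the constants $\frac{\ln 2\log 3}{2}\cosh(\frac{\log 3}{2})$ and $\frac{4}{\log 3}$ in the unconditional bounds. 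In short: the skeleton (sandwich, additivity, optimize the \renyi offset, two regimes) is right, but to prove the lemma as stated you should replace the $D_H$/Anshu conversion by the direct max-divergence-to-\renyi bounds, and replace the boundary clamp by the $n$-dependent, always-admissible choice of $\delta$.
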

		
		\begin{proof}
			The upper bound is immediate from combining the upper bound in \eqref{eq:anshu_min_max} with \eqref{eq:n_shot_steins_uppper}.
			For the lower bound, we combine the lower bound in \eqref{eq:anshu_min_max} with \eqref{eq:n_shot_steins_lower}, to find for any $\delta \in (0, 1 - \varepsilon^2)$
			\begin{equation}\label{eq:int_aep_lemm_step1}
				\frac{1}{n} D_{\max}^{\varepsilon}(\rho^{\otimes n}\|\sigma^{\otimes n}) \geq D(\rho\|\sigma) - \frac{4 \sqrt{2}}{\sqrt{n}} c \log(1 \over 1 -  \varepsilon^2 - \delta) - \frac{1}{n} \log(4 (1 - \varepsilon^2) \over \delta^2) + {2 \over n},
			\end{equation}
			where we abbreviated $c(\rho\|\sigma)$ as $c$.
			By taking the derivative of the right hand side with respect to $\delta$, we find that the right-hand side is stationary for 
			\begin{equation}
				-\frac{4 \sqrt{2}}{\sqrt{n}} c {1 \over 1 -  \varepsilon^2 - \delta} + \frac{2}{n \delta} = 0
			\end{equation}
			which is equivalent to 
			\begin{equation}
				\delta = \frac{1 - \varepsilon^2 - \delta}{\sqrt{8n} c}
			\end{equation}
			and hence
			\begin{equation}
				\delta =  \frac{1 - \varepsilon^2}{1 + \sqrt{8n}c}\,.
			\end{equation}
			
			We further find the second derivative of the right-hand side to be
			\begin{equation}
				\log(e)\qty[- \frac{4 \sqrt{2} c}{\sqrt{n}} \frac{1}{(1 - \varepsilon^2 - \delta)^2} - \frac{2}{n \delta^2}]
			\end{equation}
			which is negative for all $\delta \in (0, 1 - \varepsilon^2)$ and hence the function is concave in that range and the stationary point we just found is a global maximum. For simplicity let us define $K \coloneqq \sqrt{8n} c$. To simplify the right-hand side of \eqref{eq:int_aep_lemm_step1} at the optimal value of $\delta$, let us look at the two $\delta$-dependent terms, multiplied by $n$:
			\begin{equation}
				2K \log(1 - \varepsilon^2 - \delta) +\log(\delta ^2 \over 4 (1 - \varepsilon^2)) = (2K + 1) \log(1 - \varepsilon^2) + 2K \log(K \over 1 + K) + 2\log(1\over2(1 + K)) 
			\end{equation}
			The last two terms we can write as
			\begin{align}
				2(1 + K)\qty[{K \over 1 + K}\log(K \over 1 + K) + {1 \over 1 + K} \log(1 \over 1 + K) - {1 \over 1 + K}] &= 2 (1 + K)\qty[- h\qty(1 \over 1 + K) - {1 \over 1 + K}] \notag \\
				&\geq -4 (1 + K),
			\end{align}
			where $h$ is the binary entropy function. Putting everything together, \eqref{eq:int_aep_lemm_step1} with the optimal value of $\delta$ then becomes
			\begin{align}
				\frac{1}{n} D_{\max}^{\varepsilon}(\rho^{\otimes n}\|\sigma^{\otimes n}) &\geq D(\rho\|\sigma) - \frac{1}{n}\qty[(2K + 1)\log(1\over 1 - \varepsilon^2) + 4(1 + K) - 2] \\
				&= D(\rho\|\sigma) - \frac{(2K + 1)}{n}\log(4\over 1 - \varepsilon^2),
			\end{align}
			which is the claimed statement.
		\end{proof}
	\end{EXCLUDED}
	
	\subsection{Proof of our Main Result (\autoref{thm:seq_par})}
	\begin{proof}
		We start by applying \autoref{lem:wang_renner_upper_bound} to $D_H^{\alpha_a}(\E(\rho_n)\|\F(\sigma_n))$:
		\begin{multline}\label{eq:thm_starting_point}
			\frac{1}{n} D_H^{\alpha_a}(\E(\rho_n)\|\F(\sigma_n)) \\\leq \frac{1}{n} \frac{1}{1 - \alpha_a}\big(D(\E(\rho_n)\|\F(\sigma_n)) + h(\alpha_a)\big) \,.
		\end{multline}
		Note that a classical version of this equation in the context of channel discrimination was previously obtained in \cite[Eq. (33)]{hayashi_discrimination_2009}.
		Note now that we can write
		\begin{align} 
			&D(\E(\rho_n)\|\F(\sigma_n)) \\
			&= D(\E(\rho_n)\|\F(\sigma_n)) - D(\rho_n\|\sigma_n) + D(\rho_n\|\sigma_n) \label{eq:amort-1} \\
			&= D(\E(\rho_n)\|\F(\sigma_n)) - D(\rho_n\|\sigma_n) \notag \\ & \qquad + D(\Lambda_n(\E(\rho_{n-1}))\|\Lambda_n(\F(\sigma_{n-1}))) \\
			&\leq D(\E(\rho_n)\|\F(\sigma_n)) - D(\rho_n\|\sigma_n) + D(\E(\rho_{n-1})\|\F(\sigma_{n-1})) \\
			&\leq \ldots \leq \sum_{k = 1}^n \Big[ D(\E(\rho_k)\|\F(\sigma_k)) - D(\rho_k\|\sigma_k) \Big],
			\label{eq:amort-last}
		\end{align}
		where we used the definition of $\rho_k$ and $\sigma_k$, the data-processing inequality, and the fact that $\rho_1 = \sigma_1$. Let us use the index $\ell$ for the step in the adaptive protocol where this amortized difference is the largest, i.e.
		\begin{equation}
			\ell \coloneqq  \argmax_{k \in \{ 1, \ldots, n\}} \Big[ D(\E(\rho_k)\|\F(\sigma_k)) - D(\rho_k\|\sigma_k) \Big]\,.
		\end{equation}
		Then,    
		\begin{equation}\label{eq:rel_entropy_amortization}
			\frac{1}{n} D(\E(\rho_n)\|\F(\sigma_n)) \leq D(\E(\rho_\ell)\|\F(\sigma_\ell)) - D(\rho_\ell\|\sigma_\ell)\,.
		\end{equation}
		We can convert this to smoothed max-relative entropies by using \autoref{lem:aep_finite_n}. We will proceed with the bound requiring a condition on $m$ (or $n$ as it is called in \autoref{lem:aep_finite_n}); the $m$-independent bound is achieved in complete analogy by just taking the alternative statements from \autoref{lem:aep_finite_n}. We get:
		\begin{multline}\label{eq:amortized_to_max}
			D(\E(\rho_\ell)\|\F(\sigma_\ell)) - D(\rho_\ell\|\sigma_\ell) \leq \\  \frac{1}{m}\Big( D_{\max}^{\varepsilon_1} \big(\E^{\otimes m}(\rho_\ell^{\otimes m})\| \F^{\otimes m}(\sigma_\ell^{\otimes m})\big) - D_{\max}^{{\varepsilon_2}} \big(\rho_\ell^{\otimes m}\| \sigma_\ell^{\otimes m}\big) \Big) \\ 
			+ \frac{1}{\sqrt{m}}\left[K_1 c_{\gamma_1}\big(\E(\rho_\ell)\|\F(\sigma_\ell)\big) \sqrt{\log(1 \over 1 - \varepsilon_1)} \right. \\ + \left. K_2 c_{\gamma_2}(\rho_\ell\|\sigma_\ell) \sqrt{\log(1 \over \varepsilon_2)}\right] + \frac{1}{m}\log(1 \over 1 - \varepsilon_2^2)
		\end{multline}
		where $\gamma_1, \gamma_2 \in (0,1]$ and $\varepsilon_1, \varepsilon_2 \in (0,1)$ are arbitrary. It will be very convenient to choose $1-\varepsilon_1 = \varepsilon_2 \eqqcolon \varepsilon$, which is almost optimal as $K_1 \approx K_2$ and $c_\gamma\big(\E(\rho_\ell)\|\F(\sigma_\ell)\big) \approx c_\gamma(\rho_\ell\|\sigma_\ell)$ for large $\ell$ (which is the regime we are most interested in). 
		Then, 
		\begin{multline}\label{eq:amortized_to_max_2}
			D(\E(\rho_\ell)\|\F(\sigma_\ell)) - D(\rho_\ell\|\sigma_\ell) \leq \\
			\frac{1}{m}\Big( D_{\max}^{1 - \varepsilon} \big(\E^{\otimes m}(\rho_\ell^{\otimes m})\| \F^{\otimes m}(\sigma_\ell^{\otimes m})\big) - D_{\max}^{{\varepsilon}} \big(\rho_\ell^{\otimes m}\| \sigma_\ell^{\otimes m}\big) \Big) \\ 
			+ \frac{c_\ell}{\sqrt{m}} \sqrt{\log(1 \over \varepsilon)} + \frac{1}{m}\log(1 \over 1 - \varepsilon^2),
		\end{multline}
		where
		\begin{equation}
			c_\ell \coloneqq \inf_{\gamma_1, \gamma_2 \in (0,1]}\qty [K_1 c_{\gamma_1}\big(\E(\rho_\ell)\|\F(\sigma_\ell)\big) + K_2 c_{\gamma_2}(\rho_\ell\|\sigma_\ell)],
		\end{equation}
		and the conditions on $m$ from \autoref{lem:aep_finite_n} will be satisfied if
		\begin{equation}
			m \geq \log(1\over \varepsilon)\qty(\frac{8}{\log(3) K_2})^2\,.
		\end{equation}
		
		\noindent Taking $\varepsilon \leq 1/2$, we can apply \autoref{lem:chain_rule_simple} to \eqref{eq:amortized_to_max_2}. We then get a state $\tilde{\nu} \in B^\circ_\varepsilon(\rho_\ell^{\otimes m})$ such that
		\begin{multline}\label{eq:amortized_to_max_parralel}
			D(\E(\rho_\ell)\|\F(\sigma_\ell)) - D(\rho_\ell\|\sigma_\ell) \\\leq \frac{1}{m} D_{\max}^{1-2\varepsilon} \big(\E^{\otimes m}(\tilde{\nu})\| \F^{\otimes m}(\tilde{\nu})\big) 
			+ \frac{c_\ell}{\sqrt{m}} \sqrt{\log(1 \over \varepsilon)} \\ + \frac{1}{m}\log(1 \over 1 - \varepsilon^2)\, .
		\end{multline}
		
		Note that $\tilde{\nu} = \tilde{\nu}_{R_a^m A^m} \in \DM[R_a^{\otimes m} \otimes A^{\otimes m}]$ is the first of the two possible choices for $\nu$ given in \autoref{thm:seq_par}. For the other one, let $\tilde{\nu}_{R' R_a^m A^m}$ be a purification of $\tilde{\nu}_{R_a^m A^m}$, and hence also a purification of $\tilde{\nu}_{A^m}$. As the smoothed max-divergence satisfies the data-processing inequality (see, e.g., \cite[Prop.~4.60]{khatri_principles_2020}) we have
		\begin{multline}
			D_{\max}^{1-2\varepsilon} \big(\E^{\otimes m}(\tilde{\nu}_{R_a^m A^m})\| \F^{\otimes m}(\tilde{\nu}_{R_a^m A^m})\big) \\\leq D_{\max}^{1-2\varepsilon} \big(\E^{\otimes m}(\tilde{\nu}_{R' R_a^m A^m})\| \F^{\otimes m}(\tilde{\nu}_{R' R_a^m A^m})\big)\,.
		\end{multline}
		Now, let $\nu$ be the canonical purification of $\tilde{\nu}_{A^m}$, as specified in \autoref{thm:seq_par}. Since all purifications are equivalent up to an isometry on the purifying system (see, e.g., \cite{khatri_principles_2020}) on which the channels $\E^{\otimes m}$ and $\F^{\otimes m}$ act as an identity, and since the smoothed max-divergence is invariant under isometries, we have 
		\begin{multline}
			D_{\max}^{1-2\varepsilon} \big(\E^{\otimes m}(\tilde{\nu}_{R' R_a^m A^m})\| \F^{\otimes m}(\tilde{\nu}_{R' R_a^m A^m})\big) \\= D_{\max}^{1-2\varepsilon} \big(\E^{\otimes m}(\nu)\| \F^{\otimes m}(\nu)\big) \,.
		\end{multline}
		We will proceed with $\nu$ as the chosen state, but note that everything works analogously by choosing~$\tilde{\nu}$.  
		We now further apply the upper bound from \autoref{lem:anshu_min_max} to find:
		\begin{multline}\label{eq:amortized_to_HT}
			D(\E(\rho_\ell)\|\F(\sigma_\ell)) - D(\rho_\ell\|\sigma_\ell) \\\leq \frac{1}{m} D_{H}^{1 - (1- 2 \varepsilon)^2} \big(\E^{\otimes m}(\nu)\| \F^{\otimes m}(\nu)\big)
			+ \frac{c_\ell}{\sqrt{m}} \sqrt{\log(1 \over \varepsilon)} \\+ \frac{1}{m}\qty[\log(1 \over 1 - \varepsilon^2) + \log(1 \over 1 - (1-2 \varepsilon)^2)].
		\end{multline}
		To get our desired expression we set $\alpha_p \coloneqq 1 - (1-2\varepsilon)^2$, or equivalently $\varepsilon =\frac{1}{2}(1 - \sqrt{1 - \alpha_p})$.
		Using the estimates 
		\begin{equation}
			\frac{1 - \sqrt{1 - \alpha_p}}{2} \geq \frac{\alpha_p}{4}
		\end{equation}
		and 
		\begin{equation}
			1 - \varepsilon^2 = \frac{1 + \sqrt{1 - \alpha_p}}{2} + \frac{\alpha_p}{4} \geq 1 - \frac{\alpha_p}{4} ,
		\end{equation}
		we arrive at the expression
		\begin{multline}\label{eq:amortized_to_final}
			D(\E(\rho_\ell)\|\F(\sigma_\ell)) - D(\rho_\ell\|\sigma_\ell) \leq \frac{1}{m} D_{H}^{\alpha_p} \big(\E^{\otimes m}(\nu)\| \F^{\otimes m}(\nu)\big)\\ 
			+ \frac{c_\ell}{\sqrt{m}} \sqrt{\log(4 \over \alpha_p)} + \frac{1}{m}\qty[\log(1 \over \alpha_p) - \log(1 - \frac{\alpha_p}{4})],
		\end{multline}
		which we can insert into \eqref{eq:rel_entropy_amortization} and then \eqref{eq:thm_starting_point} to get
		\begin{multline}
			\frac{1}{n} D_H^{\alpha_a}(\E(\rho_n)\|\F(\sigma_n)) \\\leq \frac{1}{1 - \alpha_a}\Bigg[
			\frac{1}{m} D_H^{\alpha_p}(\E^{\otimes m}(\nu)\|\F^{\otimes m}(\nu)) + \frac{c_\ell}{\sqrt{m}} \sqrt{\log(4 \over \alpha_p)} \\+  \frac{1}{m}\qty[\log(1 \over \alpha_p) - \log(1 - \frac{\alpha_p}{4})] + \frac{h(\alpha_a)}{n}\Bigg],
		\end{multline}
		which leads to the desired statement in \eqref{eq:thm_seq_par}. 
		
		\medskip
		For the bounds on $c_\ell$, we start with
		\begin{align}
			&c_\gamma(\E(\rho_\ell)\|\F(\sigma_\ell)) \notag\\
			& = \frac{1}{\gamma}\log( 2^{\gamma D_{1 + \gamma} (\E(\rho_\ell)\|\F(\sigma_\ell))} + 2^{- \gamma D_{1 - \gamma}(\E(\rho_\ell)\|\F(\sigma_\ell))} + 1)\notag \\ &
			\leq \frac{1}{\gamma}\log( 2^{\gamma \widehat{D}_{1 + \gamma} (\E(\rho_\ell)\|\F(\sigma_\ell))} + 2)\,,
		\end{align}
		where we used \eqref{eq:petz-to-geometric} for the inequality.
		Now, by repeated use of the chain rule for the geometric \renyi divergence \cite[Thm 3.4]{fang_geometric_2021} we get
		\begin{align}
			&\widehat{D}_{1 + \gamma}(\E(\rho_\ell)\|\F(\sigma_\ell)) \notag\\
			& \leq \widehat{D}_{1 + \gamma}(\E\|\F) + \widehat{D}_{1 + \gamma}(\rho_\ell\|\sigma_\ell)\\
			&= \widehat{D}_{1 + \gamma}(\E\|\F) + \widehat{D}_{1 + \gamma}(\Lambda_\ell(\E(\rho_{\ell-1}))\|\Lambda_\ell(\F(\sigma_{\ell-1}))) \\
			&\leq \widehat{D}_{1 + \gamma}(\E\|\F) + \widehat{D}_{1 + \gamma}(\E(\rho_{\ell-1})\|\F(\sigma_{\ell-1})) \\
			&\leq \ldots \leq \ell \widehat{D}_{1 + \gamma}(\E\|\F)\,.
		\end{align}
		Defining the corresponding channel quantity
		\begin{equation}
			\widehat{c}_\gamma(\E\|\F) \coloneqq \frac{1}{\gamma}\qty(2^{\gamma \widehat{D}_{1 + \gamma}(\E\|\F)} + 2) \, ,
		\end{equation}
		we find that 
		\begin{align}
			c_\gamma(\E(\rho_\ell)\|\F(\sigma_\ell)) &\leq \frac{1}{\gamma} \log(2^{\ell \widehat{D}_{1 + \gamma}(\E\|\F)} + 2) \\
			&\leq \frac{\ell}{\gamma} \log(2^{\widehat{D}_{1 + \gamma}(\E\|\F)} + 2) \\&= \ell\, \widehat{c}_\gamma(\E\|\F) \,.
		\end{align}
		Using the same argument, we find that also $\widehat{D}_{1 + \gamma}(\rho_\ell \|\sigma_\ell) \leq \ell \widehat{D}_{1 + \gamma}(\E\|\F)$ and hence also
		\begin{equation}
			c_\gamma(\rho_\ell\|\sigma_\ell) \leq \ell\, \widehat{c}_\gamma(\E\|\F)\,.
		\end{equation}
		Thus,
		\begin{align}
			c_\ell &\leq \ell (K_1 + K_2) \inf_{\gamma \in (0,1]} \widehat{c}_\gamma(\E\|\F) \\ &\leq n (K_1 + K_2) \inf_{\gamma \in (0,1]} \widehat{c}_\gamma(\E\|\F)\,,
		\end{align}
		concluding the proof.
	\end{proof}
	
	\subsubsection{Proof of \autoref{cor:seq_par}}
	\begin{proof}
		\autoref{cor:seq_par} follows from \eqref{eq:seq_par_simple} by making the following estimates
		\begin{align}
			h(\alpha_a) &\leq 1 \, ,\\
			K &\leq 1 \, ,\\
			c'_\ell &\leq \frac{8 n}{\log(3)} \widehat{c}_{1}(\E\|\F) \approx 5.05\, n\,  \widehat{c}_{1}(\E\|\F) \notag\\& \leq 6\, n\, \widehat{c}_{1}(\E\|\F) \, ,\\
			- \log(1 - {\alpha_p \over 4}) &\leq 1 \, .
		\end{align}
		Combining all of these we can bound the error term by
		\begin{multline}
			\frac{6n}{\sqrt{m}} \widehat{c}_{1}(\E\|\F) \log(8 \over \alpha_p) + \frac{1}{m} \log(2 \over \alpha_p) + \frac{1}{n} \\\leq \frac{7 n}{\sqrt{m}} \widehat{c}_{1}(\E\|\F) \log(8 \over \alpha_p) + \frac{1}{n} \, ,
		\end{multline}
		where we used that $\widehat{c}_\gamma(\E\|\F) \geq 1$ for all $\gamma \in (0,1]$.
		Note finally that $\widehat{D}_2(\E\|\F) = D_2(\E\|\F) \leq D_{\max}(\E\|\F)$, which leads to the desired expression of $C$ in \eqref{eq:cor_seq_par}.
	\end{proof}
	
	\section{An Example}
	\label{sec:example}
	In this section we provide an example that illustrates how adaptive and parallel strategies can differ in the finite-length regime, and how our result bounds this difference. Specifically, we construct an example where the number of channel uses required by a parallel strategy to match a specific adaptive strategy turns out to be  arbitrarily large. This demonstrates that the relation between adaptive and parallel strategies in the finite-length regime is in general complex, and one cannot expect a substantially simpler relationship between adaptive and parallel strategies 
	than what we obtain in \autoref{cor:seq_par} and \autoref{thm:seq_par}.
	
	Our example is inspired by the already mentioned example from \cite{harrow_adaptive_2010, salek_usefulness_2022} that shows a separation between the asymptotic error decay rate of adaptive and parallel strategies in the \emph{symmetric} setting. For a specific pair of channels ($\E$, $\F$), they construct an adaptive strategy with two channel uses that achieves perfect discrimination (i.e., type~I and type~II error are equal to zero), whereas for parallel strategies they upper bound the symmetric error exponent by a finite value (i.e., they upper bound the rate at which both errors can simultaneously go to zero); so in particular, perfect discrimination is never possible with parallel strategies. Looking at this example in the asymmetric setting, one finds that there exists an input state $\nu \in \DM[A]$ such that for any arbitrary type~I error $\alpha_p$ there exists an $m$ such that $D_H^{\alpha_p}((\E(\nu))^{\otimes m}\|(\F(\nu))^{\otimes m}) = \infty$; i.e., already with a parallel strategy with product inputs one can achieve zero type~II error with an arbitrary small type~I error if one only makes $m$ large enough. Hence, unlike the symmetric setting, in the asymmetric setting there is no asymptotic gap between adaptive and parallel strategies, but there is still a significant difference for finite $m$, as the adaptive strategy achieves $\alpha_a = 0, \beta_a = 0$ with only two channel uses, whereas the parallel strategy requires a large $m$ to achieve $\alpha_p \leq \varepsilon, \beta_p = 0$. As mentioned in \autoref{remark:harrow_example}, the two channels $\E$ and $\F$ used in this example have $D_{\max}(\E\|\F) = \infty$ and hence \autoref{thm:seq_par} does not immediately apply. 
	What we are going to do though, is use a slightly noisy version of this channel $\F$, which makes $D_{\max}(\E\|\F)$ finite.
	
	Define the channels $\E, \F : \DM[\complex^2 \otimes \complex^2] \to \DM[\complex^2]$ for $\kappa \in [0,1]$ as follows:
	\begin{align}
		\E(\rho \otimes \omega) &\coloneqq \ketbra{0} \bkbraket{0|\omega|0} + \ketbra{0} \bkbraket{0|\rho|0} \bkbraket{1|\omega|1} \notag \\
		&\qquad + \frac{\IdentityMatrix}{2} \bkbraket{11|\rho \otimes \omega|11}
		\label{eq:example-channel-1}
		\\
		\F(\rho \otimes \omega) &\coloneqq (1-\kappa) \bigg[\ketbra{+} \bkbraket{0|\omega|0}  \notag \\&\qquad + \ \ketbra{1} \bkbraket{+|\rho|+} \bkbraket{1|\omega|1} \notag \\ & \qquad + \frac{\IdentityMatrix}{2} \bkbraket{-1|\rho \otimes \omega\left|-1\right.}\bigg] + \kappa \frac{\IdentityMatrix}{2}.
		\label{eq:example-channel-2}
	\end{align}
	It is easy to see that 
	\begin{align}
		\E\Big(\E\big(\ketbra{00}\big) \otimes \ketbra{1}{1}\Big) &= \ketbra{0}, \\
		\F\Big(\F\big(\ketbra{00}\big) \otimes \ketbra{1}{1}\Big) &= (1 - \delta(\kappa)) \ketbra{1} + \delta(\kappa)\ketbra{0}\,.
	\end{align}
	where $\delta(\kappa) = (3 \kappa - \kappa^2)/4$. For $\kappa = 0$ (which corresponds to the original example in \cite{harrow_adaptive_2010}) we find $\delta(0) = 0$ and hence the above gives an adaptive strategy that makes the channels perfectly distinguishable with just two channel uses. The exact same strategy will become arbitrarily good if $\kappa$ is nonzero but small, specifically
	\begin{equation}
		\frac{1}{2} D^{0}_H(\E(\rho_2)\|\F(\sigma_2)) = - \frac{1}{2} \log(\delta(\kappa)) \label{eq:example_two_uses},
	\end{equation}   
	where the adaptive strategy has $\rho_1 = \sigma_1 = \ketbra{00}$ and $\rho_2 = \ketbra{01}$, $\sigma_2 = (1 - \kappa/2)\ketbra{+1} + \kappa/2 \ketbra{-1}$.
	We also find:
	\begin{align}
		&D(\E(\rho_1)\|\F(\sigma_1)) - D(\rho_1\|\sigma_1) = D(\E(\rho_1)\|\F(\rho_1)) \notag \\& \qquad = D(\ketbra{0}\|(1 - \kappa/2)\ketbra{+} + \kappa/2 \ketbra{-})\notag \\
		&\qquad = -\frac{1}{2}\qty[ \log(\kappa \over 2) + \log(1 - {\kappa \over 2})]\,, \label{eq:example-adaptive-diff-1} \\
		&D(\E(\rho_2)\|\F(\sigma_2)) - D(\rho_2\|\sigma_2)\notag \\
		&\qquad= \log(\delta(\kappa)) + \frac{1}{2}\qty[ \log(\kappa \over 2) + \log(1 - {\kappa \over 2})]\,, \label{eq:example-adaptive-diff-2}
	\end{align}
	where interestingly \eqref{eq:example-adaptive-diff-1} is always larger than \eqref{eq:example-adaptive-diff-2}, and so it is the first step that increases the distinguishability the most (when measured in terms of relative entropy). This also implies that this adaptive strategy is not asymptotically optimal, as 
	\begin{equation}
		\frac{1}{2} D(\E(\rho_2)\|\F(\sigma_2)) < D(\E(\rho_1)\|\F(\rho_1)) \, ,
	\end{equation}
	and hence it is asymptotically better to just use a parallel strategy with tensor copies of $\rho_1$ as an input state. 
	Nevertheless, we will see that the adaptive strategy is still far superior in a regime where the number of channel uses $m$ is not too large: It is well known (see, e.g., \cite[Prop.~4.66]{khatri_principles_2020}) that for all states $\rho, \sigma$
	\begin{equation}
		\lim_{\alpha_p \to 0} D_H^{\alpha_p}(\rho\|\sigma) = D_H^{0}(\rho\|\sigma) = -\log(\Tr(\rho^0 \sigma))\,.
	\end{equation}
	In this specific example, for all input states $\nu \in \DM[\HS_R \otimes \mathbb{C}^{2} \otimes \mathbb{C}^{2}]$, it can be shown that
	\begin{equation}
		D_H^0(\E(\nu)\|\F(\nu)) \leq 2\,.
	\end{equation}
	To see this, we start by observing that the second input system of our channels (previously called $\omega$) can be treated as classical, and hence the maximum is attained at either $\omega = \ketbra{0}{0}$ or $\omega = \ketbra{1}{1}$ (this follows from joint convexity). If we choose the former, the channel output does not depend on the remaining input state and one finds 
	\begin{multline}
		D_H^0(\E(\rho_{RA} \otimes \ketbra{0}{0})\|\F(\rho_{RA} \otimes \ketbra{0}{0})) \\= -\log\Tr \big(\ketbra{0}{0} (\ketbra{+} (1 - \kappa/2) + \ketbra{-}(\kappa/2))\big) = 1\,.
	\end{multline}
	For the second choice of $\omega$ one finds that $D_H^0(\E(\nu)\|\F(\nu)) = 0$ unless $\nu = \rho_R \otimes \ketbra{0}{0} \otimes \ketbra{1}{1}$ and then 
	\begin{multline}
		D_H^0(\E(\ketbra{0}{0} \otimes \ketbra{1}{1})\|\F(\ketbra{0}{0} \otimes \ketbra{1}{1})) \\= - \log\Tr\big(\ketbra{0}{0} (\IdentityMatrix/4 (1 + \kappa))\big) \leq 2\,.
	\end{multline}
	An analogous argument also immediately yields $D_H^0(\E^{\otimes m}(\nu)\|\F^{\otimes m}(\nu)) \leq 2 m$ for any (potentially entangled) $\nu$. Hence, for fixed $m$, the rate 
	\begin{equation}
		\frac{1}{m} D_H^{\alpha_p}(\E^{\otimes m}(\nu)\|\F^{\otimes m}(\nu))
	\end{equation}
	of any parallel strategy can be brought down to 2 by making the type~I error threshold $\alpha_p$ small enough, whereas the mentioned adaptive strategy achieves zero type~I error and a type~II error rate that becomes arbitrarily large as $\kappa \to 0$.

	To actually calculate the performance of a parallel strategy where the input $\rho_1$ is used $m$ times, we can use the second-order asymptotics of hypothesis testing relative entropy to relative entropy \cite{li_second-order_2014} (this only works here because our input state is a product state). Define $\rho = \E(\rho_1)$, $\sigma = \F(\rho_1)$. Then \cite[Thm. 5]{li_second-order_2014} implies
	\begin{multline}\label{eq:example_second_order_lower}
		\frac{1}{m} D^{\alpha_p}_H(\rho^{\otimes m}\|\sigma^{\otimes m}) \\\geq D(\rho\|\sigma)  + \sqrt{V \over m} \Phi^{-1}\qty( \alpha_p - \frac{1}{\sqrt{m}} \frac{C T^3}{\sqrt{V^3}})\,,
	\end{multline}
	\begin{multline}
		\frac{1}{m} D^{\alpha_p}_H(\rho^{\otimes m}\|\sigma^{\otimes m}) \leq D(\rho\|\sigma) \\
		+ \sqrt{V \over m} \Phi^{-1}\qty(\alpha_p + \frac{1}{\sqrt{m}} \qty(\frac{C T^3}{\sqrt{V^3}} + 2) ) \,, \label{eq:example_second_order_upper}
	\end{multline}
	where $\Phi^{-1}$ is the inverse of the cumulative distribution function of the standard normal distribution, $C \leq 0.4784$, and
	\begin{align}
		V &\equiv  V(\rho\|\sigma) \coloneqq \Tr[\rho(\log \rho - \log \sigma - D(\rho\|\sigma))^2],\\
		T^3 &\equiv  T^3(\rho\|\sigma) \\&\coloneqq \sum_{i,j} \lambda_i \mabs{\braket{x_i}{y_j}}^2 \mabs{\log(\lambda_i) - \log(\mu_j) - D(\rho\|\sigma)}^3
	\end{align}
	for spectral decompositions $\rho = \sum_i \lambda_i \ketbra{x_i}$ and $\sigma = \sum_j \mu_j \ketbra{y_j}$.
	
	Note that, instead of comparing the type~II error decay rate of a parallel strategy with $m$ channel uses to the corresponding rate of an adaptive strategy with two channel uses, it might be more intuitive to compare the parallel strategy to a setup where the adaptive strategy with two channel uses is repeated $m/2$ times in parallel (so that the adaptive and parallel strategies have the same number of channel uses, and comparing rates is the same as comparing type~II errors). For this example, since the type~I error of the adaptive strategy was chosen to be zero, this is actually equivalent, as
	\begin{multline}
		\frac{1}{2k} D^{0}_H((\E(\rho_2))^{\otimes k}\|(\F(\sigma_2))^{\otimes k}) \\= \frac{1}{2} D^{0}_H(\E(\rho_2)\|\F(\sigma_2)) = - \frac{1}{2} \log(\delta(\kappa)) .
	\end{multline}

	Our theorem (\autoref{thm:seq_par}) gives an upper bound on the extent to which all finite-length parallel strategies can have worse type~II errors compared to the adaptive strategy, or equivalently how large~$m$ has to be chosen to achieve similar performances. For this example specifically, due to \eqref{eq:example-adaptive-diff-1} and \eqref{eq:example-adaptive-diff-2} we can choose $\ell = 1$ in \autoref{thm:seq_par}, and hence also the parallel input state $\nu$ in \autoref{thm:seq_par} will just be $\nu = \rho_1^{\otimes m}$.
	
	\begin{figure*}[!ht]
		\centering
		\includegraphics[width=\linewidth]{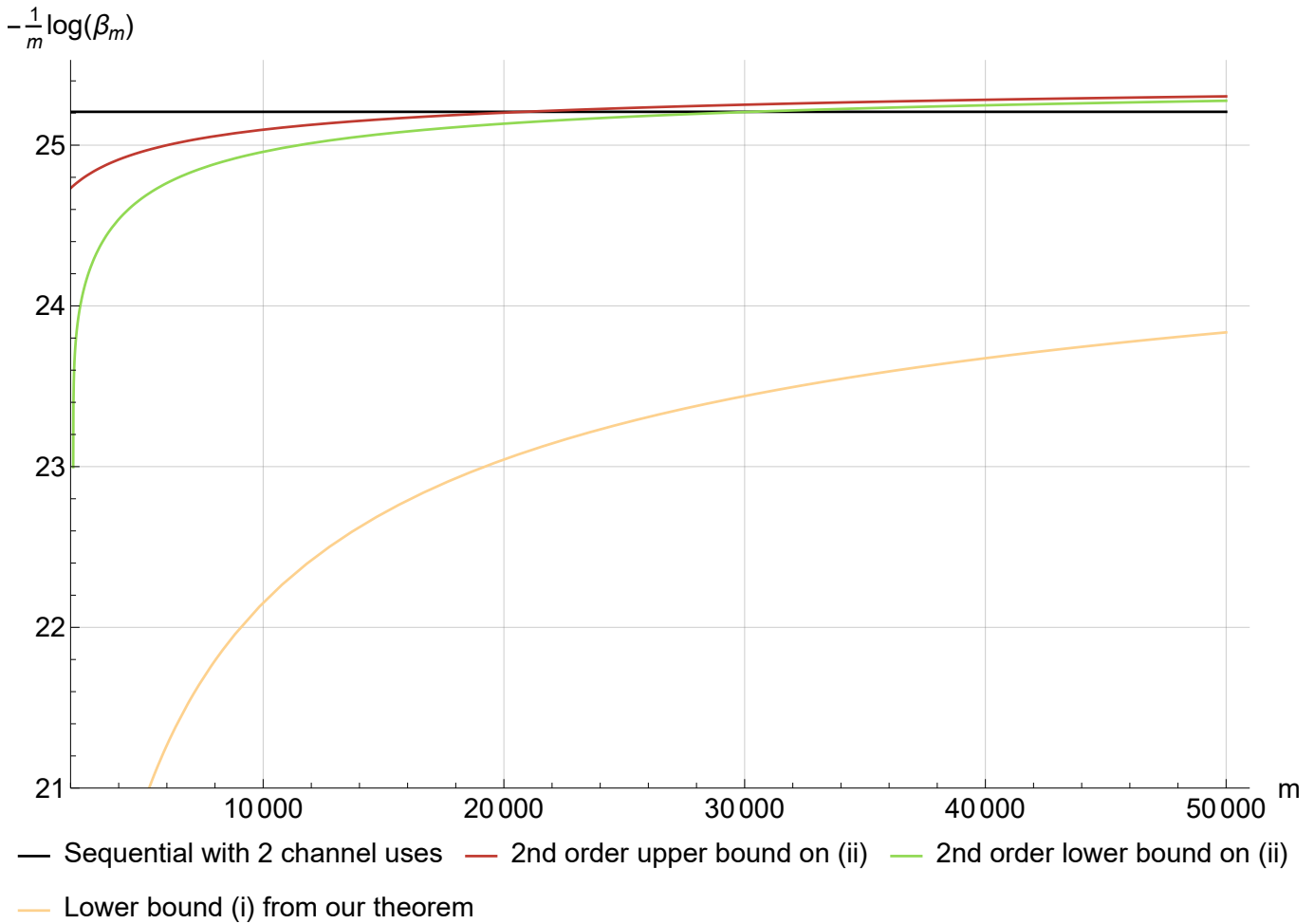}
		\caption{Illustration of the type~II error decay rate per channel use of a simple adaptive and parallel strategy for a specific pair of channels (see \eqref{eq:example-channel-1} and \eqref{eq:example-channel-2} for the definitions of the channels $\E$ and $\F$, respectively, where $\kappa = 2^{-50}$). We compare a fixed adaptive strategy with two channel uses (constant black line) to (i) our lower bound on the performance of a parallel strategy (yellow line) and (ii) the actual performance of a parallel strategy (red and green lines), which are plotted as functions of the number of parallel channel uses $m$.  The black line shows the value of \eqref{eq:example_two_uses}, i.e., the type~II error exponent for the given adaptive strategy with two channel uses and type~I error $\alpha_a = 0$. This can alternatively be thought of as the rate of repeating the two-step adaptive strategy $m/2$ times in parallel. 
			The yellow line shows the lower bound on the parallel strategy from our theorem (i.e., the right-hand side of \eqref{eq:thm_seq_par}), choosing $\alpha_p = 2^{-5}$. For this specific example we can calculate the parallel input state $\nu$ of our theorem, and while we cannot explicitly find the optimal POVM and corresponding type~II error (i.e., we cannot explicitly calculate the left-hand side of \eqref{eq:thm_seq_par}), we can bound it from above and below using the second-order asymptotics of the hypothesis testing relative entropy, which is shown in the red and green lines, corresponding to the values of \eqref{eq:example_second_order_upper} and \eqref{eq:example_second_order_lower}. We see that for small $m$ there is a gap between the adaptive and parallel strategies; i.e., the adaptive strategy offers an advantage. This advantage disappears once $m$ gets larger and in this specific example the chosen adaptive strategy even eventually gets surpassed by the parallel strategy, as the adaptive strategy turns out not to be asymptotically optimal.}
		\label{fig:example_convergence_plot}
	\end{figure*}    
	
	\autoref{fig:example_convergence_plot} depicts our lower bound (from \autoref{thm:seq_par}) on the performance of a parallel strategy for the given adaptive strategy, together with the actual performance of the parallel strategy choosing $\rho_1^{\otimes m}$ as the input state. For the figure we chose $\kappa = 2^{-50}$, $\alpha_a = 0$, and $\alpha_p = 2^{-5}$. The parameters are chosen to make the following features nicely visible simultaneously in one plot: $(i)$ the range of $m$ where the parallel strategy is worse, $(ii)$ the range of $m$ where it surpasses the adaptive strategy, and $(iii)$ our bound. For $c_\ell$ in \autoref{thm:seq_par} we used \eqref{eq:definition_cl} with a numerical optimization over $\gamma_1$. One finds that there is a range of values for $m$ for which the adaptive strategy is better, and the parallel strategy is lower bounded fairly tightly by our bound. As the given adaptive strategy is not asymptotically optimal it is eventually surpassed by the parallel strategy.
	
	\section{Discussion and Outlook}
	
	In this final section, we discuss several pathways through which one could hope to improve or extend our result, together with a discussion of obstacles we encountered on these pathways, and their relation to different open problems in quantum channel discrimination.
	
	First of all, one might hope to be able to remove the factor of $1 - \alpha_a$ appearing in \eqref{eq:thm_seq_par}, perhaps at the cost of an additional error term proportional to $\log(1 - \alpha_a)$. If this additional error term decays in $m$ and $n$ (say, as long as $\alpha_a$ is bounded away from one), this would prove the strong converse property for quantum channel discrimination.
	To see this, suppose we could show something along the lines of the following inequality:
	\begin{multline}
		\frac{1}{m} D_H^{\alpha_p}(\E^{\otimes m}(\nu)\|\F^{\otimes m}(\nu)) \overset{?}{\geq} \frac{1}{n} D_H^{\alpha_a}(\E(\rho_n)\|\F(\sigma_n)) \\- \frac{C_1 n}{\sqrt{m}} \log(8 \over \alpha_p) - \frac{C_2}{n}\log(1 \over 1 - \alpha_a) \, .
	\end{multline}
	Now, fixing any $\alpha_a \in (0,1)$ and taking limits (in this order) $m \to \infty, n \to \infty, \alpha_p \to 0$, 
	we would find that for an arbitrary $\alpha_a \in (0,1)$
	\begin{align}
		D^{\text{reg}}(\E\|\F) &= \lim_{\alpha_p \to 0} \lim_{n \to \infty} D^{\alpha_p}_H(\E^{\otimes n}\|\F^{\otimes n}) \\
		&\overset{?}{\ge} \limsup_{n\to\infty}\frac{1}{n}D_H^{\alpha_a}(\E(\rho_n)\|\F(\sigma_n)),
	\end{align}
	i.e., allowing a finite type~I error $\alpha_a\in(0,1)$ does not improve the best achievable type~II error rate. This is precisely the strong converse property. Establishing the strong converse property for quantum channel discrimination is an important and very interesting open problem, which is still unsolved despite serious efforts. Significant progress has been made in \cite{fawzi_defining_2021}, and another  recent attempt was made in \cite{fang_towards_2022}; see also \cite{berta_gap_2022}. 
	
	One might hope to obtain such a variant of our bound without the factor $1 - \alpha_a$, by, instead of transitioning from hypothesis testing entropy to relative entropy in \eqref{eq:thm_starting_point}, moving to an $\alpha$-\renyi relative entropy instead and subsequently employing the known relations between \renyi relative entropies and smoothed max-relative entropies. It turns out that this will at some point require bounding the difference between \renyi relative entropies of order $1 - \alpha$ and $1 + \alpha$. This is possible using \autoref{lem:renyi_alpha_continuity}, however at the cost of an error term $(c_\gamma(\rho_n\|\sigma_n))^2$, which will scale quadratically in~$n$. As this does not pick up any dependence in $m$, such an approach will not lead to anything useful in the asymptotic limit, and is hence unsuccessful. 
	\medskip
	
	One further interesting question deals with the relation between $\ell$ and $n$ in \autoref{thm:seq_par}. One might hope that an optimal adaptive strategy achieves the maximum possible distinguishability gain per channel use, i.e.
	\begin{equation}\label{eq:discussion_amortized}
		D^A(\E\|\F) = \sup_{\rho, \sigma \in \DM[R \otimes A]} D(\E(\rho)\|\F(\sigma)) - D(\rho\|\sigma),
	\end{equation}
	after some finite number of channel uses, or at least comes very close to it. If that was the case, one could bound $\ell$ by a constant and would remove the dependence on $n$ in \eqref{eq:cor_seq_par}. However, note that the supremum in \eqref{eq:discussion_amortized} goes over arbitrarily large reference systems $R$ and hence does not need to be achieved, and we are not aware of any bounds on the system size $R$ to get close to the optimum value. Alternatively, since $D^A(\E\|\F) = D^{\text{reg}}(\E\|\F)$ also a bound on the speed of convergence of the regularized channel divergence would do, which we are however also not aware of (Ref.~\cite{fawzi_defining_2021} proves such a bound for the sandwiched \renyi divergence, which unfortunately cannot be extended straightforwardly to relative entropy). Hence, for now we have to assume that $\ell$ can in general be $n$, and that to adequately simulate an adaptive strategy with $n$ channel uses, one might need to employ more than $n^2$ parallel channel uses to keep the error term in our bound small. 
	
	While our result becomes quite powerful in the asymmetric asymptotic setting, \autoref{thm:seq_par} can in principle be applied for any combination of type~I and II errors. Hence, it could be interesting to apply it to scenarios where the type~I error decays with $m$, say as $\alpha_p = 2^{-k m}$ for some constant $k$. In this case, one would want to apply \eqref{eq:thm_seq_par}, to have at least a chance of the error term being bounded; however -- depending on $k$ -- the corresponding condition \eqref{eq:thm_seq_par_bound_m} might not always be fulfilled. It would be interesting to see whether this condition on $m$ could perhaps be relaxed, to allow for a wider range of $k$ in this specific scenario. 
	
	As mentioned already in \autoref{remark:relation_to_second_order}, our bounds could be significantly tightened if we were able to employ second-order expansions, which however requires controlling the variance of the relative entropy $V(\rho_n\|\sigma_n)$ in $n$, which we are unable to do. This seems to be again related to the strong converse problem, as second-order expansions for the hypothesis testing relative entropy imply the strong converse property. Already for the parallel case, if one was able to show
	\begin{equation}
		V(\E^{\otimes n}(\nu)\|\F^{\otimes n}(\nu)) = \mathcal{O}(n)\, ,
	\end{equation}
	where $\nu \in \DM[R^{\otimes n} \otimes A^{\otimes n}]$ is the optimal joint input state, this would be a very significant step towards proving the strong converse for quantum channel discrimination. Conversely, examples where this scales faster than linearly in $n$ are quite likely to lend themselves to counterexamples for the strong converse property. We are not aware of any such examples: we leave this question open for further studies. 
	
	Finally, we believe that it should be possible to generalize our result to the infinite-dimensional setting with only minor modifications, which we also leave open to future investigations.
	
	\bigskip 
	
	\section*{Acknowledgements}
	Bjarne Bergh acknowledges support from the UK Engineering and Physical Sciences Research Council (EPSRC) under grant number EP/V52024X/1. Robert Salzmann acknowledges support from the Cambridge Commonwealth, European and International Trust. Mark~M.~Wilde acknowledges support from the National Science Foundation under grant no.~2315398. The authors thank the anonymous referees for various helpful comments and suggestions. For the purpose of open access, the authors have applied a Creative Commons Attribution (CC BY) licence to any Author Accepted Manuscript version arising from this submission.
	
	\bibliographystyle{IEEEtran}
	\bibliography{IEEEabrv, ieee}

	\appendix
	
	\section{Proofs of AEP Bounds}
	
	\subsection{Proof of \autoref{lem:renyi_alpha_continuity} (Petz--\renyi Continuity Bound in \texorpdfstring{$\alpha$}{α})}
	
	\label{app:proof_renyi_alpha_continuity}
	
	\begin{proof}[Proof of \autoref{lem:renyi_alpha_continuity}]
		Note first that for \eqref{eq:renyi_upper_continuity}, where we did not explicitly require $D(\rho\|\sigma) < \infty$, we can restrict to that case, as otherwise also $D_{1 + \delta}(\rho\|\sigma) = \infty$ and the statement is trivial. Hence, we can assume $\sigma$ to be invertible (otherwise restrict to the subspace where $\sigma$ is supported). Define the operator $X = \rho \otimes (\sigma^{-1})^T$, and the canonical purification of $\rho$ on $\HS \otimes \HS$:
		\begin{equation}
			\ket{\phi} = \sum_i \sqrt{\rho}\ket{i} \otimes \ket{i},
		\end{equation}
		where $\ket{i}$ is an orthonormal basis of $\HS$. Then, for all $\delta \in [-1, 1]$:
		\begin{equation}
			D_{1 + \delta}(\rho\|\sigma) = \frac{1}{\delta}\log(\bkbraket{\phi|X^\delta|\phi})
		\end{equation}
		and 
		\begin{equation}
			D(\rho\|\sigma) = \bkbraket{\phi|\log(X)|\phi}\,.
		\end{equation}
		For all $t > 0$ and $\delta \in \mathbb{R}$, the first term in the expansion of $t^\delta$ around $\delta = 0$ is $\delta \ln(t)$. Hence let us write $t^\delta = 1 + \delta \ln(t) + r_\delta(t)$ where $r_\delta(t)\coloneqq t^\delta - \delta \ln(t) - 1$.
		Since $1 + x \leq e^x $ for all $x \in \reals$ we see that
		\begin{align}
			r_\delta(t) &= t^\delta - \delta \ln(t) - 1 \\
			& \leq t^\delta + e^{-\delta \ln(t)} - 2 \\
			&= e^{\delta \ln(t)} + e^{- \delta \ln(t)} - 2 \\
			&= 2(\cosh(\delta \ln(t)) - 1) \eqqcolon s_\delta(t).
		\end{align}
		It is easy to see that $s_{-\delta}(t) = s_{\delta}(t)$ and $s_\delta(t) = s_{\gamma \delta}(t^{1/\gamma})$ for all $\gamma \in \reals$. Also, it is easy to verify that $s_\delta(t)$ is monotonically increasing in $t$ for $t \geq 1$ and concave in $t$ if $\delta \leq \frac{1}{2}$ and $t \geq 3$. For all $t \geq 0$, either $t$ or $1/t$ will be larger than or equal to one, and so we can always use the monotonicity to write
		\begin{equation}
			s_{\delta}(t) = s_{\delta}\qty(\frac{1}{t}) \leq s_{\delta}\qty(t + \frac{1}{t})\,.
		\end{equation}
		Using this, we get for all $t \geq 0$ and $\gamma \in (0,1]$
		\begin{align}
			s_{\delta}(t) &= s_{\delta/\gamma}(t^{\gamma}) \leq s_{\delta/\gamma}\qty(t^\gamma + t^{-\gamma})\\
			&\leq s_{\delta/\gamma}\qty(t^\gamma + t^{-\gamma} + 1)\,.
		\end{align}
		It is easy to see that the real function $x + 1/x$ has a global minimum for $x > 0$ at $x = 1$ and hence the argument at the right hand side is guaranteed to be larger than $3$. Hence we can use the concavity of $s_{\delta/\gamma}$ to write
		\begin{align}
			\bkbraket{\phi|s_\delta(X)|\phi} &\leq \bkbraket{\phi|s_{\delta/\gamma}(X^\gamma + X^{-\gamma} + \IdentityMatrix)|\phi} \\
			&\leq s_{\delta/\gamma}(\bkbraket{\phi|X^\gamma + X^{-\gamma} + \IdentityMatrix|\phi}) \\
			&= s_{\delta/\gamma}(2^{\gamma c_\gamma(\rho\|\sigma)}),
		\end{align}
		where the second inequality is a well-known property of concave functions (Jensen's inequality) and $c_\gamma$ is defined in~\eqref{eq:c_gamma_states}. Finally, we use Taylor's theorem with the Lagrange form of the remainder to bound
		\begin{align}
			s_\delta(t) &= s_0(t) + \dv{\delta}s_\delta(t) \big|_{\delta = 0} \delta + \frac{1}{2}\dv[2]{\delta} s_\delta(t) \big|_{\delta = \xi} \delta^2 \\
			&= \delta^2 (\ln(t))^2 \cosh(\xi \ln(t)) \\
			&\leq \delta^2 (\ln(t))^2 \cosh(\delta \ln(t)),
		\end{align}
		for all $t \geq 0$ and some $\xi\in(0,\delta)$, where we have used that $s_0(t) = \dv{\delta}s_\delta(t)|_{\delta=0}=0$. Hence,
		\begin{multline}
			\bkbraket{\phi|s_\delta(X)|\phi} \leq s_{\delta/\gamma}(2^{\gamma c_\gamma(\rho\|\sigma)}) \\ \leq (\delta \ln(2) c_\gamma(\rho\|\sigma))^2 \cosh(\delta \ln(2) c_\gamma(\rho\|\sigma))\,.
		\end{multline}
		To finally prove our claims, let us start with the case where $\delta > 0$. Then, 
		\begin{align}
			&D_{1 + \delta}(\rho\|\sigma) = \frac{1}{\delta}\log(\bkbraket{\phi|X^\delta|\phi}) \\
			&= \frac{1}{\delta}\log(1 + \delta \ln(2)\bkbraket{\phi|\log(X)|\phi} + \bkbraket{\phi|r_\delta(X)|\phi}) \\
			&= \frac{1}{\delta}\log(1 + \delta \ln(2)D(\rho\|\sigma) + \bkbraket{\phi|r_\delta(X)|\phi}) \\
			&\leq  \frac{1}{\delta}\log(1 + \delta \ln(2)D(\rho\|\sigma) + \bkbraket{\phi|s_\delta(X)|\phi})\\
			&=\frac{1}{\delta}\log(1 + \delta \ln(2)D(\rho\|\sigma)) \notag \\& \qquad \qquad+ \frac{1}{\delta}\log(1 + \frac{\bkbraket{\phi|s_\delta(X)|\phi}}{1 + \delta \ln(2)D(\rho\|\sigma)}) \\
			&\leq D(\rho\|\sigma) + \frac{1}{\delta}\log(1 + \bkbraket{\phi|s_\delta(X)|\phi}) \\
			& \leq D(\rho\|\sigma) \notag \\
			& \quad + \frac{1}{\delta}\log(1 + (\delta \ln(2) c_\gamma(\rho\|\sigma))^2 \cosh(\delta \ln(2) c_\gamma(\rho\|\sigma))) \\
			& \leq D(\rho\|\sigma) + \delta \ln(2) (c_\gamma(\rho\|\sigma))^2,
		\end{align}
		where the third-to-last inequality uses $\log(1 + x) \leq \frac{x}{\ln(2)}$ and (for the second term) the fact that $\delta$ and $D(\rho\|\sigma)$ are non-negative. The final inequality follows from the fact that 
		$k^2 - \ln(1 + k^2 \cosh(k))$ is monotonically increasing in $k$ and hence positive.
		
		If $\delta < 0$, the issue arises that $1 + \delta \ln(2)D(\rho\|\sigma)$ no longer has to be greater than $1$ (it might in fact be negative) and so we have to apply a slightly different argument. Starting analogously, we get
		\begin{align}
			&D_{1 + \delta}(\rho\|\sigma) \notag\\
			& = \frac{1}{\delta}\log(1 + \delta \ln(2)D(\rho\|\sigma) + \bkbraket{\phi|r_\delta(X)|\phi}) \\
			&\geq  \frac{1}{\delta}\log(1 + \delta \ln(2)D(\rho\|\sigma) + \bkbraket{\phi|s_\delta(X)|\phi})\\
			&\geq D(\rho\|\sigma) + \frac{1}{\delta \ln(2)}\bkbraket{\phi|s_\delta(X)|\phi} \\
			& \geq D(\rho\|\sigma)  + \delta \ln(2) (c_\gamma(\rho\|\sigma))^2 \cosh(\delta \ln(2) c_\gamma(\rho\|\sigma)),
		\end{align}
		where we used $\log(1 + x) \leq \frac{x}{\ln(2)}$ in the second inequality. Finally, if we assume that $\abs{\delta} \leq \frac{\log 3}{2c_\gamma(\rho\|\sigma)}$, then $\ln(2) \cosh(\ln(3)/2) < 1$ and so we get
		\begin{equation}
			D_{1 + \delta}(\rho\|\sigma)  \geq D(\rho\|\sigma)  + \delta (c_\gamma(\rho\|\sigma))^2\,.
		\end{equation}
		Note that since $c_\gamma(\rho\|\sigma)\geq \frac{\log(3)}{\gamma}$ the condition $\abs{\delta} \leq \gamma/2$ is automatically fulfilled.  
	\end{proof}

	\subsection{Proof of \autoref{lem:aep_finite_n} (AEP Convergence Bound)}\label{app:proof_ape_finite_n}
	
	\begin{proof}[Proof of \autoref{lem:aep_finite_n}]
		We start with the proof of \eqref{eq:aep_finite_n_lower_bound}. We use \cite[Prop. 4]{wang_resource_states_2019}, which states that for all $\alpha \in [0, 1)$ and all $\varepsilon \in [0,1)$
		\begin{equation}
			D^\varepsilon_{\max}(\rho\|\sigma) \geq D_\alpha(\rho\|\sigma) + \frac{2}{\alpha - 1}\log(1 \over 1 - \varepsilon)\,.
		\end{equation}
		Note that the statement in \cite{wang_resource_states_2019} is given for smoothing in trace-distance, but since the trace distance is always less than the sine distance, for fixed $\varepsilon$ the trace-distance-smoothed max-divergence is smoothed over a larger ball, and hence also always smaller, which is why the statement for the purified-distance-smoothed max-divergence is implied. 
		We apply this to $\rho^{\otimes n}$ and $\sigma^{\otimes n}$ and use the additivity of the Petz-\renyi relative entropy to get:
		\begin{equation}
			\frac{1}{n} D^\varepsilon_{\max}(\rho^{\otimes n}\|\sigma^{\otimes n}) \geq D_\alpha(\rho\|\sigma) + \frac{1}{n} \frac{2}{\alpha - 1}\log(1 \over 1 - \varepsilon).
		\end{equation}
		Now, we combine this with \eqref{eq:renyi_lower_continuity} of \autoref{lem:renyi_alpha_continuity} to get
		\begin{multline}
			\frac{1}{n} D^\varepsilon_{\max}(\rho^{\otimes n}\|\sigma^{\otimes n}) \geq D(\rho\|\sigma) \\+ \ln(2) \cosh(\log(3)\over 2) c_\gamma(\rho\|\sigma)^2 (\alpha - 1) \\+ \frac{1}{n} \frac{2}{\alpha - 1}\log(1 \over 1 - \varepsilon),
		\end{multline}
		together with the condition $0 \leq 1 - \alpha \leq \frac{\log 3}{2c_\gamma(\rho\|\sigma)}$.
		We are now free to choose $\alpha$ to optimize the right-hand side. It is easy to see that the right-hand side will be maximal if both terms are equal, which is achieved for
		\begin{multline}
			1 - \alpha = \sqrt{\frac{2  \log(1 \over 1 - \varepsilon)}{n \ln(2) \cosh(\log(3)\over 2) c_\gamma(\rho\|\sigma)^2}} \\= \frac{4}{K_1 c_\gamma(\rho\|\sigma)} \sqrt{\frac{\log(1 \over 1 - \varepsilon)}{n}}.
		\end{multline}
		Hence we get
		\begin{equation}
			\frac{1}{n} D^\varepsilon_{\max}(\rho^{\otimes n}\|\sigma^{\otimes n}) \geq D(\rho\|\sigma) - \frac{K_1}{\sqrt{n}} c_\gamma(\rho\|\sigma) \sqrt{\log(1\over 1 - \varepsilon)},
		\end{equation}
		together with the condition 
		\begin{equation}
			n \geq \log(1\over 1 - \varepsilon) \qty(4 \over K_1 \log(3))^2\,.
		\end{equation}
		
		In order to get an expression without a condition on $n$ we have to choose a different $\alpha$. Choosing 
		\begin{equation}
			1 - \alpha = \frac{\log(3)}{2 c_{\gamma}(\rho\|\sigma) \sqrt{n}} 
		\end{equation}
		satisfies the condition on $1 - \alpha$ for all $n$ and gives:
		\begin{multline}
			\frac{1}{n} D^\varepsilon_{\max}(\rho^{\otimes n} \|\sigma^{\otimes n}) \geq D(\rho\|\sigma) \\- \frac{c_\gamma(\rho\|\sigma)}{\sqrt{n}}\left[\frac{\ln(2) \log(3)}{2} \cosh(\log(3) \over 2) \right. \\+ \left. \frac{4}{\log(3)} \log(1 \over 1 - \varepsilon)\right]\,.
		\end{multline}
		
		The simplification \eqref{eq:aep_finite_n_lower_bound_reduced} follows from $\log(3) > 1$ and 
		\begin{equation}
			\frac{\ln(2) \log(3)}{2} \cosh(\log(3) \over 2) < 1\,.
		\end{equation}
		
		The second equation \eqref{eq:aep_finite_n_upper_bound} is proved very analogously. We start again with a relation to a \renyi relative entropy for $\alpha>1$ \cite[Prop. 4.61]{khatri_principles_2020}:
		\begin{equation}
			D^\varepsilon_{\max}(\rho\|\sigma) \leq D_\alpha(\rho\|\sigma) + \frac{2}{\alpha - 1}\log(1 \over \varepsilon) + \log(1 \over 1 - \varepsilon^2)\,.
		\end{equation}
		Note that \cite[Prop. 4.61]{khatri_principles_2020} uses the sandwiched \renyi divergence, which however is always less than the Petz, and hence the above statement is implied. We again apply this to $n$ tensor powers of $\rho$ and $\sigma$ and combine it with~\eqref{eq:renyi_upper_continuity} from \autoref{lem:renyi_alpha_continuity} to obtain:
		\begin{multline}
			\frac{1}{n} D^\varepsilon_{\max}(\rho^{\otimes n}\|\sigma^{\otimes n}) \leq D(\rho\|\sigma) + \ln(2) c_\gamma(\rho\|\sigma)^2 (\alpha - 1) \\+ \frac{2}{n(\alpha - 1)}\log(1 \over \varepsilon) + \frac{1}{n} \log(1 \over 1 - \varepsilon^2),
		\end{multline}
		together with the condition 
		\begin{equation}
			\alpha - 1 \leq \frac{\gamma}{2}\,.
		\end{equation}
		The right hand side is minimized again for
		\begin{equation}
			\alpha - 1 =  \sqrt{\frac{2 \ln(2) \log(1 \over \varepsilon)}{n c_\gamma(\rho\|\sigma)^2}} = \frac{4}{K_2 c_\gamma(\rho\|\sigma)} \sqrt{\frac{\log(1 \over \varepsilon)}{n}},
		\end{equation}
		which gives
		\begin{multline}
			\frac{1}{n} D^\varepsilon_{\max}(\rho^{\otimes n}\|\sigma^{\otimes n}) \leq D(\rho\|\sigma) \\+\frac{K_2}{\sqrt{n}} c_\gamma(\rho\|\sigma) \sqrt{\log(1 \over \varepsilon)} + \frac{1}{n} \log(1 \over 1 - \varepsilon^2),
		\end{multline}
		as well as the condition
		\begin{equation}
			n \geq  \log(1\over \varepsilon)\qty(\frac{8}{\gamma c_\gamma(\rho\|\sigma) K_2})^2\,.
		\end{equation}
		Alternatively, choosing
		\begin{equation}
			\alpha - 1 = \frac{\log(3)}{2 c_\gamma(\rho\|\sigma) \sqrt{n}}
		\end{equation}
		will again give \eqref{eq:aep_finite_n_upper_bound_full}, and the simplified form \eqref{eq:aep_finite_n_upper_bound_reduced} follows by the same argument that $\log(3) > 1$ and 
		\begin{equation}
			\frac{\ln(2) \log(3)}{2} < 1\,.
		\end{equation}
		This concludes the proof.
	\end{proof}	
	
	\begin{IEEEbiographynophoto}{Bjarne Bergh} received a bachelors and masters degree in Physics at the University of Heidelberg, Germany in 2018 and 2020 respectively, and a masters degree in Applied Mathematics at the University of Cambridge, United Kingdom in 2019. He is currently a PhD student at the University of Cambridge, United Kingdom, working on problems in Quantum Information Theory.
	\end{IEEEbiographynophoto}
	
	\begin{IEEEbiographynophoto}{Nilanjana Datta}
		received the Ph.D. degree in mathematical physics from ETH
		Zürich, Switzerland, in 1996. From 1997 to 2000, she was a Post-Doctoral
		Researcher with the Dublin Institute of Advanced Studies, C.N.R.S. Marseille,
		and EPFL, Lausanne. In 2001, she joined the University of Cambridge, U.K.,
		as a Lecturer in mathematics with the Pembroke College, and as a member of
		the Statistical Laboratory, Centre for Mathematical Sciences. She is currently
		a Professor of quantum information theory with the Department of Applied
		Mathematics and Theoretical Physics, University of Cambridge, and a fellow
		of the Pembroke College. Her research interests include quantum information
		theory and mathematical physics.
	\end{IEEEbiographynophoto}
	
	\begin{IEEEbiographynophoto}{Mark M. Wilde}
		received the Ph.D. degree in electrical engineering from the
		University of Southern California, Los Angeles, California. He is an Associate
		Professor of Electrical and Computer Engineering at Cornell University. He
		is an IEEE Fellow, he is a recipient of the National Science Foundation
		Career Development Award, he is a co-recipient of the 2018 AHP-Birkhauser
		Prize, awarded to “the most remarkable contribution” published in the journal
		Annales Henri Poincare, and he is an Outstanding Referee of the American
		Physical Society. His current research interests are in quantum Shannon theory,
		quantum computation, quantum optical communication, quantum computational complexity theory, and quantum error correction.
	\end{IEEEbiographynophoto}
	
	\begin{IEEEbiographynophoto}{Robert Salzmann}
		received the Ph.D. degree in mathematics from the University of Cambridge, United Kingdom in 2023. He is currently a post-doctoral researcher at the École Normale Supérieure de Lyon working on problems in quantum information theory and mathematical physics.
	\end{IEEEbiographynophoto}
	
\end{document}